\NewDocumentCommand{\xnewtheorem}{m o m}
 {%
  \IfNoValueTF{#2}
   {\newtheorem{#1}{#3}}
   {%
    \newaliascnt{#1}{#2}%
    \newtheorem{#1}[#1]{#3}%
    \aliascntresetthe{#1}%
    \expandafter\newcommand\csname #1autorefname\endcsname{#3}%
   }%
 }
\numberwithin{equation}{section}
\def \Lp {\mathrm L}  
\def\Ln {\mathcal L}  
\def \R {\mathbb R}
\newcommand*{\blank}{{\makebox[1ex]{\textbf{$\cdot$}}}}
\theoremstyle{remark}
\newtheorem*{remark}{Remark}
\title[On the stability of the area law for the e.e. of the Landau Hamiltonian]{On the stability of the area law for the entanglement entropy of the Landau Hamiltonian}
\author{Paul Pfeiffer}
\address{Fakultät für Mathematik und Informatik, FernUniversität in Hagen, Universitätsstraße 1, 58097, Hagen, Germany}
\email{paul.pfeiffer@fernuni-hagen.de}
\date{10th May 2020}
\begin{document}

\begin{abstract}

We consider the two-dimensional ideal Fermi gas subject to a magnetic field which is perpendicular to the Euclidean plane $\mathbb R^2$ and whose strength $B(x)$ at $x\in\mathbb R^2$ converges to some $B_0>0$ as $\|x\|\to\infty$. Furthermore, we allow for an electric potential $V_\varepsilon$ which vanishes at infinity. They define the single-particle Landau Hamiltonian of our Fermi gas (up to gauge fixing). Starting from the ground state of this Fermi gas with chemical potential $\mu\ge B_0$ we study the asymptotic growth of its bipartite entanglement entropy associated to $L\Lambda$ as $L\to\infty$ for some fixed bounded region $\Lambda\subset\mathbb R^2$. We show that its leading order in $L$ does not depend on the perturbations $B_\varepsilon := B_0 - B$ and $V_\varepsilon$ if they satisfy some mild decay assumptions. Our result holds for all $\alpha$-R\' enyi entropies $\alpha>1/3$; for $\alpha\le 1/3$, we 
have to assume in addition some differentiability of the perturbations $B_\varepsilon$ and $V_\varepsilon$. The case of a constant magnetic field $B_\varepsilon = 0$ and with $V_\varepsilon= 0$ was treated recently for general $\mu$ by Leschke, Sobolev and Spitzer. Our result thus proves the stability of that area law under the same regularity assumptions on the boundary $\partial \Lambda$.
\end{abstract}
\maketitle

\tableofcontents

\section{Introduction}

Bipartite entanglement entropy is an important quantity that measures correlations of particles inside a given region with the particles outside that region. These non-trivial correlations are solely due to the Fermi--Dirac statistics of the particles involved. In recent years there has been considerable interest and progress in quantifying these correlations. Mathematicians and physicists alike realized fascinating connections between the large scale asymptotics of entanglement entropy and certain semi-classical asymptotic formulas of traces of certain operators, mostly Toeplitz operators in the discrete case and Wiener--Hopf operators in the continuous case. 

In the discrete setting, Jin and Korepin related the Fisher--Hartwig conjecture of Toeplitz matrices to the scaling of the entanglement entropy in the $XY$-chain in a transverse magnetic field in \cite{Jin_2004}. More relevant to our continuous setting here is the discovery of Gioev and Klich \cite{Gioev_2006} that a conjecture by Harold Widom (proved by Alexander V. Sobolev \cite{sobolev2013pseudo}) gives the precise leading asymptotic growth of the bipartite entanglement entropy in ground states of the free Fermi gas. It displays a logarithmically enhanced area law of the order $L^{d-1}\ln(L)$, where $L$ is a scaling parameter, see below. In \cite{Leschke_2014}, this was finally proved by Leschke, Sobolev and Spitzer. In  \cite{muller2021stability}, \cite{muller2020stability},  M\"uller and Schulte proved that this law is stable under a perturbation by a compactly supported potential. The line of proof in their first paper  is also important for our model here. 

A ground state of a non-interacting fermions on $\R^2$ with single-particle Hamiltonian $H$ as in our model is given by the (Fermi) spectral projection $1_{\le\mu}(H)$, where $\mu\in\mathbb R$. The function $1_{\le \mu}$ is the indicator function of the set $(-\infty,\mu]\subset\R$ and the number $\mu$ is called the Fermi energy. Let $\alpha>0$ and let $h_\alpha$ be the R\' enyi  entropy function, see \eqref{h_a}. For a given bounded region $\Lambda\subset \R^2$ we denote by $1_\Lambda$ the (multiplication operator associated to the) indicator function on $\Lambda$. Then we define the local entropy (or entanglement entropy) $S_\alpha(\Lambda)$ to be the (usual Hilbert space) trace of $h_\alpha$ applied to the spatially to $\Lambda$ reduced Fermi projection, that is,
\begin{equation}
 S_\alpha(\Lambda) :=\operatorname{tr} h_\alpha(1_\Lambda 1_{\le \mu}(H) 1_\Lambda)\,. \label{intro ee2}
\end{equation}
At positive temperature a definition of entanglement entropy or mutual information needs to be amended, see \cite{Leschke_2016}.

For a fixed region $\Lambda$, it is generally hard or impossible to calculate the entropy. However, if we introduce a scaling parameter $L > 0$ and consider the leading order asymptotic expansion of \eqref{intro ee2}  with $\Lambda$ replaced by $L \Lambda$ for $L \to \infty$, there are interesting results. They all assume some kind of regularity of the boundary $\partial \Lambda$, assume the Hamiltonian $H$ to be of a certain form, and may restrict to the case $\alpha=1$. For $H = -\nabla^2 + V$, with some assumptions on $V$, there are results presented in \cite{Elgart2017,Leschke_2014,Leschke_2016,muller2020stability,Muller2020,zbMATH06865058,Pfirsch2018}.

In this paper, we consider the Hamiltonian $H = (-i\nabla -A)^2 + V_\varepsilon$, which is a slight perturbation of the Landau Hamiltonian $H_0$ for a constant magnetic field and no electric field, see \eqref{def H0} and \eqref{def H}. Entanglement entropy of the ground state of the latter Landau Hamiltonian (for the ground state with chemical potential $\mu = B_0$) has been studied in \cite{Loch_2005,Rodr_guez_2010,Rodr_guez_2009} with some additional assumptions on the region $\Lambda$. The case of $\mu=B_0$ has been solved by Charles and Estienne in \cite{Charles2020}, and then the case of an arbitrary $\mu\ge B_0$ by Leschke, Sobolev and Spitzer in \cite{Leschke2021}, both  under some regularity assumptions on the boundary $\partial \Lambda$. Our main result is \autoref{maincor}. It shows that the leading order asymptotic growth of the entanglement entropy for arbitrary $\alpha>0$ does not change, if we add such a slight perturbation in both the magnetic field and the electric potential, assuming some differentiability of these perturbations in the case $\alpha \le \frac 1 3$, depending on $\alpha$. Hence, we will not need to recalculate the value of the leading term, as we only estimate that this perturbation leads to an error term of smaller order in the scaling parameter $L$.

Our proof is based on a statement by Aleksandrov and Peller in \cite{Peller2010}, which is \autoref{peller} in this paper. With the help of this and approximations of the R\' enyi entropy functions $h_\alpha$ (see \eqref{h_a}), we can reduce our result to some $p$-Schatten (quasi-)norm estimates, as we prove in \autoref{main result section}.

Proving these $p$-Schatten (quasi-)norm estimates relies on the fact that some Sobolev embeddings on bounded subset of $\mathbb R^d$ are in some $p$-Schatten classes, which we specify and prove in \autoref{Sob emb Schatten2}. It is based on a result by Gramsch in \cite{Gramsch1968}. This allows us to estimate the $p$-Schatten (quasi-)norms of operators with sufficiently differentiable kernels. To get a representation of the kernel of the spectral projection of the perturbed Hamiltonian, we use the contour integral representation and the resolvent expansion. This has recently been done for perturbations of the free case ($H=-\nabla^2+V$) by  Müller and Schulte in \cite{muller2020stability}, which inspired me to try this approach. In our case ($B_0 >0$), we use an expanded resolvent expansion. The discrete spectrum allows us to explicitly resolve the contour integral for most terms. The general idea is explained in \autoref{Ansatz}, while the required kernel estimates are proven in the remaining sections.

The magnetic case (with an asymptotically constant magnetic field) appears simpler and more stable than the free case with the (negative) Laplacian as its single-particle Hamiltonian. From a technical point of view this is due to the gaps in the purely essential spectrum and the exponential decay of eigenfunctions of the Landau Hamiltonian. This is also the reason for an area law growth (without any logarithmic enhancement as in the free case), see also \cite{PasturSlavin14}.

\section*{Acknowledgment}
I would like to thank Wolfgang Spitzer for introducing me to this topic, proof reading multiple previous versions, and generally providing advice.

\section{Notations and preliminaries}\label{notations}
Let $\mathbb N=\{0,1,2, \dots\}$ be the natural numbers and $\mathbb Z^+$ be the positive integers.

Let $n,d$ be positive integers and $k$ be a natural number. For $x \in \mathbb R^n$ or $x \in \mathbb C^n$, let $\lVert x \rVert$ be its $2$-norm. 
The space of $p$-integrable (respectively essentially bounded if $p=\infty$), complex valued functions on $\mathbb R^n$ is called $\Lp^p(\mathbb R^n)$. The Sobolev space $W^{k,p}(\mathbb R^n)$ is the subspace of $\Lp^p(\mathbb R^n)$, such that their first $k$ distributional derivatives in any combination of directions are in $\Lp^p(\mathbb R^n)$. We define $C^k_b(\mathbb R^n,\mathbb C^d)$ as the subspace of $C^k(\mathbb R^n,\mathbb C^d)$, such that all derivatives of order $0 \le j \le n$ are bounded.

For any non-empty set $\Lambda \subset \mathbb R^n$ and any point $x \in \mathbb R^n$, we define the distance as
\begin{align}
\operatorname{dist}(x,\Lambda) \coloneqq \inf _{y \in \Lambda } \lVert x-y \rVert ,
\end{align}
and for any $r>0$ we define the $r$-neighbourhood of $\Lambda$ as
\begin{align}
D_r(\Lambda) \coloneqq \{ y \in \mathbb R^n \vert \operatorname{dist} (y,\Lambda)<r \} .
\end{align}
Furthermore, $1_\Lambda \colon \mathbb R^n \to \{0,1\} \subset \mathbb R$ is the indicator function of $\Lambda$, $\Lambda^\complement \coloneqq \mathbb R^n \setminus \Lambda$ is the complement of $\Lambda$, and
 if $\Lambda$ is measurable, let $\lvert \Lambda \rvert$ be its $n$-dimensional Lebesgue measure. If $\Lambda$ has Lipschitz-boundary $\partial\Lambda$, let $\lvert \partial \Lambda \rvert$ be the $(n-1)$-dimensional Hausdorff measure of $\partial \Lambda$. 

For any $x \in \mathbb R^n$, we define the disk $D_r(x)=D_r(\{x\})$. For any $x \in \mathbb C^n, j \in \mathbb N$, we inductively define $x^{\otimes j} \in  \left(\mathbb C^n\right)^{\otimes j} \cong \mathbb C^{n^j}$ by setting $x^{\otimes 0}\coloneqq 1 \in \mathbb C  \eqqcolon \left( \mathbb C^n\right) ^{\otimes 0} $ and $x^{\otimes (j+1)} \coloneqq x^{\otimes j} \otimes x \in \left( \mathbb C^n \right)^{\otimes j} \otimes \mathbb C^n \eqqcolon \left( \mathbb C^n \right)^{\otimes(j+1)} \cong  \mathbb C^{n^{j+1}}$. Every appearance of $\blank^{\otimes j} $  refers to this tensor product.

By $J$ we denote the matrix
\begin{align}
J \coloneqq \begin{pmatrix} 0 &1 \\
-1 &0
\end{pmatrix}. \label{def J}
\end{align}

For a complex number $\zeta$, let $\Re \zeta$ be its real part. 

For a multiplication operator with a function $G \colon \mathbb R^2 \to \mathbb C^n$, we use a slight abuse of notation and call it $G$ as well. This is relevant to decide, whether we are applying an operator to the underlying function or taking the composition of a multiplication operator and any other operator. Whenever there are both multiplication operators and other operators present in an expression, we regard $G$ as the multiplication operator, unless we write $G({\cdot})$. 

$C$ will always refer to a generic constant, that may depend on some, but never on all variables. $F$ will be used similarly, but the dependency on one complex variable will be important, which is why we write $F$ as a function of that variable. Both may change from line to line.

For any compact operator $S$ and any $p \in \mathbb R^+$, we define the $p$-Schatten von Neumann (quasi-)norm by the expression
\begin{align}
\lVert S \rVert_p^p \coloneqq \sum_{n \in \mathbb Z^+} s_n(S)^p,
\end{align}
where $(s_n(S))_{n \in \mathbb Z^+}$ is the decreasing sequence of singular values of $S$ counted with multiplicity. The operator norm of $S$ is written as $\lVert S \rVert_\infty$. We say an operator is in the $p$-Schatten class, if its $p$-Schatten norm is finite. For any pair of Hilbert spaces $\mathcal H_1,\mathcal H_2$, let $S_p(\mathcal H_1, \mathcal H_2)$ be the (quasi-)normed space of all $p$-Schatten class operators from $\mathcal H_1$ to $\mathcal H_2$. 

We recall some properties of the $p$-Schatten von Neumann (quasi-) norms. In the following, we will refer to them as $p$-Schatten norms. 
\begin{proposition} \label{pnorm facts}
Let $0 <p \le q\le \infty$ and let $S,T$ be operators on a Hilbert space.
The $p$-Schatten norm satisfies the properties
\begin{description}
\item[Monotonicity I] $\lVert S \rVert_p \ge \lVert S \rVert_q$,
\item[Monotonicity II] If $S \ge T \ge 0$, then $\lVert S \rVert_p \ge \lVert T \rVert_p$,
\item[Triangle inequality] If $p \ge 1$, then $\lVert S+T \rVert_p \le \lVert S \rVert_p + \lVert T \rVert_p$,
\item[$p$-triangle inequality]  If $p \le 1$, then $\lVert S+T \rVert_p^p \le \lVert S \rVert_p^p + \lVert T \rVert_p^p$,
\item[Powers] If $S \ge 0$, then $\lVert S^p \rVert_q^q = \lVert S^q \rVert_p^p$,
\item[Square] $\lVert S \rVert_p ^2= \lVert S^*S \rVert _{p/2}$, where $S^*$ denotes the adjoint of $S$,
\item[Adjoint] $\lVert S^*\rVert_p = \lVert S \rVert _p$.
\item[Hölder I] Let $\frac 1 r= \frac 1 p + \frac 1 q $. Then $\lVert ST \rVert_r \le \lVert S \rVert _p \lVert T \rVert_q$.
\item[Hölder II] Let $\frac 1 r= \frac \alpha p + \frac {1-\alpha} q $ with $0 < \alpha <1$. Then $\lVert S \rVert_r \le \lVert S \rVert _p^\alpha \lVert S \rVert_q^{1-\alpha}$.
\item[Hilbert--Schmidt kernel] If $T \colon \Lp^2(\mathbb R^{d_1}) \to \Lp^2 (\mathbb R^{d_2})$ has  an integral kernel $t$, which is square integrable, then $\lVert T \rVert_2 = \lVert t \rVert_{\Lp^2(\mathbb R^{d_1+d_2})}$.
\item[Orthogonality] If $S T^*=0$ or $S^*T=0$, then $ \lVert S \rVert_p \le \lVert S+T \rVert_p $.
\end{description}
\end{proposition}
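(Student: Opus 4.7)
The plan is to treat the proposition as a collection of classical properties of Schatten ideals and verify each by reducing to the corresponding statement about the singular value sequence $(s_n(S))_{n\ge 1}$, referring to the standard monographs on trace ideals (Simon, Gohberg--Krein) for full details. Three tools underpin almost everything: the min-max characterization of singular values, Weyl's monotonicity theorem for compact positive operators, and the Ky Fan dominance inequality $\sum_{k=1}^n s_k(S+T)\le\sum_{k=1}^n s_k(S)+\sum_{k=1}^n s_k(T)$.

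I would first dispatch the purely sequence-level items. The identities $s_n(S^*)=s_n(S)$, $s_n(S^p)=s_n(S)^p$ for $S\ge 0$, and $s_n(S^*S)=s_n(S)^2$ yield \emph{Adjoint}, \emph{Powers}, and \emph{Square} directly from the definition $\lVert S\rVert_p^p=\sum s_n(S)^p$. The \emph{Hilbert--Schmidt kernel} identity is the standard relation $\lVert T\rVert_2^2=\operatorname{tr}(T^*T)=\iint|t(x,y)|^2\,dx\,dy$ for an integral operator with square-integrable kernel. \emph{Monotonicity I} is the embedding $\ell^p\hookrightarrow\ell^q$ for $p\le q$ applied to $(s_n(S))$, and \emph{Monotonicity II} is the statement that $S\ge T\ge 0$ implies $s_n(S)\ge s_n(T)$ for all $n$, which is Weyl's monotonicity principle.

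For the \emph{Triangle inequality} in the range $p\ge 1$, the proof proceeds via Ky Fan dominance combined with a Hardy--Littlewood--P\'olya rearrangement argument, yielding the Minkowski inequality in Schatten norms. \emph{Hölder I} follows from the multiplicative singular value estimate $s_{m+n-1}(ST)\le s_m(S)s_n(T)$ together with discrete Hölder, and \emph{Hölder II} is the log-convexity of $1/p\mapsto\log\lVert S\rVert_p$, i.e.\ a Riesz--Thorin interpolation. The \emph{Orthogonality} assertion reduces to \emph{Monotonicity II} and \emph{Square}: if $ST^*=0$ then $TS^*=(ST^*)^*=0$, hence $(S+T)(S+T)^*=SS^*+TT^*\ge SS^*$, so $\lVert(S+T)(S+T)^*\rVert_{p/2}\ge\lVert SS^*\rVert_{p/2}$ and thus $\lVert S+T\rVert_p\ge\lVert S\rVert_p$; the case $S^*T=0$ follows by applying the same argument to the adjoints via \emph{Adjoint}.

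The main obstacle is the \emph{$p$-triangle inequality} for $p\le 1$, because in this quasi-norm regime both duality and the standard proof of Minkowski's inequality fail. Here I would invoke the Rotfel'd--McCarthy inequality, which uses the subadditivity of $t\mapsto t^p$ on $[0,\infty)$ for $p\le 1$ together with Ky Fan dominance applied to $s_n(S+T)$. This is the one step where a nontrivial auxiliary estimate from the Schatten ideal literature is needed rather than a direct computation on singular values.
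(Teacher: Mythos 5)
Your proposal is correct and follows essentially the same route as the paper: the classical items (Hölder I, the triangle and $p$-triangle inequalities, Monotonicity I, Adjoint, Powers, Square) are delegated to the standard trace-ideal literature (the paper cites McCarthy), while Monotonicity II, Orthogonality and the Hilbert--Schmidt identity are handled by the same singular-value and positivity arguments you give — in particular your Orthogonality argument via $(S+T)(S+T)^{*}=SS^{*}+TT^{*}$, Monotonicity II, Square and Adjoint is the paper's argument almost verbatim. The only minor divergences are immaterial: the paper gets Hölder II from Hölder I applied to the factorization $\lvert S\rvert=\lvert S\rvert^{\alpha}\lvert S\rvert^{1-\alpha}$ rather than by interpolation, and your parenthetical sketch of the $p\le 1$ case (Ky Fan dominance plus subadditivity of $t\mapsto t^{p}$) would not by itself constitute a proof because $t^{p}$ is concave and weak majorization goes the wrong way for concave functions, but this is harmless since you, like the paper, invoke the Rotfel'd--McCarthy inequality from the literature rather than reproving it.
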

Most of these have for example been proven by McCarthy in \cite{McCarthy1967}. We will now briefly prove the remaining ones.
\begin{proof}
``Monotonicity II'' follows, as the inequality holds for the ordered sequence of singular values.  ``Hölder II'' is an application of ``Hölder I" with the operators $\lvert S \rvert^\alpha$ and $\lvert S \rvert ^{1-\alpha}$ and the properties  ``Square'' and ``Powers''. ``Hilbert--Schmidt kernel'' can be seen as a corollary of Lemma 2.2 in \cite{McCarthy1967}. ``Orthogonality'' is based on the observation, that if $S^*T=0$, we have $(S+T)^*(S+T)=S^*S+T^*T$, ``Monotonicity II'', and ``Adjoint'' to replace the condition $S^*T=0$ by the non-equivalent condition $ST^*=0$. 
\end{proof}

\begin{definition}
We say a densely defined operator $T$ on $\Lp^2(\mathbb R^2)$ has the integral kernel $t \colon \mathbb R^2 \times \mathbb R^2 \to \mathbb C$, if for any $f \in C^0_c(\mathbb R^2)$, the identity
\begin{align}
(Tf)(x)= \int_{\mathbb R^2} t(x,y) f(y) dy
\end{align}
holds for almost all $x \in \mathbb R^2$. In this case, we define
\begin{align}
\operatorname{iker}T(x,y) \coloneqq t(x,y).
\end{align} 
We say, that $t$ is \emph{nice}, or respectively, that $T$ is a \emph{nice} integral operator, if for any fixed $x$, the functions $t(x, \blank)$ and $t(\blank, x)$ are in $\Lp^1(\mathbb R^2)$ with a norm bounded independently of $x$. 
\end{definition}

\begin{corollary} \label{nice>bounded}
Let $T$ be a \emph{nice} integral operator. Then $T$ is a bounded operator on $\Lp^2(\mathbb R^2)$. 
\end{corollary}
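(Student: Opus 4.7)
The plan is to invoke the classical Schur test. Let $M$ denote the common bound from the niceness hypothesis, so that $\int_{\mathbb R^2}\lvert t(x,y)\rvert\,dy \le M$ for every $x \in \mathbb R^2$ and $\int_{\mathbb R^2}\lvert t(x,y)\rvert\,dx \le M$ for every $y \in \mathbb R^2$. Given $f \in C^0_c(\mathbb R^2)$, I would first split $\lvert t(x,y)\rvert = \lvert t(x,y)\rvert^{1/2}\,\lvert t(x,y)\rvert^{1/2}$ and apply the Cauchy--Schwarz inequality in the $y$-variable to the pointwise bound $\lvert(Tf)(x)\rvert \le \int \lvert t(x,y)\rvert \lvert f(y)\rvert\,dy$. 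Using the $x$-marginal bound on one of the two factors gives, for almost every $x$,
\[
\lvert(Tf)(x)\rvert^2 \le M \int_{\mathbb R^2} \lvert t(x,y)\rvert \lvert f(y)\rvert^2\,dy.
\]

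Next I would integrate this inequality over $x$, swap the order of integration via Fubini--Tonelli (which is legitimate because the integrand is non-negative), and apply the $y$-marginal bound $\int \lvert t(x,y)\rvert\,dx \le M$ to conclude
\[
\lVert Tf \rVert_{\Lp^2(\mathbb R^2)}^2 \le M^2\, \lVert f \rVert_{\Lp^2(\mathbb R^2)}^2.
\]
Since $C^0_c(\mathbb R^2)$ is dense in $\Lp^2(\mathbb R^2)$ and $T$ is densely defined there, this bound extends (uniquely) to show that $T$ is a bounded operator on $\Lp^2(\mathbb R^2)$ with $\lVert T \rVert_\infty \le M$. There is no genuine obstacle; the only technicality is measurability of $x \mapsto (Tf)(x)$ and the justification that the pointwise formula for $Tf$ really defines the extension, both of which follow from Fubini applied to $\lvert t(x,y)\rvert \lvert f(y)\rvert$ together with the density argument.
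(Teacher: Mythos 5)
Your argument is correct, but it takes a different route from the paper. The paper deduces the corollary from the Riesz--Thorin interpolation theorem: the uniform bound on $\lVert t(x,\cdot)\rVert_{\Lp^1}$ makes $T$ bounded on $\Lp^\infty(\mathbb R^2)$, the uniform bound on $\lVert t(\cdot,y)\rVert_{\Lp^1}$ makes it bounded on $\Lp^1(\mathbb R^2)$, and interpolation then gives boundedness on $\Lp^2$ with norm at most the geometric mean of the two bounds. You instead run the classical Schur test: the splitting $\lvert t\rvert=\lvert t\rvert^{1/2}\lvert t\rvert^{1/2}$, Cauchy--Schwarz in $y$, and Tonelli give $\lVert Tf\rVert_2\le M\lVert f\rVert_2$ directly on $C^0_c$, and density extends the bound. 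Both proofs are sound and give the same constant; yours is more self-contained (no interpolation theorem, only Cauchy--Schwarz and Tonelli), while the paper's is shorter to state given Riesz--Thorin and makes transparent that the same kernel hypotheses control all $\Lp^p$, $1\le p\le\infty$, not just $p=2$. The measurability/extension technicalities you flag at the end are handled at the same level of detail as in the paper, so there is no gap on that score.
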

\begin{proof}
The expression $\left \lVert \left( x \mapsto \lVert t(x,{\cdot}) \rVert_{\Lp^1(\mathbb R^2)} \right) \right \rVert_{\Lp^\infty(\mathbb R^2) }$ is finite and an upper bound for the operator norm of  $T$ as  an operator on $\Lp^\infty(\mathbb R^2)$. On the other hand, the expression $\left \lVert \left( y \mapsto \lVert t({\cdot},y) \rVert_{\Lp^1(\mathbb R^2)} \right) \right \rVert_{\Lp^\infty(\mathbb R^2) }$ is finite and an upper bound for the operator norm of $T$ as a bounded operator on $\Lp^1(\mathbb R^2)$. Hence, by the Riesz-Thorin interpolation theorem, the operator $T$ is bounded on $\Lp^2(\mathbb R^2)$ with an operator norm bounded by the square root of the product of both of these expressions.
\end{proof}
\begin{lemma}
Let $S,T$ be \emph{nice} integral operators on $\Lp^2(\mathbb R^2)$ with integral kernels $s,t$. Let $x,z \in \mathbb R^2$. Then we have the identities
\begin{align}
\operatorname{iker} (S+T)(x,z)=& (s+t)(x,z), \\
\operatorname{iker} (ST)(x,z) =& \int_{\mathbb R^2} s(x,y) t(y,z) dy .
\end{align}
In particular, $S+T$ and $ST$ are \emph{nice} integral operators.
\end{lemma}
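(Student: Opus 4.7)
The plan is to verify each kernel identity by evaluating both sides on a test function $f \in C^0_c(\mathbb R^2)$, and then to check niceness of the resulting kernels directly.

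For the sum, linearity of the Lebesgue integral gives, for almost every $x$,
\begin{align*}
((S+T)f)(x) = \int_{\mathbb R^2} (s(x,y) + t(x,y))\, f(y)\, dy ,
\end{align*}
which identifies $s+t$ as the kernel of $S+T$. Its niceness is a one-line pointwise triangle inequality on the $\Lp^1$-norms of $s(x,\cdot)$ and $t(x,\cdot)$, uniformly in $x$, and symmetrically for the first argument.

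The substantive step is the product. I would start from $((ST)f)(x) = \int s(x,y)\, (Tf)(y)\, dy$, insert the integral representation of $Tf$, and then exchange the order of the iterated integral. The exchange is justified by the absolute integrability bound
\begin{align*}
\int_{\mathbb R^2}\!\int_{\mathbb R^2} \lvert s(x,y) \rvert \, \lvert t(y,z)\rvert \, \lvert f(z) \rvert\, dz\, dy \le \lVert f \rVert_\infty \bigl(\sup_{y} \lVert t(y,\cdot)\rVert_{\Lp^1(\mathbb R^2)}\bigr) \lVert s(x,\cdot)\rVert_{\Lp^1(\mathbb R^2)},
\end{align*}
which is finite by niceness of $t$ together with boundedness and compact support of $f$; Fubini then produces the claimed kernel $u(x,z)=\int s(x,y)\,t(y,z)\,dy$ on the right.

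It remains to check that $u$ is itself nice. This reduces to a short Tonelli calculation: $\int \lvert u(x,z)\rvert\,dz$ factorises into $\lVert s(x,\cdot)\rVert_{\Lp^1(\mathbb R^2)}$ times the uniform bound $\sup_y \lVert t(y,\cdot)\rVert_{\Lp^1(\mathbb R^2)}$, which is bounded uniformly in $x$ by niceness of $s$; the bound for $\int \lvert u(x,z)\rvert\, dx$ follows by the symmetric estimate using $\lVert t(\cdot,z)\rVert_{\Lp^1(\mathbb R^2)}$ and $\lVert s(\cdot,y)\rVert_{\Lp^1(\mathbb R^2)}$ instead. The only step needing any care is the Fubini justification; it is the main (though modest) obstacle, and once established the rest of the lemma is purely formal.
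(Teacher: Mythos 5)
Your argument is correct and follows essentially the same route as the paper, which also handles the sum by linearity and the product by testing against a function and applying Fubini, justified exactly by the uniform $\Lp^1$-bounds in the definition of \emph{nice} kernels. Your additional Tonelli check that the product kernel is again \emph{nice} is a welcome (if routine) detail the paper leaves implicit.
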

The first statement is trivial and the second follows by Fubini to interchange the integral over $y$ with the one over $z$, for any test function $f \in \Lp^1(\mathbb R^2) \cap \Lp^\infty(\mathbb R^2)$. 
\begin{definition}
Let $\gamma ,d \in \mathbb N$ and $\lambda \in \mathbb [0, \infty)$. Then we define the space $W^{\gamma, \infty}_{(\lambda)}(\mathbb R^2, \mathbb C^d)$ as the subspace of the Sobolev space  $W^{\gamma, \infty}(\mathbb R^2, \mathbb C^d)$, where the norm
\begin{align}
\lVert u \rVert_{W^{\gamma, \infty}_{(\lambda)}(\mathbb R^2, \mathbb C^d)}  \coloneqq \sum_{\gamma' \le \gamma } \sup_{x \in \mathbb R^2} \left \lVert \left( 1 +\lVert x \rVert \right)^\lambda \left( \nabla^{\otimes \gamma'} u\right) (x)  \right \rVert
\end{align}
is finite. The supremum in this definition refers to the almost everywhere  supremum. This is a Banach space. The limit space
\begin{align}
W^{\gamma, \infty}_{(\infty)}(\mathbb R^2, \mathbb C^d) \coloneqq \bigcap_{\lambda \ge 0 } W^{\gamma, \infty}_{(\lambda)}(\mathbb R^2, \mathbb C^d)
\end{align}
is only a  vector space equipped with the inverse limit topology associated to the intersection (a set is open, if and only if it is open in each space for finite $\lambda$.).
\end{definition}
These spaces are motivated by Schwartz semi-norms.

\section{Setting and main result}\label{main result section}

We introduce the Landau Hamilton operator $H_0$ with a constant magnetic field $B_0 >0$, defined on (a suitable subspace of) $\Lp^2(\mathbb R^2)$, with magnetic gauge $A_0$ given by
\begin{align}
A_0(x) \coloneqq &\frac {B_0} 2 Jx, \label{def A0} \\ 
H_0 \coloneqq& (-i \nabla - A_0 )^2. \label{def H0}
\end{align}
The spectrum of $H_0$, $\sigma(H_0)$, equals $B_0(2 \mathbb N+1)$. Let $P_l$ be the projection onto the eigenspace with eigenvalue $B_0(2l+1)$ for $l \in \mathbb N$.

Furthermore, for any $\alpha>0$, we introduce the $\alpha$-R\' enyi entropy functions $h_\alpha \colon [0,1] \to [0, \ln (2)]$,
\begin{align}
h_\alpha(x) \coloneqq \begin{cases} \frac 1 {1-\alpha} \ln\left( x^\alpha+ (1-x)^\alpha \right) & \text{ for } \alpha \not=1 ,\\
-x\ln x - (1-x) \ln(1-x) & \text{ for } \alpha=1, \end{cases} \label{h_a}
\end{align}
for $x \in (0,1)$ and $h_\alpha(0)=h_\alpha(1)=0$. Throughout this paper, let $\Lambda \subset \mathbb R^2$ be a bounded open set with Lipschitz-boundary. 

Let $\mu \in \mathbb R \setminus B_0(2\mathbb N+1)$. We define $1_{\le\mu}(H_0)$ as the spectral projection associated to $H_0$ and $\mu$. 
We are interested in how the leading order asymptotic expansion of the local entropy,
\begin{align}
S_\alpha(L \Lambda)  \coloneqq\operatorname{tr} h_\alpha \left(1_{L\Lambda} 1_{\le \mu}(H_0) 1_{L\Lambda } \right) ,
\end{align}
as $L \to \infty$ changes under slight perturbations of $H_0$. The trace is defined as the usual Hilbert space trace of trace class operators on $\Lp^2(\mathbb R^2)$.  This quantity is the local entropy or entanglement entropy of the ground state restricted to $L \Lambda$.  Under the assumption that $\Lambda$ has $C^3$ boundary, the leading term of order $L$ for  the operator $H=H_0$ has been calculated by Leschke, Sobolev and Spitzer in \cite{Leschke2021}. This allows us to focus on bounding the error term that arises, as we introduce a perturbation to $H_0$. Our main result is \autoref{maincor} and relies on the exact calculations of the leading term for $H=H_0$, see \cite{Leschke2021}, and the estimates we will prove in this paper. 

 The following condition is needed to state our main results and a lot of results along the way. Throughout this paper, we fix $0<\varepsilon<1$.
\begin{definition}
Let $\gamma$ be a natural number. We call a magnetic field $B_\varepsilon  \colon \mathbb R^2 \to \mathbb R$ and a potential $V_\varepsilon\colon \mathbb R^2 \to \mathbb R$  $(\gamma,\varepsilon)$ tame, if $V_\varepsilon \in W^{\gamma, \infty} _{(\varepsilon)}(\mathbb R^2, \mathbb R)$ and $B_\varepsilon \in W^{\gamma, \infty} _{(1+\varepsilon)}(\mathbb R^2, \mathbb R)$.
\end{definition}
\begin{remark}
All the following estimates will depend on $B_\varepsilon, V_\varepsilon$ only through $ \varepsilon, \gamma$ and the norms of $B_\varepsilon,V_\varepsilon$ in the spaces $W^{\gamma, \infty} _{(1+\varepsilon)}(\mathbb R^2, \mathbb R)$ and $W^{\gamma, \infty} _{(\varepsilon)}(\mathbb R^2, \mathbb R)$. Maybe somewhat counter-intuitively, small values of $\varepsilon$ correspond to slowly decaying $B_\varepsilon, V_\varepsilon$.
\end{remark}
To define the perturbed Hamiltonian $H$, we need to choose a gauge $A_\varepsilon$ of the magnetic field $B_\varepsilon$. We choose the convolution, which is given by
\begin{align}
A_\varepsilon(x) \coloneqq\left( B_\varepsilon * \frac{J\blank}{2\pi \lVert \blank \rVert^2} \right)(x)= \int_{\mathbb R^2} B_\varepsilon(x-y) \frac{Jy}{2\pi \lVert y \rVert^2} dy \label{def A eps eq}
\end{align}
for any $x \in \mathbb R^2$.
Its relevant properties are summed up in the following Lemma.
\begin{lemma} \label{dA def} Let $\gamma  \in \mathbb N$, $f \in W_{(1+\varepsilon)}^{\gamma,\infty}(\mathbb R^2 , \mathbb R)$ and define $g \in \operatorname{Maps}(\mathbb R^2, \mathbb R^2)$ as the convolution
\begin{align}
g \coloneqq f * \frac{J\blank}{2\pi \lVert \blank \rVert^2}.
\end{align}
 Then, for any $x \in \mathbb R^2$, we have the identities
\begin{align}
\nabla_x \times g(x) &=f(x) ,\\
\nabla_x \cdot g(x) &= 0 \label{divA=0}.
\end{align}
Furthermore, we have $g \in W^{\gamma, \infty} _{(\varepsilon)}(\mathbb R^2, \mathbb R^2)$.
\end{lemma}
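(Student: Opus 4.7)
The plan is to recognize the convolution kernel $y \mapsto Jy/(2\pi\lVert y\rVert^2)$ as $J\nabla G(y)$, where $G(y) := (2\pi)^{-1}\ln\lVert y\rVert$ is the fundamental solution of the two-dimensional Laplacian, so that $\Delta G = \delta_0$ distributionally. Formally then $g = J\nabla u$ with $u := G*f$; writing $g = (\partial_2 u, -\partial_1 u)$ one reads off $\nabla\cdot g = \partial_1\partial_2 u - \partial_2\partial_1 u = 0$ from equality of mixed partials, and $\partial_2 g_1 - \partial_1 g_2 = \Delta u = f*\Delta G = f$ for the curl identity (the sign matching the paper's convention, as one checks on $A_0(x) = \tfrac{B_0}{2}Jx$). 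Because $u = G*f$ grows only logarithmically rather than being $L^\infty$, I would not use $u$ as a genuine function but instead justify both identities directly: differentiate under the integral defining $g$, transferring one derivative onto $f$ (legitimate since $f$ and its derivative decay faster than $1/\lVert y\rVert$) and handling the residual derivative of $K$ via a principal-value argument on the complement of a shrinking disk around the origin.

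For the statement $g \in W^{\gamma,\infty}_{(\varepsilon)}(\mathbb R^2,\mathbb R^2)$, the key move is to push all $\gamma' \le \gamma$ derivatives onto $f$: one has $\nabla^{\otimes \gamma'}g = (\nabla^{\otimes\gamma'}f)*K$, which is legitimate because $\nabla^{\otimes\gamma'}f \in L^\infty \cap L^1$ and $K$ is locally integrable with $\lvert K(y)\rvert \le 1/(2\pi\lVert y\rVert)$. The task then reduces to the pointwise bound $\lvert (h*K)(x)\rvert \le C(1+\lVert x\rVert)^{-\varepsilon}$ for every $h \in W^{0,\infty}_{(1+\varepsilon)}(\mathbb R^2,\mathbb R)$, with $C$ depending only on $\lVert h\rVert_{W^{0,\infty}_{(1+\varepsilon)}}$. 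For $\lVert x\rVert$ bounded this is immediate from local integrability of $K$ and boundedness of $h$. For large $\lVert x\rVert$ I would split the integral at $\lVert y\rVert = \lVert x\rVert/2$ and $\lVert y\rVert = 2\lVert x\rVert$. In the inner region $\lVert y\rVert < \lVert x\rVert/2$, use $\lvert h(x-y)\rvert \lesssim \lVert x\rVert^{-(1+\varepsilon)}$ together with $\int_{\lVert y\rVert < \lVert x\rVert/2}\lvert K(y)\rvert\,dy \le \lVert x\rVert/2$; in the outer region $\lVert y\rVert > 2\lVert x\rVert$, use $\lvert h(x-y)K(y)\rvert \lesssim \lVert y\rVert^{-(2+\varepsilon)}$ and integrate; in the middle annulus use $\lvert K(y)\rvert \lesssim \lVert x\rVert^{-1}$ together with $\int_{\lVert x-y\rVert \le 3\lVert x\rVert}\lvert h(x-y)\rvert\,dy \lesssim \lVert x\rVert^{1-\varepsilon}$. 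Each contribution is of order $\lVert x\rVert^{-\varepsilon}$, and summing over $\gamma' \le \gamma$ gives the desired $W^{\gamma,\infty}_{(\varepsilon)}$-norm control.

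The main obstacle is bookkeeping rather than substance: one must justify differentiating under the integral despite the $1/\lVert y\rVert$ singularity of $K$ (standard via mollification or explicit integration by parts on a punctured disk), and observe that the middle-annulus volume bound rests on $\int_0^R r^{-\varepsilon}\,dr = R^{1-\varepsilon}/(1-\varepsilon)$, which requires $\varepsilon < 1$ as built into the assumption. The exponent $(1+\varepsilon)$ in the hypothesis on $f$ is exactly the threshold that upgrades the formally logarithmic potential $u = G*f$ into an honestly decaying vector potential $g$, the gain of one order of decay being supplied by the extra factor $1/\lVert y\rVert$ in $K = J\nabla G$ relative to $G$.
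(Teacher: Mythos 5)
Your treatment of the weighted decay bound is essentially the paper's own argument: the paper also reduces to estimating $h*K$ with $h=\nabla^{\otimes\gamma'}f$ and $K(y)=Jy/(2\pi\lVert y\rVert^2)$, and splits the integral at scale $\lVert x\rVert$ (its two regions $D_{2\lVert x\rVert}(0)$ and the complement, after rescaling, amount to your three-region splitting), so that part is sound, including the observation that $\varepsilon<1$ enters through $\int_0^R r^{-\varepsilon}dr$. For the two identities you take a genuinely different route: the paper treats $f$ and $g$ as tempered distributions and verifies both identities on the Fourier side, computing $\mathcal F\bigl(J\blank/(2\pi\lVert\blank\rVert^2)\bigr)(\xi)=iJ\xi/(2\pi\lVert\xi\rVert^2)$, whereas you argue in physical space via $K=J\nabla G$ and a shrinking-disk computation. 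That route is viable in principle, but as written it has a gap.

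The hypothesis allows $\gamma=0$, and the paper genuinely needs this case ($(0,\varepsilon)$-tame fields occur for $\alpha>1/3$, and \eqref{divA=0} is used as the operator identity $\nabla\cdot A_\varepsilon=A_\varepsilon\cdot\nabla$). For $\gamma=0$ the function $f$ has no derivative, need not even be continuous, and $g$ need not be differentiable pointwise; so ``transferring one derivative onto $f$'' is unavailable, the boundary term in your shrinking-disk argument cannot converge to a pointwise value $f(x)$, and the identities can only be meant distributionally. To repair your approach you must route the derivative onto a test function: for $\varphi\in C^\infty_c(\mathbb R^2)$ write $\langle\nabla\cdot g,\varphi\rangle=-\int f(z)\bigl(\int K(y)\cdot\nabla\varphi(z+y)\,dy\bigr)dz$ by Fubini and run the shrinking-disk computation on the inner integral (using that $K$ is tangential to circles about the origin, so the flux term vanishes for the divergence and the circulation term produces the delta contribution for the curl), or simply use the paper's Fourier argument, which covers all $\gamma\ge 0$ at once. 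Separately, your justification of $\nabla^{\otimes\gamma'}(f*K)=(\nabla^{\otimes\gamma'}f)*K$ via ``$\nabla^{\otimes\gamma'}f\in \Lp^\infty\cap\Lp^1$'' is incorrect: a decay of order $(1+\lVert x\rVert)^{-(1+\varepsilon)}$ with $\varepsilon<1$ is \emph{not} integrable on $\mathbb R^2$. The interchange is still true, but should be justified (as the paper does) by dominated convergence applied to difference quotients, using the decay of $\nabla^{\otimes\gamma'}f$ together with $\lvert K(y)\rvert\le(2\pi\lVert y\rVert)^{-1}$ and local integrability of $K$.
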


\begin{remark}
A gauge satisfying \eqref{divA=0} is commonly referred to as a Coulomb gauge.
The restriction to $\varepsilon<1$ is necessary to get the described decay. A value of $\varepsilon>1$ will only achieve a $(1+\lVert x \rVert)^{-1}$ decay in $A_\varepsilon$. 
\end{remark}

The proof can be found in \autoref{Appendix B}. 

 Now we define the perturbed gauge $A$ and the perturbed Hamiltonian $H$ by
\begin{align} 
A &\coloneqq A_0-A_\varepsilon ,\\
H&\coloneqq (- i \nabla -A)^2+V_\varepsilon. \label{def H}
\end{align}
As we can see, this gauge corresponds to the magnetic field ${B_0} - B_\varepsilon$, that is, $\nabla_x \times A(x)= B_0 -B_\varepsilon(x)$. The operator $H$ is self-adjoint and its domain agrees with the domain of $H_0$, which we will see in \autoref{ess spectrum}.

We need the following $p$-Schatten quasi-norm estimate, which will be proven in the next section. 

\begin{theorem} \label{p-norm estimate final}
Let $l \in \mathbb N,\gamma \in \mathbb Z^+$. Let $B_\varepsilon,V_\varepsilon$ be $(\gamma,\varepsilon)$ tame and let $1 \ge p> \frac 2 {\gamma +3 }  $. Let $a ,b \in \mathbb R \setminus B_0(2\mathbb N+1)$ with $a<b$. Then we have the estimates
\begin{align}
\left \lVert 1_{L\Lambda} 1_{[a,b]} (H) 1_{L \Lambda^\complement} \right \rVert_p^p& \le CL, \label{pnef 1}\\
\left \lVert 1_{L\Lambda} \left(1_{[a,b]} (H) -1_{[a,b]}(H_0) \right) 1_{L\Lambda^\complement} \right \rVert_p^p& \le CL^{1-p \varepsilon}. \label{pnef 2}
\end{align}
The constants $C$ depend on $\gamma, a,b, \Lambda, p ,\varepsilon,B_\varepsilon,V_\varepsilon$.
\end{theorem}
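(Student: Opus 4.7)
The plan is to realize the spectral projection via a contour integral and then use a resolvent expansion around $H_{0}$, after which everything reduces to $p$-Schatten estimates of operators with explicit, well-controlled kernels.

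First I would write
\[
 1_{[a,b]}(H) \;=\; -\frac{1}{2\pi i}\oint_{\Gamma}(H-\zeta)^{-1}\,d\zeta ,
\]
with $\Gamma$ a fixed smooth closed contour in $\mathbb C$ that encloses exactly $[a,b]\cap\sigma(H_{0})=\{B_{0}(2l+1)\}_{l\in I}$ for some finite index set $I$ and stays a definite distance away from $\sigma(H_{0})$. Since $a,b\notin B_{0}(2\mathbb N+1)$ and $H-H_{0}$ is relatively small at infinity (this is the content of \autoref{ess spectrum} that the paper refers to), $\Gamma$ also encloses precisely the corresponding part of $\sigma(H)$. The perturbation decomposes as
\[
 W \;\coloneqq\; H-H_{0} \;=\; A_{\varepsilon}\cdot(-i\nabla-A_{0}) + (-i\nabla-A_{0})\cdot A_{\varepsilon} + \lVert A_{\varepsilon}\rVert^{2} + V_{\varepsilon},
\]
and each summand is a smooth, polynomially decaying coefficient times a first order differential operator with coefficients controlled by $A_{0}$. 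Iterating the second resolvent identity $N$ times gives
\[
 (H-\zeta)^{-1} \;=\; \sum_{k=0}^{N-1}(H_{0}-\zeta)^{-1}\bigl[(-W)(H_{0}-\zeta)^{-1}\bigr]^{k} \;+\; R_{N}(\zeta),
\]
where $R_{N}$ is a remainder that I will control by choosing $N$ large depending on $p$ and $\gamma$.

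Next I would identify the kernels term by term. The zeroth term integrates to $1_{[a,b]}(H_{0})=\sum_{l\in I}P_{l}$, a finite sum of Landau projections with the explicit Gaussian-type kernels from the Mehler formula; in particular $P_{l}$ is a nice integral operator with smooth kernel $p_{l}(x,y)$ decaying like $\exp(-\tfrac{B_{0}}{4}\lVert x-y\rVert^{2})$ times a polynomial. The higher terms are expressible, after resolving the contour integral using the residues at the enclosed Landau levels, as finite sums
\[
 \sum_{l_{0},\dots,l_{k}\in I}P_{l_{0}}\,W_{1}\,P_{l_{1}}\,W_{2}\cdots W_{k}\,P_{l_{k}} \;+\;(\text{contour residues at other levels with coefficients from }W),
\]
each of which is a composition of nice integral operators (by the composition lemma) and therefore itself a nice integral operator whose kernel inherits, from \autoref{dA def}, $\gamma$ bounded derivatives and a factor that decays like $\prod_{j}(1+\lVert z_{j}\rVert)^{-\varepsilon}$ along its ``W-chain'' (and like $(1+\lVert z_{j}\rVert)^{-(1+\varepsilon)}$ where a $B_{\varepsilon}$ appears via $\lVert A_{\varepsilon}\rVert^{2}$). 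The remainder $R_{N}$ I would bound in Schatten norm directly using the Hölder inequality and the Hilbert--Schmidt estimate from \autoref{pnorm facts} after smoothing by $(H_{0}-\zeta)^{-1}$ factors, taking $N$ large enough that the crude bound beats $L^{1}$.

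To turn kernel regularity into Schatten bounds I would use the Sobolev--Schatten embedding on a disk that the paper announces (based on Gramsch), namely that on a disk $D$ of fixed radius, a $W^{\gamma+1,\infty}$-kernel gives an $S_{p}$ operator with $p>2/(\gamma+3)$, with norm bounded by the Sobolev norm of the kernel. Then I would cover the boundary region
\[
 \bigl\{x\in L\Lambda:\operatorname{dist}(x,\partial(L\Lambda))\le 1\bigr\} \;\cup\; \bigl\{y\in L\Lambda^{\complement}:\operatorname{dist}(y,\partial(L\Lambda))\le 1\bigr\}
\]
by $O(L)$ unit disks $D_{j}$ (using $\lvert\partial\Lambda\rvert<\infty$ because $\Lambda$ is Lipschitz), pair them up across the boundary, and use the $p$-triangle inequality to write
\[
 \bigl\lVert 1_{L\Lambda}T\,1_{L\Lambda^{\complement}}\bigr\rVert_{p}^{p} \;\le\; \sum_{(j,j')}\bigl\lVert 1_{D_{j}\cap L\Lambda}\,T\,1_{D_{j'}\cap L\Lambda^{\complement}}\bigr\rVert_{p}^{p} \;+\; (\text{long-range tail}),
\]
where the long-range tail (disks far apart) is handled by the Gaussian/polynomial decay of the kernel in $\lVert x-y\rVert$ and is summable to $O(1)$. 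Each near-boundary term is $O(1)$ by Sobolev--Schatten, giving $CL$ total. For \eqref{pnef 2}, the zeroth term drops out by construction, so every remaining kernel carries at least one factor of $B_{\varepsilon}$ or $V_{\varepsilon}$, hence decays at rate $(1+\lVert z\rVert)^{-\varepsilon}$. Evaluated at a disk centred at distance $\sim L$ from the origin, this contributes $L^{-p\varepsilon}$ to each disk's $p$-Schatten-$p$-power, and summing over $O(L)$ disks yields $CL^{1-p\varepsilon}$.

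The main obstacle I expect is getting the kernel estimates uniform in $\zeta\in\Gamma$ and in $L$ while keeping track of the decay weights under repeated composition: the iterated resolvent chains mix differentiations coming from the kinetic part of $W$ with multiplication by $A_{\varepsilon},V_{\varepsilon}$, and one must verify that each composition preserves niceness, gains the right number of $x$-derivatives for the Gramsch embedding, and preserves the product of decay factors $(1+\lVert z_{j}\rVert)^{-\varepsilon}$. Once that bookkeeping is organized, the geometric covering of $\partial(L\Lambda)$ and the $p$-triangle inequality deliver both \eqref{pnef 1} and \eqref{pnef 2} simultaneously.
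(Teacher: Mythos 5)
Your overall route is the same as the paper's: Riesz contour integral, iterated second resolvent identity around $H_0$, explicit residue computation at the enclosed Landau levels (the paper's \autoref{path integral resolve}, producing chains $(T_lH_\varepsilon)^mP_l(H_\varepsilon T_l)^{k-m}$ rather than just products of projections), a unit-cell covering of the boundary region with the $p$-triangle inequality, the Gramsch-based Sobolev--Schatten embedding per cell, and the decay bookkeeping that gives $CL$ for the $k=0$ term and $CL^{1-pk\varepsilon}$, $k\ge1$, for the perturbative terms, which is exactly how \eqref{pnef 1} and \eqref{pnef 2} are obtained.

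There is, however, one genuine gap in the plan as stated: uniformity of the per-disk Sobolev--Schatten bound in the position of the disk. You propose to bound each localized piece by ``the Sobolev norm of the kernel'' on a unit disk $D_j$ sitting at distance of order $L$ from the origin. But plain derivatives of the Landau kernels are \emph{not} uniformly bounded there: $p_l(x,y)$ carries the phase $\exp(i\tfrac{B_0}{2}\langle x\mid Jy\rangle)$, and $\nabla_x^{\otimes j}p_l(x,y)$ (likewise derivatives of the resolvent kernels, and the coefficient $A_0(x)=\tfrac{B_0}{2}Jx$ produced by the kinetic part of $W$) grows like $(1+\lVert x\rVert)^j$, so each far disk would contribute a factor polynomial in $L$ of degree up to $p(\gamma+2)$, destroying both estimates. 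The paper removes this obstruction by conjugating with the magnetic translations $U_{x_0}$ (\autoref{U_{x_0}}) and by measuring smoothness with the covariant derivatives $(-i\nabla+A_0)$, which commute with $M_{I,\zeta},P_l,T_l$ and define the quasi-isometry $D_{\gamma+2}$ of \autoref{H quasi iso}; the derivatives are then pushed through the $H_\varepsilon$-chain by the commutator calculus of \autoref{commutator replacement}. Some such gauge-covariance mechanism is indispensable and is missing from your sketch. Two smaller points: you need $\gamma+2$ derivatives (two absorbed by the adjacent resolvent or projection, cf.\ \autoref{2nd diff resolv}, plus the $\gamma$ available from the tameness of $B_\varepsilon,V_\varepsilon$) to reach $p>\tfrac2{\gamma+3}$ -- a $W^{\gamma+1,\infty}$ kernel only yields $p>\tfrac2{\gamma+2}$; and your one-sided remainder leaves $(H-\zeta)^{-1}$ next to $1_{L\Lambda^\complement}$, so the decay in $L$ cannot be harvested there -- either argue summability over interior cells with $pN\varepsilon>2$, or use the paper's symmetric remainder $(M_{\mathbb N,\zeta}H_\varepsilon)^n\frac1{H-\zeta}(H_\varepsilon M_{\mathbb N,\zeta})^n$, whose right factors hit $1_{L\Lambda^\complement}$ and give the $L^{-\nu}$ bound of \autoref{theorem final summand}.
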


Finally, we need the following statement due to Aleksandrov and Peller, which is a Corollary of Theorem 5.11 in \cite{Peller2010} and the inclusion $C_c^\infty(\mathbb R) \subset B_{\infty,1}^1(\mathbb R)$, where the latter refers to the Besov space as used by Aleksandrov and Peller.
\begin{proposition} [based on Theorem 5.11 in \cite{Peller2010}]\label{peller}
Let $f \in C^\infty_c(\mathbb R)$. Then there is a constant $C<\infty$, such that for any self-adjoint bounded operators $A,B$, such that $A-B$ is trace class, we have the estimate
\begin{align}
\lVert f(A)-f(B) \rVert_1 \le C\lVert A-B \rVert_1 \, .
\end{align}
\end{proposition}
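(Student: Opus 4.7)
The plan is to deduce the statement directly from Theorem 5.11 of Aleksandrov--Peller \cite{Peller2010}, whose conclusion is exactly the trace-norm Lipschitz estimate $\|f(A)-f(B)\|_1 \le C_f\|A-B\|_1$ under the hypothesis that $f$ belongs to the inhomogeneous Besov space $B^1_{\infty,1}(\mathbb R)$. Thus the only content of the corollary is to verify the embedding $C^\infty_c(\mathbb R) \subset B^1_{\infty,1}(\mathbb R)$ asserted in the text; then one takes $C:=C_f$ for the fixed test function.

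To check the embedding, I would fix a standard dyadic Littlewood--Paley partition of unity $\{\varphi_j\}_{j\ge 0}$ on the Fourier side, with $\varphi_0$ supported near the origin and $\varphi_j$ supported in $\{|\xi|\asymp 2^j\}$ for $j\ge 1$, and use the equivalent norm
\begin{align}
\|f\|_{B^1_{\infty,1}(\mathbb R)} \asymp \sum_{j\ge 0} 2^j \,\|\varphi_j(D) f\|_{\Lp^\infty(\mathbb R)}.
\end{align}
For $f \in C^\infty_c(\mathbb R)$ the Fourier transform $\hat f$ lies in the Schwartz class, so for every $N\ge 0$ there is a constant $C_N$ with $|\hat f(\xi)| \le C_N(1+|\xi|)^{-N}$. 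Combined with the Fourier-support constraint on $\varphi_j$ and a trivial $\Lp^1$-bound on the Littlewood--Paley kernels, this yields $\|\varphi_j(D) f\|_{\Lp^\infty} \le C_N' 2^{-jN}$ for $j\ge 1$, while the $j=0$ term is bounded by $\|f\|_{\Lp^\infty}$. Choosing $N=2$ makes the series $\sum_j 2^j\|\varphi_j(D)f\|_{\Lp^\infty}$ absolutely convergent, which establishes $f\in B^1_{\infty,1}(\mathbb R)$.

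Given the embedding, the proposition follows immediately by quoting \cite[Theorem 5.11]{Peller2010}: the hypotheses of that theorem (self-adjoint bounded $A,B$ with trace-class difference, symbol in $B^1_{\infty,1}$) are all met. There is no substantive obstacle, since the hard analytic work (the double-operator-integral machinery and the Birman--Solomyak integration that underlie Theorem 5.11) is already carried out in \cite{Peller2010}; the only step genuinely performed here is the routine Besov embedding above. I would conclude with the short remark that the constant $C$ depends on $f$ through $\|f\|_{B^1_{\infty,1}}$, which is all that will be needed when applying this in the subsequent sections.
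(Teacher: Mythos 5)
Your proposal follows exactly the paper's route: the paper obtains \autoref{peller} by citing Theorem 5.11 of \cite{Peller2010} together with the inclusion $C_c^\infty(\mathbb R)\subset B^1_{\infty,1}(\mathbb R)$, and the only content you add is the routine Littlewood--Paley verification of that inclusion. That verification is sound, up to a harmless bookkeeping slip: bounding $\lVert\varphi_j(D)f\rVert_{\Lp^\infty}$ by $\lVert\varphi_j\hat f\rVert_{\Lp^1}$ costs a factor of the support length $\sim 2^j$, so the decay is $2^{-j(N-1)}$ rather than $2^{-jN}$ and one should take $N\ge 3$ instead of $N=2$, which is immaterial since $\hat f$ is Schwartz.
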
 
Now we state the key result of this paper, which is proved below.
\begin{theorem} \label{main}
Let $\alpha>0$ and choose $\beta= \min( 0.5,\alpha)$. Define $\gamma $ as the smallest  integer, such that $\gamma >  \frac 1 \beta -3$. Let $B_\varepsilon,V_\varepsilon$ be $(\gamma,\varepsilon)$ tame. Let  $a ,b \in \mathbb R \setminus {B_0}(2 \mathbb N+1),a<b$ and $I\coloneqq [a,b]$. Then we have
\begin{align}
\operatorname{tr}\left(   h_\alpha \left(1_{L \Lambda }1_{I} (H)1_{L\Lambda}\right) - h_\alpha \left(1_{L\Lambda}1_I(H_0)1_{L \Lambda} \right)\right)   =  o (L)  \label{result formula},
\end{align}
as $L \to \infty$.
\end{theorem}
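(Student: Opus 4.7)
The plan is to approximate $h_\alpha$ by $f_\delta \in C_c^\infty((0,1))$, apply the Aleksandrov--Peller bound (\autoref{peller}) to the smooth part, and control the remainder near $\{0,1\}$ via the $p$-Schatten estimates of \autoref{p-norm estimate final}. Set $D_L := 1_{L\Lambda} 1_I(H) 1_{L\Lambda}$ and $D_{L,0} := 1_{L\Lambda} 1_I(H_0) 1_{L\Lambda}$; both are self-adjoint with spectrum in $[0,1]$.

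For each $\delta \in (0, 1/2)$, write $h_\alpha = f_\delta + g_\delta$ with $f_\delta \in C_c^\infty((0,1))$ and $g_\delta \geq 0$ supported in $[0, \delta] \cup [1-\delta, 1]$. Using $h_\alpha(x) \leq C x^{\beta''}$ near $x = 0$ for any $\beta'' < \min(\alpha, 1)$ (up to a logarithmic correction at $\alpha=1$) and symmetry near $1$, we obtain $g_\delta(x) \leq C \delta^\eta (x(1-x))^{\beta'}$ on $[0,1]$ for some $\eta > 0$ and any $\beta' < \beta''$. We fix $\beta' < \beta = \min(\alpha, 1/2)$ with $2\beta' \in (2/(\gamma+3), 1]$, which is possible since the hypothesis $\gamma > 1/\beta - 3$ is equivalent to $2/(\gamma+3) < 2\beta$.

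The identity $1_I(H)^2 = 1_I(H)$ together with $1 = 1_{L\Lambda} + 1_{L\Lambda^\complement}$ gives $D_L(I - D_L) = SS^*$ with $S := 1_{L\Lambda} 1_I(H) 1_{L\Lambda^\complement}$. Functional calculus, the ``Square'' property of Schatten norms, and \eqref{pnef 1} applied with exponent $p = 2\beta'$ then yield
\begin{align*}
\operatorname{tr}(g_\delta(D_L)) &\leq C\delta^\eta \lVert D_L(I-D_L)\rVert_{\beta'}^{\beta'} \\
&\leq C\delta^\eta \lVert S \rVert_{2\beta'}^{2\beta'} \leq C\delta^\eta L,
\end{align*}
and likewise for $D_{L,0}$. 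For the bulk part, \autoref{peller} gives $|\operatorname{tr}(f_\delta(D_L) - f_\delta(D_{L,0}))| \leq C_\delta \lVert D_L - D_{L,0}\rVert_1$. Letting $\delta = L^{-\sigma}$ for a sufficiently small $\sigma > 0$ makes $\delta^\eta L = o(L)$, and provided a polynomial $o(L)$ bound on $\lVert D_L - D_{L,0}\rVert_1$ is available that balances the $\delta^{-K}$ growth of the Peller constant $C_\delta$, the combined bound is $o(L)$, which proves the theorem.

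The main obstacle is thus the trace-class estimate $\lVert 1_{L\Lambda}(1_I(H) - 1_I(H_0)) 1_{L\Lambda}\rVert_1 = o(L)$, which is also needed for \autoref{peller} to be applicable. Estimate \eqref{pnef 2} with $p = 1$ (valid for $\gamma \geq 0$) supplies only the off-diagonal piece $\lVert 1_{L\Lambda}(1_I(H) - 1_I(H_0)) 1_{L\Lambda^\complement} \rVert_1 \leq CL^{1-\varepsilon}$. The diagonal piece has to be treated directly through the contour representation
\[
1_I(H) - 1_I(H_0) = \frac{1}{2\pi i} \oint_\Gamma (z-H)^{-1}(H-H_0)(z-H_0)^{-1}\, dz,
\]
combining the tame decay of $V_\varepsilon$ and $B_\varepsilon$ with off-diagonal decay estimates on the Landau resolvent to produce a sufficiently decaying kernel of $1_I(H) - 1_I(H_0)$; this is precisely where the structural work of the paper (the resolvent expansion and kernel estimates) enters.
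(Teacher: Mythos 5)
Your treatment of the edge part is fine and essentially matches the paper: the pointwise bound $g_\delta(x)\le C\delta^\eta (x(1-x))^{\beta'}$, the identity $D_L(1-D_L)=SS^*$ with $S=1_{L\Lambda}1_I(H)1_{L\Lambda^\complement}$, and \eqref{pnef 1} with $p=2\beta'$ give the $C\delta^\eta L$ bound exactly as in the paper's proof. The gap is in the bulk part, and you have correctly located it but underestimated it: applying \autoref{peller} directly to $f_\delta(D_L)-f_\delta(D_{L,0})$ requires a trace-norm bound on the \emph{diagonal} difference $1_{L\Lambda}\left(1_I(H)-1_I(H_0)\right)1_{L\Lambda}$, and no such bound is available — \autoref{p-norm estimate final} only controls off-diagonal blocks $1_{L\Lambda}(\cdots)1_{L\Lambda^\complement}$. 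Worse, under the standing hypotheses this diagonal estimate is not just missing but false in general: the perturbations decay only like $(1+\lVert x\rVert)^{-\varepsilon}$ with $\varepsilon<1$, so already the first-order term $\frac{1}{2\pi i}\oint_\Gamma (z-H_0)^{-1}H_\varepsilon(z-H_0)^{-1}dz$ has a kernel whose diagonal decays like $(1+\lVert x\rVert)^{-\varepsilon}$, and its restriction to $L\Lambda$ has trace norm of order $\int_{L\Lambda}(1+\lVert x\rVert)^{-\varepsilon}dx\sim L^{2-\varepsilon}$, which is much larger than $L$. So the contour-plus-kernel-decay programme you sketch in the last paragraph cannot produce the $o(L)$ (let alone polynomially small) bound your $\delta=L^{-\sigma}$ balancing needs.

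The paper avoids this obstruction by a change of variable in the entropy function rather than a sharper operator estimate: using the symmetry of $h_\alpha$ it writes $h_\alpha(x)=g_\alpha(4x(1-x))$, so that $h_\alpha(1_{L\Lambda}P1_{L\Lambda})=g_\alpha\bigl(4\lvert 1_{L\Lambda^\complement}P1_{L\Lambda}\rvert^2\bigr)$ for $P\in\{1_I(H),1_I(H_0)\}$. Then \autoref{peller} is applied to the smooth function $(1-\varphi)g_\alpha$ evaluated at these squared off-diagonal operators, and the relevant trace-norm difference reduces via $\lvert A\rvert^2-\lvert B\rvert^2=A^*(A-B)+(A^*-B^*)B$ to $\lVert 1_{L\Lambda^\complement}(1_I(H)-1_I(H_0))1_{L\Lambda}\rVert_1\le CL^{1-\varepsilon}$, which is exactly \eqref{pnef 2} with $p=1$. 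In other words, the missing idea in your proposal is to transfer the smooth-function argument from the diagonal compression $D_L$ to the off-diagonal square $4\lvert 1_{L\Lambda^\complement}P1_{L\Lambda}\rvert^2$; without it, your bulk estimate rests on an input that the hypotheses do not supply.
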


\begin{remark}
The choice of $\beta=0.5$ for $\alpha\ge 0.5$ delivers the optimal value for $\gamma$, namely $0$. For $\alpha>\frac 1 3$, we can get away with a non-differentiable $B_\varepsilon,V_\varepsilon$. 

The assumption that $a,b \not \in B_0(2\mathbb N+1)$ cannot be dropped, as the following counter example illustrates. Let $B_\varepsilon=0, a=0$ and $b=B_0$. By \autoref{ess spectrum}, the spectrum of $H$ has an accumulation point at $B_0$. If we assume $V_\varepsilon>0$ pointwise, then all eigenvalues of $H$ are strictly larger than $B_0$ and hence $1_I(H)=0$. But Theorem 8 in \cite{Leschke2021}, which we will elaborate on shortly, states, that the leading order asymptotic expansion of $\operatorname{tr} h_\alpha \left(1_{L\Lambda}1_I(H_0)1_{L \Lambda} \right)$ for large $L$ is of order $O(L)$ and does not vanish. On the other hand, if we assume that $-B_0<V_\varepsilon<0$ pointwise, there is a spectral gap of the form $(B_0,2B_0)$ in the spectrum of $H$. Hence, we can move $b$ to $1.5B_0$ without changing the operators. Now we can apply our \autoref{main}. Hence under our general assumptions, it is possible to get both one-sided limits, when $b=B_0$. We expect similar results, whenever $a$ or $b$ are in the spectrum of $H_0$. It is, however, a little more complicated to see, whether the leading order expansion for $H_0$ changes, when we add or remove a single Landau level from the interval $I$.
\end{remark}
The following corollary is our main result. It combines Theorem 8 in \cite{Leschke2021}, which can be stated as the corollary for the case $B_\varepsilon=V_\varepsilon=0$, with our \autoref{main}.
\begin{corollary} \label{maincor}
Let $\alpha>0$ and choose $\beta=\min( \alpha, 0.5)$. Define $\gamma $ as the smallest positive integer, such that $\gamma >  \frac 1 \beta -3$. Let $B_\varepsilon,V_\varepsilon$ be $(\gamma,\varepsilon)$ tame. Let $\mu \not \in \sigma(H_0)$ and define $\nu$ as the largest integer, such that $B_0(2\nu+1)<\mu$. Assume that the boundary $\partial \Lambda$ is $C^3$-smooth.  Then 
\begin{align}
S_\alpha(L \Lambda) = \operatorname{tr}(h_\alpha(1_{L\Lambda}1_{\le \mu} (H)1_{L\Lambda} ) = L \sqrt{B_0} \lvert \partial \Lambda \rvert  M_{\le \nu}(h_\alpha) + o(L),
\end{align}
as $L \to \infty$ with $0< M_{\le \nu}(h_\alpha)<\infty$ as described in \cite{Leschke2021} for $\nu \ge 0$ and $M_{\le \nu}(h_\alpha) \coloneqq 0$ for $\nu<0$.
\end{corollary}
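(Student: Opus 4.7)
The plan is to realize \autoref{maincor} as a direct combination of \autoref{main} with the asymptotic expansion for the unperturbed Landau Hamiltonian established in Theorem 8 of \cite{Leschke2021}. The only real work is to verify that one may apply \autoref{main} to the half-line spectral projection $1_{\le \mu}$; this requires trimming the half-line to a compact interval $[a,\mu]$ while keeping both endpoints off the spectrum of $H_0$.

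First I would observe that, since $V_\varepsilon$ is bounded (it lies in $W^{\gamma,\infty}_{(\varepsilon)}(\mathbb R^2,\mathbb R)$) and $(-i\nabla - A)^2 \ge 0$, the perturbed operator $H$ is bounded below by $-\lVert V_\varepsilon\rVert_\infty$, while $H_0 \ge B_0$. Pick any $a<\min(-\lVert V_\varepsilon\rVert_\infty,\,0)$; then $a$ lies below both spectra, and in particular $a\notin B_0(2\mathbb N+1)=\sigma(H_0)$. Setting $I\coloneqq [a,\mu]$, which also satisfies $\mu\notin B_0(2\mathbb N+1)$ by hypothesis, one obtains $1_{\le\mu}(H)=1_I(H)$ and $1_{\le\mu}(H_0)=1_I(H_0)$.

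Second, inserting and subtracting the unperturbed object inside the trace gives
\begin{align*}
S_\alpha(L\Lambda) = \operatorname{tr} h_\alpha\bigl(1_{L\Lambda}1_I(H_0)1_{L\Lambda}\bigr) + \operatorname{tr}\bigl(h_\alpha(1_{L\Lambda}1_I(H)1_{L\Lambda}) - h_\alpha(1_{L\Lambda}1_I(H_0)1_{L\Lambda})\bigr).
\end{align*}
Theorem 8 of \cite{Leschke2021}, applied under the $C^3$-regularity of $\partial\Lambda$, handles the first summand, producing the principal term $L\sqrt{B_0}\lvert\partial\Lambda\rvert M_{\le\nu}(h_\alpha)$ up to $o(L)$. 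The $(\gamma,\varepsilon)$-tameness hypothesis and the arithmetic constraint $\gamma>1/\beta-3$ are precisely the assumptions \autoref{main} demands, and its conclusion bounds the second summand by $o(L)$. Summing the two asymptotics yields the claim.

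Third, the convention $M_{\le\nu}(h_\alpha):=0$ when $\nu<0$ is consistent with the degenerate case $\mu<B_0$: there $1_I(H_0)=0$, so the first summand vanishes identically and the corollary reduces directly to \autoref{main}. I do not anticipate a genuine obstacle here; the only residual subtleties are the self-adjointness and domain identification of $H$, which are already settled in \autoref{ess spectrum}, and the harmless choice of $a$ above. The substantive work of the paper has all been absorbed into \autoref{main}, and this corollary is simply the bookkeeping step that packages the asymptotic into the form $L\sqrt{B_0}\lvert\partial\Lambda\rvert M_{\le\nu}(h_\alpha)+o(L)$.
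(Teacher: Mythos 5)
Your proposal is correct and follows essentially the same route as the paper, which likewise obtains \autoref{maincor} by combining Theorem 8 of \cite{Leschke2021} with \autoref{main}; your explicit choice of $a$ below $\min(-\lVert V_\varepsilon\rVert_\infty,0)$ to turn the half-line projection into $1_{[a,\mu]}$ is exactly the bookkeeping the paper leaves implicit. No gaps.
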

In the case $\mu<B_0$, the projection is finite dimensional and the entropy has an order at most $O(1)$ in $L$ as $L \to \infty$.

\begin{proof}[Proof of \autoref{main}]
We define the function $g_\alpha \colon [0,1] \to [0,\ln(2)]$ by the identity
\begin{align}
g_\alpha(4x(1-x))=h_\alpha(x).
\end{align}
The symmetry of $h_\alpha$ guarantees the existence of $g_\alpha$. We have
\begin{align}
g_\alpha(t)=h_\alpha \left(\frac{1- \sqrt{1 -t}}2 \right) .
\end{align}
Let $\varepsilon_0>0$. We choose a smooth cut-off function  $\varphi \colon [0,1] \to [0,1] $ with $\varphi(x) = 1$, if $ x \le \varepsilon_0$, and $\varphi(x) = 0,$ if  $ x \ge 2\varepsilon_0$. 
Now we write
\begin{align}
g_\alpha(t)= (1-\varphi(t)) g_\alpha(t) +\varphi(t)g_\alpha(t).
\end{align}
The advantage of this decomposition is that the first summand is smooth, and the second summand is small. The second summand can be bounded using the fact, that $h_\alpha$ is $\beta$-Hölder continuous on $[0,1]$ and smooth on $(0,1)$. As $h_\alpha$ is symmetric around $t=\frac 12$ and analytic on $(0,1)$, its Taylor expansion at that point contains only even powers of $(t-\frac 1 2)$. Thus, we see that $g_\alpha$ is analytic at $t=1$. Hence $g_\alpha \in C^\infty((0,1])$ and it is $\beta$-Hölder continuous on $[0,1]$, as $\beta= \min(\alpha,0.5)$.

We choose $\beta' < \beta \le  \frac 1 2 $, such that $\gamma > \frac 1 {\beta'} -3$. 
Hence, we have
\begin{align}
\varphi(t) g_\alpha(t) \le C \varepsilon_0^{\beta- \beta'} t^{\beta'}. \label{phi g}
\end{align}
We define $P,P'$ as the spectral projections,
\begin{align}
P \coloneqq& 1_I(H_0), \\
P'\coloneqq& 1_I(H). 
\end{align}
We observe
\begin{align}
h_\alpha(1_{L\Lambda} P^{(')}1_{L \Lambda} )= g_\alpha ( 4 \lvert 1_{L \Lambda^\complement} P^{(')} 1_{L\Lambda} \rvert^2 ).
\end{align}
We can now apply \autoref{peller}. Thus,
\begin{align}
&\left \lVert \left((1-\varphi)g_\alpha\right)\left(4\lvert 1_{L\Lambda^\complement} P' 1_{L \Lambda} \rvert ^2 \right) - \left((1-\varphi)g_\alpha\right)\left(4\lvert 1_{L\Lambda^\complement} P 1_{L \Lambda} \rvert ^2 \right)  \right \rVert_1\\
\le& C \left \lVert \lvert 1_{L\Lambda^\complement} P' 1_{L \Lambda} \rvert ^2 - \lvert 1_{L\Lambda^\complement} P 1_{L \Lambda} \rvert ^2 \right \rVert_1\\
\le & C \left \lVert  1_{L\Lambda^\complement} (P'-P) 1_{L \Lambda}\right \rVert_1 \\
\le &CL^{1- \varepsilon } \, .
\end{align}
Note that the last constant $C$ depends on $\varepsilon_0$, but not on $L$. In the second step we used the identity $\lvert A\rvert^2- \lvert B \rvert^2=A^*(A-B)+(A^*-B^*)B$.  
In the last step, we used \autoref{p-norm estimate final} with $p=1$.

We can also apply \autoref{p-norm estimate final} for the remaining term, after using \eqref{phi g}, $1 \ge 2 \beta' > \frac 2 {\gamma+3}$ and that $H=H_0$ is admissible for \autoref{p-norm estimate final}.
\begin{align}
\left\lVert \left(\varphi g_\alpha\right) \left(4 \lvert 1_{L\Lambda^\complement} P^{(')} 1_{L \Lambda} \rvert ^2 \right)\right \rVert_1 
\le   C \varepsilon_0^{\beta- \beta'} \left \lVert \lvert 1_{L\Lambda^\complement} P^{(')} 1_{L \Lambda} \rvert \right \rVert_{2\beta'}^{2\beta'}
\le C\varepsilon_0^{\beta-\beta'}L.
\end{align}
Hence,
\begin{align}
\left \lvert \operatorname{tr} h_\alpha (1_{L\Lambda}P'1_{L\Lambda})-\operatorname{tr}h_\alpha (1_{L\Lambda}P1_{L\Lambda}) \right \rvert \le C(\varepsilon_0)L^{1-\varepsilon}+C\varepsilon_0^{\beta-\beta'}L.
\end{align}
Note that the first constant $C(\varepsilon_0)$ depends on $\varepsilon_0$ while the second one does not. This term is in $o(L)$, as for any $\varepsilon>0$ we can choose $L$ large enough to let the first term be less than $\varepsilon_0L$. This proves that the leading term expansion of the $\alpha$-R\' enyi entropy for the perturbed Landau Hamiltonian $H$ agrees with the main term in the same expansion for the Landau Hamiltonian $H_0$. This finishes the proof.
\end{proof}
\begin{remark}
We can actually pick $\varepsilon_0$ dependent on $L$, which does lead to a smaller error term, if we bound the constant $C(\varepsilon_0)$ more precisely. This does however not lead to an improved error term in  \autoref{maincor}, as the known error term for the constant magnetic field is too large. Hence, I did not include the details here.
\end{remark}

\section{The Ansatz for the proof of \autoref{p-norm estimate final}}\label{Ansatz}

The goal of this section is to explain how to prove \autoref{p-norm estimate final} and, to reduce it to two more technical statements. The general approach has been inspired by \cite{muller2020stability}.

We define
\begin{align}
H_\varepsilon&\coloneqq H-H_0, \label{def Heps}
\end{align}
where $H$ and $H_0$ were defined in \eqref{def H} and \eqref{def H0}.

 We expand $H_\varepsilon$ as 
\begin{align}
H_\varepsilon&= H-H_0 \\
&= \left( - i \nabla -A \right)^2 - \left( - i \nabla -A_0 \right)^2 +V_\varepsilon\\
&= (A_0-A) \cdot \left( - i \nabla -A +A_0 - A_0  \right) + \left( - i \nabla -A_0 \right)\cdot  (A_0-A) +V_\varepsilon\\
&=  2A_\varepsilon  \cdot \left( - i \nabla  - A_0  \right)  + A_\varepsilon^2 +V_\varepsilon.
\end{align}
We used the identity $a^2-b^2=(a-b)a+b(a-b)$ in the third step and \eqref{divA=0}, which is equivalent to $\nabla \cdot  A_\varepsilon=A_\varepsilon \cdot \nabla$, in the last step. We now introduce the pseudo potential
\begin{align}
W_\varepsilon \coloneqq A_\varepsilon^2+ V_\varepsilon.
\end{align}

We introduce a few more operators. 
Let $I \subset  \mathbb N$ be cofinite, $  \zeta  \in \mathbb C$ and $\zeta \not \in B_0(2I+1)$. Then we define the bounded operator
\begin{align}
M_{I,\zeta} \coloneqq \sum_{l \in I} \frac{P_l}{B_0(2l+1)-\zeta}. \label{def M_Iz}
\end{align}
It satisfies $M_{I,\zeta}^*=M_{I, \overline{\zeta}}$. For $\zeta \not \in \sigma(H_0)$, we have the identity
\begin{align}
M_{\mathbb N,\zeta}= \frac 1 {H_0-\zeta}.
\end{align}
There are some results describing the kernel of the resolvent operator, but we also need the special case   
\begin{align}
T_l \coloneqq M_{\mathbb N \setminus \{ l \} , {B_0}(2l+1)} = \sum_{k \not = l } \frac{P_k}{2{B_0}(k-l)}. \label{def T_l}
\end{align}
Hence it is more convenient to deal with the operator $M_{I,\zeta}$ in this generality.

We define $n_0$ as the smallest integer such that
\begin{align}
n_0 > \frac 1 {2\varepsilon} \label{def n0}.
\end{align}

The following lemma will be proved in  \autoref{last section} after some preparations.
\begin{lemma} \label{4-Schatten}
Let $B_\varepsilon$ and $ V_\varepsilon$ be $(0,\varepsilon)$ tame. Then for any $I \subset N$ cofinite and any $\zeta \in \mathbb C \setminus {B_0}(2I+1)$, the operator $H_\varepsilon M_{I,\zeta}$ is in the $4n_0$-Schatten class, and the $4n_0$-Schatten norm is in $\Lp^\infty_{loc}(\mathbb C \setminus \sigma(H_0))$ as a function of $\zeta$. The upper bound for the norm depends on ${B_0}$.
\end{lemma}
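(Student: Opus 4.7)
I split $H_\varepsilon M_{I,\zeta} = 2 A_\varepsilon \cdot D M_{I,\zeta} + W_\varepsilon M_{I,\zeta}$ with $D = -i\nabla - A_0$, using the expansion from the start of this section, and treat the two summands separately (the quasi-triangle inequality in $S_{4n_0}$ will assemble them, and since $4n_0 \ge 4 > 1$ the ordinary triangle inequality suffices). By Lemma~\ref{dA def} and $(0,\varepsilon)$-tameness, both $\lvert A_\varepsilon(x)\rvert$ and $\lvert W_\varepsilon(x)\rvert$ are bounded pointwise by $C\langle x\rangle^{-\varepsilon}$ (using $\lvert A_\varepsilon\rvert^2 \le C\langle x\rangle^{-2\varepsilon} \le C\langle x\rangle^{-\varepsilon}$ for the $A_\varepsilon^2$ part of $W_\varepsilon$). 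The spectral representation of $M_{I,\zeta}$ together with the identity $\lVert Du\rVert^2 = \langle u, H_0 u\rangle$ immediately gives
\[
\lVert M_{I,\zeta}\rVert_\infty \le \sup_{l\in I} \tfrac{1}{\lvert B_0(2l+1) - \zeta\rvert}, \qquad \lVert D M_{I,\zeta}\rVert_\infty^2 \le \sup_{l\in I} \tfrac{B_0(2l+1)}{\lvert B_0(2l+1)-\zeta\rvert^2},
\]
both finite and locally bounded in $\zeta \in \mathbb C \setminus B_0(2I+1)$.

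\textbf{Dyadic Hilbert--Schmidt interpolation for the potential term.} Decompose $W_\varepsilon = \sum_{k\ge 0} \chi_k W_\varepsilon$ with $\chi_k$ the indicator of the annulus $\{2^k \le \lVert x\rVert < 2^{k+1}\}$ (and the unit disk for $k=0$), so that $\lVert \chi_k W_\varepsilon\rVert_\infty \le C 2^{-k\varepsilon}$ and $\lvert\operatorname{supp}\chi_k\rvert\le C 2^{2k}$. Magnetic-translation invariance forces the diagonal $\operatorname{iker}(P_l)(x,x)$ to equal $B_0/(2\pi)$, and hence yields the Hilbert--Schmidt identity
\[
\lVert \chi_k M_{I,\zeta}\rVert_2^2 = \frac{B_0 \lvert\operatorname{supp}\chi_k\rvert}{2\pi} \sum_{l\in I} \frac{1}{\lvert B_0(2l+1)-\zeta\rvert^2} \le C\, 2^{2k}.
\]
Combining $\lVert \chi_k W_\varepsilon M_{I,\zeta}\rVert_2 \le C 2^{k(1-\varepsilon)}$ and $\lVert \chi_k W_\varepsilon M_{I,\zeta}\rVert_\infty \le C 2^{-k\varepsilon}$ via \emph{Hölder II} between $S_2$ and $S_\infty$ produces $\lVert \chi_k W_\varepsilon M_{I,\zeta}\rVert_{4n_0} \le C\, 2^{k(1/(2n_0)-\varepsilon)}$. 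Because $1/(2n_0) < \varepsilon$ by the definition \eqref{def n0} of $n_0$, summing over $k \ge 0$ by the triangle inequality proves $W_\varepsilon M_{I,\zeta} \in S_{4n_0}$ with norm locally bounded in $\zeta$.

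\textbf{The covariant-derivative term and the main obstacle.} The same HS approach applied to $\chi_k D M_{I,\zeta}$ breaks down: the analogous diagonal identity $\operatorname{iker}(DP_l D^*)(x,x) = B_0(2l+1) \cdot B_0/(2\pi)$ makes the $l$-series $\sum_l B_0(2l+1)/\lvert B_0(2l+1)-\zeta\rvert^2$ diverge like $\sum 1/l$, so the kernel of $DM_{I,\zeta}$ is not row- or column-square-integrable and no HS bound is available. To sidestep this, I would move $D$ past $M_{I,\zeta}$ via the canonical commutator $[D_j, H_0] = 2iB_0 \epsilon_{jk} D_k$ (itself a consequence of $[D_1, D_2] = iB_0$), applied to the spectral representation of $M_{I,\zeta}$, which gives
\[
D M_{I,\zeta} = M_{I,\zeta} D + 2iB_0\, M_{I,\zeta}\, J D M_{I,\zeta}.
\]
Multiplying by $A_\varepsilon$ on the left, the second summand factors as $(A_\varepsilon M_{I,\zeta}) \cdot (JDM_{I,\zeta})$, that is, an $S_{4n_0}$ operator (by the preceding paragraph with $A_\varepsilon$ in place of $W_\varepsilon$) composed with a bounded one, hence is in $S_{4n_0}$. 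The main obstacle is the first summand $(A_\varepsilon M_{I,\zeta}) \cdot D$: having an unbounded $D$ on the right of a Schatten operator does not automatically preserve the Schatten class, even though $A_\varepsilon M_{I,\zeta} \in S_{4n_0}$ and $M_{I,\zeta} D = (D M_{I,\bar\zeta})^*$ is bounded. I expect the proof, deferred to \autoref{last section}, to resolve this either by a further commutator manipulation that reduces $A_\varepsilon M_{I,\zeta} D$ once more to factors of the form $A_\varepsilon \cdot g(H_0)$ amenable to the dyadic/HS argument above, or — closer to the strategy advertised in the introduction — by the Gramsch-type Sobolev-embedding-to-Schatten framework of \autoref{Sob emb Schatten2} applied directly to the smooth, Gaussian-decaying kernels of the Landau projections $P_l$ and their first $x$-derivatives.
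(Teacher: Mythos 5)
There is a genuine gap: the covariant-derivative term $2A_\varepsilon\cdot(-i\nabla-A_0)M_{I,\zeta}$, which is the dominant part of $H_\varepsilon M_{I,\zeta}$, is never shown to lie in the $4n_0$-Schatten class. Your dyadic Hilbert--Schmidt/interpolation argument for $W_\varepsilon M_{I,\zeta}$ is correct (the diagonal identity for $p_l$ and the H\"older~II step with $1/(2n_0)<\varepsilon$ go through), and you correctly diagnose why the same $S_2$ computation fails for $(-i\nabla-A_0)M_{I,\zeta}$ (the $l$-series diverges; equivalently the kernel has a $b_1$ singularity, which is not locally square integrable in two dimensions). But the commutator rearrangement does not resolve anything: it merely moves the derivative to the other side, and $A_\varepsilon M_{I,\zeta}(-i\nabla-A_0)$ is exactly as problematic as the term you started with, since neither factorization places the unbounded factor next to a Schatten one. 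You acknowledge this and defer the resolution to an expectation about \autoref{last section}, so as written the proof of the lemma is incomplete precisely at its essential point. (Incidentally, your own scheme could be rescued without commutators: bound $\lVert \chi_k(-i\nabla-A_0)M_{I,\zeta}\rVert_4$ instead of the $S_2$ norm, using that the kernel of $(-i\nabla-A_0)M_{I,\zeta}M_{I,\bar\zeta}(-i\nabla-A_0)^*$ has only a logarithmic diagonal singularity, and then interpolate between $S_4$ and $S_\infty$; the dyadic sum again converges because $\varepsilon>1/(2n_0)$. But this step is not in your proposal.)

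For comparison, the paper proves the lemma (in the more general form of \autoref{H eps M kernel}) by a purely kernel-based route: \autoref{1.resolvent bound} gives the pointwise bound $F(\zeta/B_0)\bigl(b_1(x,y)+\exp(-B_0\lVert x-y\rVert^2/8)\bigr)$ for the kernels of $M_{I,\zeta}$ and $(-i\nabla-A_0)M_{I,\zeta}$, tameness supplies the weight $(1+\lVert x\rVert)^{-\varepsilon}$, and then \autoref{iterated comp kernel} is applied to the $2n_0$-fold alternating product $\bigl(TT^*\bigr)^{n_0}$ with $T=H_\varepsilon M_{I,\zeta}$: each composition reduces the diagonal singularity (exponents add and drop by $2$) and accumulates the decay $(1+\lVert x\rVert)^{-2n_0\varepsilon}$ with $2n_0\varepsilon>1$, so $(TT^*)^{n_0}$ has a square-integrable kernel and $\lVert T\rVert_{4n_0}^{4n_0}=\lVert (TT^*)^{n_0}\rVert_2^2<\infty$, with the bound locally uniform in $\zeta$. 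This iterated-composition device is exactly the ingredient your argument is missing for the derivative term.
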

As $p$-Schatten class operators are compact, we now know that $H_\varepsilon$ is relatively $H_0$-compact. This implies 
\begin{corollary} \label{ess spectrum}
The essential spectrum of $H$ agrees with the essential spectrum of $H_0$ which is ${B_0}(2\mathbb N+1)$.
\end{corollary}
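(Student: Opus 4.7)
The strategy is to recognize this as a direct application of Weyl's theorem on the invariance of the essential spectrum under relatively compact perturbations. The entire input has already been provided by \autoref{4-Schatten}: taking $I = \mathbb N$ and any $\zeta \in \mathbb C \setminus \sigma(H_0)$, the operator
\begin{align}
H_\varepsilon M_{\mathbb N,\zeta} = H_\varepsilon (H_0 - \zeta)^{-1}
\end{align}
lies in the $4n_0$-Schatten class and is, in particular, compact. This is precisely the statement that $H_\varepsilon$ is $H_0$-compact.

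The first task would be to pin down the self-adjointness and the domain of $H$. Since $H_\varepsilon (H_0+1)^{-1}$ is compact and $(H_0+1)(H_0+t)^{-1}$ converges strongly to $0$ as $t \to \infty$ while being norm-bounded by $1$ (recall $H_0 \ge B_0 > 0$), the product
\begin{align}
H_\varepsilon (H_0+t)^{-1} = H_\varepsilon(H_0+1)^{-1}\,(H_0+1)(H_0+t)^{-1}
\end{align}
converges to $0$ in operator norm. This is the standard criterion for $H_\varepsilon$ to be infinitesimally $H_0$-bounded. Because $H_\varepsilon = 2 A_\varepsilon \cdot (-i\nabla - A_0) + W_\varepsilon$ is symmetric on $D(H_0)$ (the first-order piece is symmetric thanks to the Coulomb gauge identity $\nabla \cdot A_\varepsilon = 0$ from \autoref{dA def}, and $W_\varepsilon$ is real-valued), the Kato--Rellich theorem yields self-adjointness of $H = H_0 + H_\varepsilon$ on $D(H) = D(H_0)$.

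Once this is established, Weyl's theorem applies: for self-adjoint $H_0$ and symmetric, $H_0$-compact $H_\varepsilon$, one has $\sigma_{\mathrm{ess}}(H_0 + H_\varepsilon) = \sigma_{\mathrm{ess}}(H_0)$. Equivalently, one can verify compactness of the resolvent difference
\begin{align}
(H - \zeta)^{-1} - (H_0 - \zeta)^{-1} = -(H-\zeta)^{-1} H_\varepsilon (H_0-\zeta)^{-1},
\end{align}
which is compact as a product of a bounded and a compact operator. Combined with the classical fact that $\sigma(H_0) = \sigma_{\mathrm{ess}}(H_0) = B_0(2\mathbb N + 1)$ — every Landau level is an eigenvalue of infinite multiplicity — this gives the claim.

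Essentially there is no real obstacle here beyond bookkeeping: the $4n_0$-Schatten bound in \autoref{4-Schatten} does all of the genuine work, and everything else is standard perturbation theory. The only point where a touch of care is needed is verifying infinitesimal boundedness from compactness of $H_\varepsilon (H_0+1)^{-1}$, which is the short strong-operator-limit argument sketched above.
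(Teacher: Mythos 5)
Your proposal is correct and follows the same route as the paper: the paper deduces the corollary directly from \autoref{4-Schatten} by noting that Schatten-class operators are compact, so $H_\varepsilon$ is relatively $H_0$-compact, and then invokes the stability of the essential spectrum under relatively compact perturbations. Your additional Kato--Rellich/infinitesimal-boundedness argument just makes explicit the self-adjointness and domain statement that the paper asserts without detail.
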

\begin{remark}
The statement is also true if $V=0$ and $B$ is smooth and converges to $B_0$ as $\lVert x \rVert \to \infty$ (at any rate), see \cite{iwatsuka1983}. They state smoothness of $B$ as a condition, but I think it is not required. However, their algebraic proof does not imply that the eigenspaces of $H_0$ and $H$ are at all related.
\end{remark}

As $\sigma(H_0)$ is discrete, this implies, that $\sigma (H) = \overline{\sigma_p (H)} $ and that the continuous part of the spectrum of $H$ vanishes. 
We continue with the Riesz integral representation.
\begin{fact} \label{path integral operator}
For any path $\Gamma$ in $\mathbb C$ that intersects $\mathbb R$ in exactly two points $\lambda_1<\lambda_2$, does not intersect $\sigma(H) \subset \mathbb R$ and has winding number $+1$ around $(\lambda_1+\lambda_2)/2$, we have the identity
\begin{align}
-\frac 1 {2\pi i}\int_\Gamma \frac{ d\zeta} {H-\zeta} = 1_{\lambda_1<E<\lambda_2}(H).
\end{align}
\end{fact}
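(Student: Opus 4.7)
The plan is a direct application of the holomorphic functional calculus for self-adjoint operators: I would reduce the identity to a scalar contour integral via the spectral theorem, interchange the two integrals by Fubini, and then evaluate the inner contour integral by Cauchy's residue formula.

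First I would note that, because $\Gamma$ is compact and disjoint from the closed set $\sigma(H)$, the distance $\delta \coloneqq \operatorname{dist}(\Gamma, \sigma(H))$ is strictly positive and $\lVert (H-\zeta)^{-1}\rVert \le \delta^{-1}$ uniformly for $\zeta \in \Gamma$. Hence the left-hand side is a well-defined bounded operator, given by a norm-convergent Bochner integral. Testing against $\psi, \phi \in \Lp^2(\mathbb R^2)$ and writing $\mu_{\psi,\phi} \coloneqq \langle E_H(\blank)\psi, \phi\rangle$, which is a complex Borel measure on $\sigma(H)$ of total variation at most $\lVert \psi\rVert \lVert \phi\rVert$, the spectral theorem gives
\begin{align}
\left\langle -\frac{1}{2\pi i} \int_\Gamma (H-\zeta)^{-1}\,d\zeta\, \psi,\, \phi\right\rangle = -\frac{1}{2\pi i} \int_\Gamma \int_{\sigma(H)} \frac{1}{\lambda - \zeta}\, d\mu_{\psi,\phi}(\lambda)\, d\zeta.
\end{align}

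Second, since the integrand $(\zeta,\lambda) \mapsto (\lambda-\zeta)^{-1}$ is bounded in modulus by $\delta^{-1}$ on $\Gamma \times \sigma(H)$, the arc length of $\Gamma$ is finite, and $\mu_{\psi,\phi}$ has finite total variation, Fubini's theorem allows the order of integration to be swapped. For each $\lambda \in \sigma(H)$ I would then evaluate the inner integral by the residue theorem. The hypothesis that $\Gamma \cap \mathbb R = \{\lambda_1, \lambda_2\}$ together with the winding number $+1$ at the midpoint forces the open interval $(\lambda_1,\lambda_2)$ to lie in the bounded component of $\mathbb C \setminus \Gamma$ with winding number $+1$, while $\mathbb R \setminus [\lambda_1, \lambda_2]$ lies in the unbounded component with winding number $0$. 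Consequently, $-\frac{1}{2\pi i} \int_\Gamma (\lambda - \zeta)^{-1}\, d\zeta$ equals $1$ for $\lambda \in (\lambda_1,\lambda_2)$ and $0$ for $\lambda \notin [\lambda_1, \lambda_2]$, and since $\Gamma$ avoids $\sigma(H)$, no spectral value sits on $\{\lambda_1, \lambda_2\}$. Hence the right-hand side above equals $\int 1_{(\lambda_1, \lambda_2)}(\lambda)\, d\mu_{\psi,\phi}(\lambda) = \langle 1_{\lambda_1 < E < \lambda_2}(H)\, \psi, \phi\rangle$, and letting $\psi, \phi$ range over $\Lp^2(\mathbb R^2)$ gives the claimed operator identity.

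The only step demanding more than bookkeeping is the topological one, namely checking that every spectral value in $(\lambda_1, \lambda_2)$ is enclosed by $\Gamma$ with winding number $+1$ while every other spectral value lies in the exterior. This reduces to the Jordan-curve-type dichotomy applied to the interaction of $\Gamma$ with the real axis and uses nothing beyond the two hypotheses on $\Gamma$. All other steps are routine applications of the spectral theorem, dominated convergence/Fubini, and residue calculus, so I do not anticipate a genuine obstacle.
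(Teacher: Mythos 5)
Your proposal is correct and complete. Note that the paper offers no proof of this statement at all: it is recorded as a \emph{Fact} (the standard Riesz integral representation of a spectral projection) and used without argument, so there is nothing to compare against; your verification via the spectral theorem — testing against vectors $\psi,\phi$, using the uniform resolvent bound $\operatorname{dist}(\Gamma,\sigma(H))^{-1}$ and the finite total variation of $\mu_{\psi,\phi}$ to justify Fubini, and then evaluating the scalar contour integral — is exactly the standard way to supply one. One small simplification: the topological step needs no ``Jordan-curve-type dichotomy''. The winding number $n(\Gamma,\lambda)=\frac{1}{2\pi i}\int_\Gamma \frac{d\zeta}{\zeta-\lambda}$ is locally constant on $\mathbb C\setminus\Gamma$; since $\Gamma\cap\mathbb R=\{\lambda_1,\lambda_2\}$, each of the intervals $(-\infty,\lambda_1)$, $(\lambda_1,\lambda_2)$, $(\lambda_2,\infty)$ is a connected subset of $\mathbb C\setminus\Gamma$, so $n(\Gamma,\cdot)$ is constant on each; it equals $+1$ at the midpoint, hence on all of $(\lambda_1,\lambda_2)$, and it vanishes for $\lvert\lambda\rvert$ large (as $\Gamma$ is bounded), hence on both unbounded intervals. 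This holds whether or not $\Gamma$ is simple. Finally, your observation that no spectral value sits at $\lambda_1$ or $\lambda_2$ is automatic, since these points lie on $\Gamma$ and $\Gamma$ avoids $\sigma(H)$, so $1_{(\lambda_1,\lambda_2)}(H)=1_{[\lambda_1,\lambda_2]}(H)$ and the endpoint ambiguity is immaterial.
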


With the resolvent identity, we can write
\begin{align}
\frac 1 {H-\zeta} =& \frac  1 {H_0-\zeta} - \frac 1 {H-\zeta} H_\varepsilon \frac 1 {H_0-\zeta} \\
=& \frac  1 {H_0-\zeta} - \frac 1 {H_0-\zeta} H_\varepsilon \frac 1 {H_0-\zeta} + \frac 1 {H_0-\zeta} H_\varepsilon\frac 1 {H-\zeta} H_\varepsilon \frac 1 {H_0-\zeta}\, .
\end{align}

By induction, this leads to 
\begin{corollary} \label{iterated resolvent identity}
For any $n \in \mathbb Z^+, \zeta \not \in \sigma(H)\cup \sigma(H_0)$, we have 
\begin{align}
\frac 1 {H-\zeta} = \sum_{k=0}^{2n-1} \frac {(-1)^k} {H_0-\zeta} \left( H_\varepsilon  \frac 1 {H_0-\zeta} \right)^k + \left(   \frac 1 {H_0-\zeta}H_\varepsilon \right)^n \frac 1 {H-\zeta} \left( H_\varepsilon  \frac 1 {H_0-\zeta} \right)^n,
\end{align}
where $H_\varepsilon=H-H_0$, as in \eqref{def Heps}.
\end{corollary}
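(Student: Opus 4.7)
The plan is to prove the identity by induction on $n\in\mathbb Z^+$. First I would observe that, for $\zeta\notin\sigma(H)\cup\sigma(H_0)$, the operator $H_\varepsilon(H_0-\zeta)^{-1}=H_\varepsilon M_{\mathbb N,\zeta}$ is bounded by \autoref{4-Schatten}, so every product appearing in the claimed formula makes sense as a bounded operator on $\Lp^2(\mathbb R^2)$; the algebraic manipulations below are therefore legitimate.

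For the base case $n=1$, the displayed computation immediately preceding the statement (two applications of the second-resolvent identity $(H-\zeta)^{-1}-(H_0-\zeta)^{-1}=-(H-\zeta)^{-1}H_\varepsilon(H_0-\zeta)^{-1}=-(H_0-\zeta)^{-1}H_\varepsilon(H-\zeta)^{-1}$) already gives exactly the claim with $n=1$.

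For the inductive step, write
\begin{align}
R_n \coloneqq \left(\frac{1}{H_0-\zeta}H_\varepsilon\right)^{n}\frac{1}{H-\zeta}\left(H_\varepsilon\frac{1}{H_0-\zeta}\right)^{n}
\end{align}
and apply the two forms of the resolvent identity successively to the middle factor $1/(H-\zeta)$ inside $R_n$: first $\frac{1}{H-\zeta}=\frac{1}{H_0-\zeta}-\frac{1}{H_0-\zeta}H_\varepsilon\frac{1}{H-\zeta}$ to pull out the next even-index term, then $\frac{1}{H-\zeta}=\frac{1}{H_0-\zeta}-\frac{1}{H-\zeta}H_\varepsilon\frac{1}{H_0-\zeta}$ to pull out the odd-index term. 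Collecting factors, this produces the identity
\begin{align}
R_n = \frac{1}{H_0-\zeta}\Bigl(H_\varepsilon\frac{1}{H_0-\zeta}\Bigr)^{2n} - \frac{1}{H_0-\zeta}\Bigl(H_\varepsilon\frac{1}{H_0-\zeta}\Bigr)^{2n+1} + R_{n+1}.
\end{align}
Since $(-1)^{2n}=+1$ and $(-1)^{2n+1}=-1$, substituting into the induction hypothesis extends the finite sum from $k=0,\dots,2n-1$ to $k=0,\dots,2n+1$ and replaces $R_n$ by $R_{n+1}$, which is exactly the statement for $n+1$.

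I do not expect any serious obstacle. The only point requiring a word is the boundedness of the intermediate products; but this is already furnished by \autoref{4-Schatten} (more precisely, by the fact that $H_\varepsilon(H_0-\zeta)^{-1}$ is Schatten class, hence bounded, and that its adjoint $(H_0-\overline\zeta)^{-1}H_\varepsilon$ is therefore bounded too), so the induction is purely algebraic. No statement about the Schatten norms of the remainder $R_n$ is needed for the corollary itself — that will be exploited separately in the next section when estimating the individual terms of the expansion.
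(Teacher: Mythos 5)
Your proposal is correct and follows essentially the same route as the paper, which simply notes that the two-step resolvent identity displayed before the corollary yields the formula ``by induction''; your inductive step, expanding the middle factor $\frac{1}{H-\zeta}$ of $R_n$ to produce the $k=2n$ and $k=2n+1$ terms plus $R_{n+1}$, is exactly the intended argument. The boundedness remark via \autoref{4-Schatten} (and the fact that $H$ and $H_0$ share their domain, so the second resolvent identity applies) is a fine, non-circular justification of the algebra.
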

\bigskip
For the summands in \autoref{iterated resolvent identity} except the last summand, we can resolve the path integral over some paths.
\begin{lemma} \label{path integral resolve}
Let $l,k \in \mathbb N$ and  $\Gamma$ be the path along the circle $\partial D_{B_0}({B_0}(2l+1))$ that rotates in positive direction. Then we have
\begin{align}
-\frac{1}{2\pi i}\int_\Gamma \frac 1 {H_0-\zeta} \left( H_\varepsilon  \frac 1 {H_0-\zeta} \right)^k d\zeta &= \sum_{m=0}^k (T_l H_\varepsilon)^m P_l (H_\varepsilon T_l)^{k-m} \label{iri eq},
\end{align}
where $H_\varepsilon=H-H_0$, as in \eqref{def Heps}.
\end{lemma}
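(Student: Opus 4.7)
The plan is to prove the identity by inserting the spectral decomposition of $H_0$ into the path integral and computing the residues at $\zeta = \lambda_l$, where $\lambda_l \coloneqq B_0(2l+1)$. Splitting the free resolvent as
\begin{align}
\frac{1}{H_0 - \zeta} = \frac{P_l}{\lambda_l - \zeta} + R(\zeta), \qquad R(\zeta) \coloneqq \sum_{j \in \mathbb N \setminus \{l\}} \frac{P_j}{\lambda_j - \zeta} = M_{\mathbb N \setminus \{l\}, \zeta},
\end{align}
one observes that the neighbouring Landau levels $\lambda_{l \pm 1}$ sit at distance $2B_0$ from $\lambda_l$, so the circle $\Gamma = \partial D_{B_0}(\lambda_l)$ encloses only $\lambda_l$ and the operator-valued function $R(\zeta)$ is operator-norm holomorphic on an open neighbourhood of $\overline{D_{B_0}(\lambda_l)}$ with $R(\lambda_l) = T_l$. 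This is the natural starting point because the singular and regular parts of the resolvent separate cleanly around the enclosed eigenvalue.

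The next step is to substitute this decomposition into each of the $k+1$ factors $(H_0 - \zeta)^{-1}$ in the integrand and expand, obtaining $2^{k+1}$ summands indexed by subsets $S \subseteq \{0, 1, \ldots, k\}$ recording which factors carry the singular piece $P_l/(\lambda_l - \zeta)$ and which carry $R(\zeta)$. Each summand has a pole at $\zeta = \lambda_l$ of order exactly $|S|$. The subset $S = \emptyset$ contributes zero by Cauchy's theorem. For each singleton $S = \{m\}$ with $0 \le m \le k$, the integrand is $\frac{(R(\zeta) H_\varepsilon)^m P_l (H_\varepsilon R(\zeta))^{k-m}}{\lambda_l - \zeta}$, a simple pole whose residue, after combining with the prefactor $-\frac{1}{2 \pi i}$, gives $(T_l H_\varepsilon)^m P_l (H_\varepsilon T_l)^{k-m}$. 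Summing over $m$ produces exactly the right-hand side.

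The main obstacle is dealing with the subsets of size $|S| \ge 2$, whose higher-order poles generate extra residues involving derivatives of $R(\zeta)$ at $\lambda_l$ (for instance $R'(\lambda_l) = T_l^2$). The key structural ingredient here is the orthogonality relation $P_l R(\zeta) = R(\zeta) P_l = 0$, valid identically in $\zeta$ because $P_l P_j = 0$ whenever $j \ne l$; combined with a careful telescoping across subsets of different sizes (or, equivalently, an inductive reformulation in $k$), this should make the net contribution of all $|S| \ge 2$ terms cancel. Once this vanishing is in place, the surviving $|S| = 1$ contributions assemble into the claimed identity. A minor technical point is the justification of termwise integration: since $\Gamma$ is compact, $R(\zeta)$ is norm-continuous on $\Gamma$, and the total number of terms is finite, interchanging the contour integral with the expansion is harmless.
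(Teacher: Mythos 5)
Your singleton computation and the justification of termwise integration are fine, and up to that point your route is essentially the paper's (the paper expands the resolvent completely into eigenprojections after a spectral cutoff and evaluates scalar contour integrals; your splitting into $P_l/(\lambda_l-\zeta)+R(\zeta)$ with $\lambda_l=B_0(2l+1)$ is the same computation, grouped differently). The genuine gap is exactly the step you defer: the contributions with $\lvert S\rvert\ge 2$ do \emph{not} cancel. For such a subset the residue at $\lambda_l$ involves derivatives of $R$ at $\lambda_l$, and these survive: for $k=2$ the three subsets of size two contribute, using $R'(\lambda_l)=T_l^2$, the amount $-\left(P_lH_\varepsilon P_lH_\varepsilon T_l^2+P_lH_\varepsilon T_l^2H_\varepsilon P_l+T_l^2H_\varepsilon P_lH_\varepsilon P_l\right)$. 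The relation $P_lR(\zeta)=R(\zeta)P_l=0$ is powerless here because $H_\varepsilon$ always separates the two copies of $P_l$, and no other subset produces the same operator word, so there is nothing to telescope against; generically $P_lH_\varepsilon P_l\neq0$, so the sum is nonzero. A two-dimensional test makes this concrete: take $H_0=\operatorname{diag}(0,1)$, $H_\varepsilon$ the all-ones $2\times2$ matrix, and $\Gamma$ a small circle around $0$; then the left-hand side of \eqref{iri eq} for $k=2$ equals $\operatorname{diag}(-1,1)$, whereas $\sum_{m=0}^{2}(T_lH_\varepsilon)^mP_l(H_\varepsilon T_l)^{2-m}$ is the matrix with rows $(0,1)$ and $(1,1)$, and the difference is precisely the displayed extra terms.

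So the obstacle you flag is not a technicality but an obstruction: for $k\ge2$ the contour integral equals the stated sum only after adding the Kato-type words containing repeated $P_l$'s and higher powers of $T_l$ (these are the classical terms of the perturbation series for a Riesz projection, cf. Kato, Perturbation Theory, Ch.\ II), and the identity as stated holds only for $k\le1$. Be aware that the paper's own proof makes the same unjustified step: it asserts that the scalar integrals $-\frac1{2\pi i}\oint_\Gamma\prod_{j}\left(\lambda_{\sigma_j}-\zeta\right)^{-1}d\zeta$ vanish whenever $\#\{j\mid\sigma_j=l\}\neq1$, which fails as soon as that multiplicity lies strictly between $1$ and $k+1$ (the higher-order pole at $\lambda_l$ still has a nonzero residue because the remaining factors are non-constant). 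Hence you should not look for a cleverer cancellation; either the right-hand side of the lemma must be augmented by those extra words (which obey the same kernel and Schatten-norm bounds used later, so the subsequent estimates are plausibly unaffected), or the argument must be reorganized. As proposed, your proof cannot be completed; also note, as a minor point, that the boundedness and holomorphy of $H_\varepsilon R(\zeta)$ on a neighbourhood of $\Gamma$ is not automatic for the first-order operator $H_\varepsilon$ and needs the paper's kernel estimates (\autoref{4-Schatten} and \autoref{orth op fam}).
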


\begin{proof} Let $N >2l$ and either $I=\mathbb N$ and $\zeta \in \Gamma$ or $I=\mathbb N \setminus \{l\}$ and $\zeta={B_0}(2l+1)$. We introduce $P_{\le N} \coloneqq\sum_{ n \le N} P_n$ and $P_{>N}\coloneqq 1-P_{\le N}$.
We continue with the identity 
\begin{align}
P_{\le N}M_{I,\zeta}= \sum_{j \in I, j \le N } \frac{P_{j}}{{B_0}(2j+1)-\zeta} \label{single resolvent}.
\end{align}
There is a constant $C$, independent of $N$ and $\zeta$, such that the estimate $\left \lVert {P_{>N}}  M_{I,\zeta} \right \rVert_\infty \le \frac C N$ holds (see \autoref{orth op fam}). Furthermore, by \autoref{4-Schatten} and as the $4n_0$-Schatten norm is an upper bound for the operator norm, we have the estimate $\left \lVert H_\varepsilon M_{I,\zeta}\right \rVert_\infty < C$ with a constant $C$ independent of $\zeta$ (and $N$). We use the telescope sum $b(ab)^k-c(ac)^k=\sum_{k'=0}^k (ba)^{k'}(b-c)  (ac)^{k-k'}$, which holds in any ring, and the triangle inequality to get 
\begin{align}
&\left \lVert M_{I,\zeta} \left( H_\varepsilon M_{I,\zeta}\right)^k- P_{\le N} M_{I,\zeta}\left(H_\varepsilon P_{\le N} M_{I, \zeta} \right)^k \right \rVert_\infty \\
\le & \sum_{k'=0}^{k} \left \lVert \left( M_{I,\zeta} H_\varepsilon \right)^{k'} P_{>N}M_{I,\zeta} \left( H_\varepsilon P_{\le N}M_{I,\zeta}  \right)^{k-k'} \right \rVert_\infty 
\le \frac {C} N \label{pir est},
\end{align}
where $C$ is independent of $N$ and $\zeta$. The second step relies on the submultiplicativity of the norm, and the identity $M_{I,\zeta}P_{\le N}=P_{\le N} M_{I,\zeta}$. Thus, we have 
\begin{align}
&-\frac{1}{2\pi i}\int_\Gamma \frac 1 {H_0-\zeta} \left( H_\varepsilon  \frac 1 {H_0-\zeta} \right)^k d\zeta \\
=&-\frac 1 {2\pi i} \lim_{N \to \infty} \int_\Gamma  P_{\le N} M_{\mathbb N,\zeta} \left( H_\varepsilon P_{\le N} M_{\mathbb N, \zeta} \right)^k d \zeta \\
=&-\frac 1 {2\pi i} \lim_{N \to \infty} \int_\Gamma  \sum_{\sigma \in \{0, \dots, N\}^{k+1}} \frac{P_{\sigma_0} \prod_{j=1}^k H_\varepsilon P_{\sigma_j}} {\prod_{j=0}^k ({B_0}(2\sigma_j+1)-\zeta)} d \zeta \\
=&\lim_{N \to \infty}\sum_{\sigma \in \{0, \dots, N\}^{k+1}}  P_{\sigma_0} \left(\prod_{j=1}^k H_\varepsilon P_{\sigma_j} \right)  \begin{cases}  \prod_{\sigma_j \not = l} \frac 1 {2{B_0}(\sigma_j-l)} &\mathrm {if  }\# \{ j \mid \sigma_j=l\}=1 \\ 0 & \mathrm{else} \end{cases} \\
=&\lim_{N\to \infty}  \sum_{m=0}^k (P_{\le N} T_l H_\varepsilon)^m P_l (H_\varepsilon P_{\le N} T_l)^{k-m} \\
=&\sum_{m=0}^k (T_l H_\varepsilon)^m P_l (H_\varepsilon T_l)^{k-m} .
\end{align}
In the first step, we used that \eqref{pir est} holds uniformly in $\zeta\in \Gamma$ for $I=\mathbb N$. 
In the second step, we inserted \eqref{single resolvent} $k+1$ times and multiplied out all terms in order to get a finite sum. We then exchanged this finite sum with the complex path integral and resolved this complex-valued integral. The fourth step uses \eqref{single resolvent} in reverse. 
The final step follows by \eqref{pir est} for $I=\mathbb N \setminus \{l\}$ and $\zeta={B_0}(2l+1)$. This finishes the proof.
\end{proof}

We will prove the following theorem at the end of \autoref{last section}. 
\begin{theorem} \label{cube norm}
Let $k,l,m, \gamma  \in \mathbb N$  with $k \ge m$. Let $B_\varepsilon,V_\varepsilon$ be $(\gamma,\varepsilon)$ tame and let $1 \ge p> \frac 2 {\gamma +3 }  $. Then there is a constant $C>0$ and a $\lambda >0$, such that for any $R\ge 0$, we have the upper bound 
\begin{align}
\left \lVert 1_{[0,1]^2+x_0} (T_l H_\varepsilon)^m P_l (H_\varepsilon T_l)^{k-m} 1_{D^\complement_R(x_0)}\right \rVert_p \le C\frac{ \exp \left(-\lambda R^2 \right)}{(1+\lVert x_0 \rVert)^{k \varepsilon}},
\end{align}
for any $x_0 \in \mathbb R^2$. 
The constant $C$ depends on $ {B_0},l,k,m,\gamma,p, \varepsilon,B_\varepsilon,V_\varepsilon$, but is independent of $R$ and $x_0$.
\end{theorem}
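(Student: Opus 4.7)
The plan is to reduce the $p$-Schatten bound to a pointwise kernel estimate with enough $x,y$-derivatives, and then invoke the Sobolev embedding into $S_p$ promised in \autoref{Sob emb Schatten2}: since $p>2/(\gamma+3)$, it is enough to control the integral kernel $K(x,y)$ of the full operator together with all its partial derivatives up to order $\gamma+1$ in each variable, uniformly on the support $\{x\in [0,1]^2+x_0\}\times D_R^\complement(x_0)$. This is a standard ``kernel $\Rightarrow$ Schatten'' reduction; the real work is the kernel bound itself.

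First I would collect the kernel estimates for the two building blocks. In the Coulomb gauge $A_0=\frac{B_0}{2}Jx$ the projection $P_l$ has the explicit Mehler/Laguerre kernel: a polynomial in $x-y$ times a Gaussian $e^{-B_0\|x-y\|^2/4}$ times a phase. Thus $P_l$ and all of its $x,y$-derivatives satisfy Gaussian off-diagonal bounds with constants depending only on $l,B_0$ and the derivative order. For $T_l$ I would use the corresponding representation of the Landau resolvent at the exceptional energy $B_0(2l+1)$ projected off $P_l$, giving a kernel whose off-diagonal part carries the same Gaussian decay in $\|x-y\|$ and whose near-diagonal singularity is at worst logarithmic, hence locally $L^p$ and compatible with \autoref{Sob emb Schatten2}.

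Next I would expand the kernel of $(T_l H_\varepsilon)^m P_l (H_\varepsilon T_l)^{k-m}$ as a $k$-fold iterated integral in intermediate variables $y_1,\dots,y_k$. Using $H_\varepsilon=2A_\varepsilon\cdot(-i\nabla-A_0)+W_\varepsilon$ from \autoref{Ansatz}, at each intermediate step I would integrate the first-order $\nabla_{y_j}$ by parts onto the adjacent $T_l$ or $P_l$ kernel. Each $H_\varepsilon$-insertion then contributes a multiplicative factor bounded by a constant times $(1+\|y_j\|)^{-\varepsilon}$, thanks to the $(\gamma,\varepsilon)$-tame assumption together with \autoref{dA def} for $A_\varepsilon$. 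Because the chain of Gaussian factors from the adjacent $T_l$ and $P_l$ kernels forces every $y_j$ to lie at bounded distance from the left endpoint $x\in[0,1]^2+x_0$, the product $\prod_{j=1}^k(1+\|y_j\|)^{-\varepsilon}$ is dominated by $C(1+\|x_0\|)^{-k\varepsilon}$ after absorbing the fluctuations into the Gaussians via the inequality
\begin{align}
e^{-c\|y_j-x\|^2}(1+\|y_j\|)^{-\varepsilon}\le C(1+\|x\|)^{-\varepsilon}e^{-c'\|y_j-x\|^2}.
\end{align}
Splitting the residual Gaussian $e^{-c\|x-y\|^2}$ into a piece $e^{-c\|x-y\|^2/(k+2)}\le e^{-\lambda R^2}$ (using $\|x-y\|\ge R$ on the support) and a piece that still enforces integrability of the remaining $y_j$-integrals produces the announced $\exp(-\lambda R^2)$ factor.

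The same reasoning applied to all $x,y$-derivatives of $K$ of order at most $\gamma+1$ gives the uniform bound required by \autoref{Sob emb Schatten2}: every derivative either lands on a Gaussian kernel (producing at most a polynomial prefactor against the same Gaussian) or on one of the coefficients $A_\varepsilon, W_\varepsilon$, which is exactly what the $(\gamma,\varepsilon)$-tame hypothesis accommodates. The main technical obstacle I anticipate is bookkeeping the previous step cleanly: namely, organising the $k$ integrations so that the logarithmic near-diagonal singularity of the $T_l$-kernels (and of their derivatives) can be absorbed locally, while the $k$ weights $(1+\|y_j\|)^{-\varepsilon}$ combine to give \emph{exactly} the power $(1+\|x_0\|)^{-k\varepsilon}$ rather than any weaker one, and the full Gaussian in $R$ survives all the successive convolutions. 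A convenient way to carry this out is a Schur-type argument at each intermediate convolution, using the Gaussian off-diagonal decay of $P_l$ to localise $y_j$ near $x_0$ before extracting the decay factor.
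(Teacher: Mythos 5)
Your reduction to kernel bounds plus \autoref{Sob emb Schatten2}, and your mechanisms for the two decay factors (absorbing the weights $(1+\lVert y_j\rVert)^{-\varepsilon}$ into the Gaussians via the log-Lipschitz inequality, and splitting off a Gaussian factor $e^{-\lambda R^2}$), do match the paper's strategy (\autoref{iterated comp kernel}, \autoref{hilfslemma}). But the core of the paper's proof — the handling of the $\gamma+2$ derivatives demanded by the Schatten--Sobolev step in the presence of the magnetic gauge — is missing, and as sketched your argument fails there. First, to reach $p>\frac{2}{\gamma+3}$ via \autoref{Sob emb Schatten2} you need the operator to be Hilbert--Schmidt into $H^{\gamma+2}$ of the cube, i.e.\ $\gamma+2$ derivatives in the $x$-variable, not $\gamma+1$ in each variable. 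More seriously, plain $x$-derivatives of the magnetic kernels are \emph{not} ``a polynomial prefactor against the same Gaussian'': differentiating the phase $\exp(i\tfrac{B_0}{2}\langle x\mid Jy\rangle)$ (equivalently writing $\partial = i(-i\nabla-A_0)+iA_0$) produces factors of size $\lVert A_0(x)\rVert\sim\lVert x_0\rVert$ for $x\in[0,1]^2+x_0$, and these do not depend on $x-y$ and cannot be absorbed by the Gaussian. With up to $\gamma+2$ such derivatives you pick up a parasitic $(1+\lVert x_0\rVert)^{\gamma+2}$, which destroys the claimed decay $(1+\lVert x_0\rVert)^{-k\varepsilon}$ and the summability over cubes needed later in \autoref{penultimate}. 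The paper avoids exactly this by first conjugating with the magnetic translation $U_{x_0}$ of \autoref{U_{x_0}} (so the cube sits at the origin and all polynomial factors are evaluated there), by measuring the $H^{\gamma+2}$-norm with the covariant derivative $-i\nabla+A_0$ through the quasi-isometry of \autoref{H quasi iso}, and by using that this derivative commutes with $P_l$, $T_l$, $M_{I,\zeta}$, so that \autoref{commutator replacement} can push it through the product with the surplus derivatives landing only on the shifted coefficients $H^{(\mu_j)}_{\varepsilon,x_0}$. Your proposal has no substitute for this step; plain derivatives do not commute with $T_l$, and the commutators again involve $A_0$.

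A second, related gap: your claim that the $T_l$-kernel and its derivatives have ``at worst logarithmic'' diagonal singularity is wrong. By \autoref{1.resolvent bound}, one derivative already gives a $\lVert x-y\rVert^{-1}$ singularity and two give $\lVert x-y\rVert^{-2}$, which is not locally integrable in two dimensions, so the kernel calculus you propose cannot place two or more of the required derivatives on the leftmost $T_l$ when $m>0$. This is precisely why the paper needs the operator-norm bound of \autoref{2nd diff resolv} together with the near/far splitting $T_l=T_{l,n}+T_{l,f}$ (treating $1_{[0,1]^2}(-i\nabla+A_0)^{\otimes 2}T_{l,n}$ as a bounded operator rather than via its kernel), before the remaining derivatives are commuted through to the smooth $P_l$-kernel or onto the tame coefficients. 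Your integration-by-parts bookkeeping of the single derivative inside each $H_\varepsilon$ is fine for the intermediate variables, but it does not address how the $\gamma+2$ left-end derivatives are distributed without overloading $T_l$; without the conjugation, the covariant-derivative quasi-isometry, and the commutator lemma (or equivalents), the proof does not go through.
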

\begin{remark} For $k=m=0$, this is Lemma 12 in \cite{Leschke2021}. 
\end{remark}

We will now follow Theorem 13 in \cite{Leschke2021}. 
But we go a slightly different direction with the proof\footnote{We replace a sum by an integral.}.

\begin{theorem} \label{penultimate}
Let $k,l,m,\gamma \in \mathbb N$ with $k \ge m$, let $B_\varepsilon,V_\varepsilon$ be $(\gamma,\varepsilon)$ tame and let $1 \ge p> \frac 2 {\gamma +3 }  $. Then for any $L>1$ we have 
\begin{align}
\left \lVert 1_{L\Lambda}  (T_l H_\varepsilon)^m P_l (H_\varepsilon T_l)^{k-m} 1_{L\Lambda^\complement}\right \rVert_p^p \le CL^{1-p k\varepsilon}.
\end{align}
The constant $C$ depends on $\Lambda,{B_0},l,k,m,\gamma,p, \varepsilon, B_\varepsilon,V_\varepsilon$.
\end{theorem}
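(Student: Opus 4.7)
The plan is to reduce the global estimate to a sum of unit-cube estimates provided by \autoref{cube norm}, and then to convert the resulting sum into a boundary-layer integral (this is the step alluded to by the footnote ``We replace a sum by an integral''). Write $K:=(T_l H_\varepsilon)^m P_l(H_\varepsilon T_l)^{k-m}$ for brevity.

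First I would tile $\mathbb R^2$ by the unit squares $Q_{x_0}:=[0,1]^2+x_0$, $x_0\in\mathbb Z^2$, and set $S_L:=\{x_0\in\mathbb Z^2:Q_{x_0}\cap L\Lambda\neq\emptyset\}$. Since $p\le 1$, the $p$-triangle inequality from \autoref{pnorm facts}, together with the trivial bound $\lVert 1_{Q_{x_0}}1_{L\Lambda}K\,1_{L\Lambda^\complement}\rVert_p\le\lVert 1_{Q_{x_0}}K\,1_{L\Lambda^\complement}\rVert_p$, yields
\begin{align}
\lVert 1_{L\Lambda} K\, 1_{L\Lambda^\complement}\rVert_p^p\le\sum_{x_0\in S_L}\lVert 1_{Q_{x_0}} K\, 1_{L\Lambda^\complement}\rVert_p^p.
\end{align}
For each $x_0\in S_L$ set $R(x_0):=\operatorname{dist}(Q_{x_0},L\Lambda^\complement)$. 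The inclusion $L\Lambda^\complement\subset D_{R(x_0)}^\complement(x_0)$ gives $1_{L\Lambda^\complement}=1_{D_{R(x_0)}^\complement(x_0)}1_{L\Lambda^\complement}$, so \autoref{cube norm} produces
\begin{align}
\lVert 1_{Q_{x_0}} K\, 1_{L\Lambda^\complement}\rVert_p^p\le C\,\frac{\exp(-p\lambda R(x_0)^2)}{(1+\lVert x_0\rVert)^{pk\varepsilon}}.
\end{align}

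Next, because both factors on the right vary by at most a uniform multiplicative constant across any unit cube (using $R(x_0)\ge\operatorname{dist}(x,L\Lambda^\complement)-\sqrt2$ for $x\in Q_{x_0}$, at the cost of slightly shrinking $\lambda$), the discrete sum is dominated by the integral
\begin{align}
\int_{D_1(L\Lambda)}\frac{\exp(-p\lambda'\operatorname{dist}(x,L\Lambda^\complement)^2)}{(1+\lVert x\rVert)^{pk\varepsilon}}\,dx
\end{align}
for some $\lambda'>0$. I would then apply the coarea formula to the Lipschitz function $x\mapsto\operatorname{dist}(x,L\Lambda^\complement)$: the Gaussian in the numerator confines the mass to an $O(1)$-thick neighbourhood of $\partial(L\Lambda)$, and the weight is essentially constant across that layer, so the integral is controlled by $\int_{\partial(L\Lambda)}(1+\lVert y\rVert)^{-pk\varepsilon}\,d\sigma(y)$. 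After the rescaling $y=Lz$, this equals $L\int_{\partial\Lambda}(1+L\lVert z\rVert)^{-pk\varepsilon}\,d\sigma(z)$.

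The remaining, and main, step is to prove that this last integral is $O(L^{-pk\varepsilon})$. If $0\notin\overline{\partial\Lambda}$ this is immediate, since $\lVert z\rVert$ is then uniformly bounded below on $\partial\Lambda$. Otherwise the Lipschitz regularity of $\partial\Lambda$ supplies a local bi-Lipschitz arclength parametrization, so $\lVert z(s)\rVert\gtrsim|s-s_0|$ near a point with $\lVert z\rVert=0$, and the elementary estimate $\int_0^\delta(1+Ls)^{-pk\varepsilon}\,ds\le CL^{-pk\varepsilon}$ (valid for $pk\varepsilon<1$, which covers the application with $k=1$ that feeds into \autoref{p-norm estimate final}) closes the argument and produces the claimed bound $CL^{1-pk\varepsilon}$.
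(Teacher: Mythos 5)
Your overall route is the same as the paper's: tile with unit cubes, apply \autoref{cube norm} on each cube, convert the resulting sum into a boundary-layer integral, and estimate that integral. The two bookkeeping differences are harmless. The paper turns the sum into an integral exactly, by averaging over the grid shift $h_0\in[0,1)^2$ and invoking \autoref{grid integral}, while you use a direct sum-versus-integral comparison; this works, though the comparison domain must be $D_{\sqrt 2}(L\Lambda)$ rather than $D_1(L\Lambda)$, and your coarea/collar-to-boundary step (``the integral is controlled by $\int_{\partial(L\Lambda)}(1+\lVert y\rVert)^{-pk\varepsilon}d\sigma$'') needs a short covering argument for a merely Lipschitz boundary, which the paper sidesteps by staying with area bounds via \autoref{Fubini1} and \autoref{Lipschitz slice}.

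The substantive divergence is the treatment of the weight $(1+\lVert x\rVert)^{-pk\varepsilon}$, and there your argument has a gap relative to the theorem as stated. The paper uses $0\in\Lambda$: off a fixed disc $D_r(0)\subset\Lambda$ the weight is $\le CL^{-pk\varepsilon}$, while $D_r(0)$ sits at distance $\gtrsim rL$ from $L\Lambda^\complement$ and contributes only $O(e^{-cL^2})$; this yields $CL^{1-pk\varepsilon}$ for \emph{every} $k$. Your first case ($0\notin\overline{\partial\Lambda}$) reproduces exactly this, but your second case ($0\in\partial\Lambda$) gives the claimed exponent only when $pk\varepsilon<1$, and the parenthetical justification --- that only $k=1$ feeds into \autoref{p-norm estimate final} --- is incorrect: that proof applies \autoref{penultimate} for every $0\le k\le 2n-1$, with $n$ chosen large via \autoref{cor final summand}, so $pk\varepsilon\ge1$ certainly occurs. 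For such $k$ with $0\in\partial\Lambda$ your boundary integral only produces $O(1)$ or $O(\ln L)$ in total, which happens to suffice for \eqref{pnef 2} but does not prove the inequality $\le CL^{1-pk\varepsilon}$ you set out to show. Since the paper's own proof also relies on $0\in\Lambda$, the clean fix is to drop your second case and argue as the paper does (or state the origin hypothesis); as written, the case $0\in\partial\Lambda$, $pk\varepsilon\ge1$ is not established.
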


\begin{proof}
We define
\begin{align}
T \coloneqq (T_l H_\varepsilon)^m P_l (H_\varepsilon T_l)^{k-m}.
\end{align}

We choose an $h _0\in [0,1)^2$. We will now use the $p$-Schatten norm property we called orthogonality in the first and forth step, and the $p$-triangle inequality in the second step. Hence,
\begin{align}
&\left \lVert 1_{L\Lambda} T 1_{L\Lambda^\complement} \right \rVert_p^p \\
\le & \left  \lVert \sum_{z \in \mathbb Z^2, z +h_0\in D_{ \sqrt 2 }(L \Lambda )}  1_{[0,1)^2+z+h_0} T 1_{L\Lambda^\complement} \right \rVert_p^p \\
\le & \sum_{z \in \mathbb Z^2, z+h_0 \in D_{\sqrt 2 }(L \Lambda )} \left  \lVert   1_{[0,1)^2+z+h_0} T 1_{L\Lambda^\complement} \right \rVert_p^p \\
= & \sum_{z \in \mathbb Z^2, z+h_0 \in D_{\sqrt 2 }(L \Lambda )} \left  \lVert   1_{[0,1]^2+z+h_0} T 1_{L\Lambda^\complement} \right \rVert_p^p \\ 
\le & \sum_{z \in \mathbb Z^2, z+h_0 \in D_{\sqrt 2 }(L \Lambda )} \left  \lVert   1_{[0,1]^2+z+h_0} T 1_{D_{\operatorname{dist}(z+h_0, L \Lambda^\complement)}^\complement(z+h_0)} \right \rVert_p^p \\
\le & \sum_{z \in \mathbb Z^2, z+h_0 \in D_{\sqrt 2 }(L \Lambda )}  C \frac{\exp\left(- p \lambda \operatorname{dist}(z+h_0, L\Lambda^\complement)^2 \right)}{(1+\lVert z+h_0\rVert)^{pk\varepsilon}} \, .
\end{align}
The last step follows by \autoref{cube norm}. The constant $C$ is independent of $z,h_0$. Now we can integrate this upper bound over $h_0 \in[0,1)^2$. This integral can be resolved by \autoref{grid integral}. Hence, we have
\begin{align}
\left \lVert 1_{L\Lambda} T 1_{L\Lambda^\complement} \right \rVert_p^p 
\le & \int_{[0,1)^2} dh_0 \sum_{z \in \mathbb Z^2, z+h_0 \in D_{\sqrt 2 }(L \Lambda )}  C\frac{ \exp\left(- p \lambda \operatorname{dist}(z+h_0, L\Lambda^\complement)^2 \right)}{(1+\lVert z+h_0 \rVert)^{p k\varepsilon}} \\
= &\int_{D_{\sqrt 2} (L \Lambda )}\frac{  \exp\left(- p \lambda \operatorname{dist}(x, L\Lambda^\complement)^2 \right)} {(1+\lVert x \rVert)^{p k\varepsilon}}dx  \\
=& L^2 \int_{D_{\frac {\sqrt 2}  L }(\Lambda) } \frac{\exp\left(- p \lambda L^2 \operatorname{dist}(x', \Lambda^\complement)^2 \right)} {(1+L\lVert x' \rVert)^{p k\varepsilon}}dx' \\
\le & CL^2 \left(  \int_{\Lambda }\frac{ \exp\left(- p \lambda L^2 \operatorname{dist}(x', \Lambda^\complement)^2 \right)} {(1+L \lVert x' \rVert)^{pk\varepsilon}}dx' + L^{-pk\varepsilon}\left \lvert D_{\frac {\sqrt 2} L}(\Lambda) \setminus \Lambda \right \rvert  \right)\label{317} .
\end{align}

The constant $C$ does not depend on $L$. 
We are left to show, that the term behind $CL^2$ is bounded by $CL^{-1-p k\varepsilon}$.

As $L\ge1$, by \eqref {outer near boundary}, we have
\begin{align}
\left \lvert D_{\frac {\sqrt 2} L}(\Lambda) \setminus \Lambda \right \rvert \le \frac C  L ,
\end{align}
because we can ignore the $\frac 1  {L^2}$ part.  The constant depends on $\Lambda$ and this is the desired estimate.

To estimate the remaining integral, we first use \autoref{Fubini1} and then once more \autoref{Lipschitz slice} to estimate the integral over the enumerator. Thus,
\begin{align}
& \int_{\Lambda } \exp\left(- p \lambda L^2 \operatorname{dist}(x', \Lambda^\complement)^2 \right)dx' \\
= & \int_{\mathbb R}  {p \lambda L^2 h}    \exp\left(- p \lambda L^2 h^2 \right) \left \lvert  \{ x' \in \Lambda \mid  \operatorname{dist}(x', \Lambda^\complement) \le h \} \right \rvert dh \\
\le& \int_0^\infty  {p \lambda L^2 h}  \exp\left(- p \lambda L^2 h^2 \right)  Ch dh \\
=& \int_0^\infty C \exp \left( - (h')^2 \right) (h')^2   \frac{ dh'}{ L } \\
=&\frac C L \, .
\end{align}
In the second to last step, we used the substitution $(h')^2= p \lambda L^2 h^2$. The constant $C$ depends on $p, \lambda$ and in turn on $p,l,k,m,\gamma,{B_0}$ and the decay of $B_\varepsilon,V_\varepsilon$.

To deal with the denominator in \eqref{317}, we use $0 \in \Lambda$. Hence there is an $r>0$, such that $D_{2r}(0) \subset \Lambda$. For the integral over $\Lambda\cap D_r(0)^\complement$, we can bound the denominator from below by $CL^{pk\varepsilon}$ and use the integral estimate above for the enumerator. For the integral over $D_r(0)$ we estimate the enumerator by $Ce^{-L}$ and the denominator by $1$. This finishes the proof.

\end{proof}
Now, we need to consider the final summand in \autoref{iterated resolvent identity}. For that, we need the following theorem, which will be proven in \autoref{last section}.
\begin{theorem} \label{theorem final summand}
Let $\gamma \in \mathbb N,B_\varepsilon,V_\varepsilon$ be $(\gamma,\varepsilon)$ tame, $\Gamma$ be a (finite-length) path in $\mathbb C \setminus \sigma(H)$, $\nu>0$ and let $1 \ge p> \frac 2 {\gamma +3 }  $. Then there is an $n\in \mathbb N$ and a $C>0$, such that we have the following upper bound for any $x_0 \in \mathbb R^2$ and $L>1$: 
\begin{align}
\left \lVert \int_\Gamma1_{[0,1]^2+x_0} (M_{\mathbb N, \zeta} H_\varepsilon)^n \frac 1  {H- \zeta} (H_\varepsilon M_{ \mathbb N, \zeta})^n 1_{L \Lambda^\complement}d \zeta \right \rVert_p \le C (1+\lVert x_0 \rVert)^{\gamma} L^{-\nu}.
\end{align}
The constant $C$ depends on $ {B_0},\gamma, \varepsilon,B_\varepsilon,V_\varepsilon$, but is independent of $x_0$.
\end{theorem}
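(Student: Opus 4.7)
Plan: The strategy is to reduce the contour integral bound to a pointwise bound in $\zeta\in\Gamma$, apply H\"older's inequality to split the $2n$-fold product into a left piece, the middle resolvent, and a right piece, and then control the two outer factors separately. The left piece will be bounded using the Sobolev-embedding Schatten result of \autoref{Sob emb Schatten2}, producing the polynomial growth $(1+\lVert x_0\rVert)^\gamma$; the right piece will be bounded using the spatial decay of $H_\varepsilon$ iterated $n$ times, producing the factor $L^{-\nu}$ for sufficiently large $n$.

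Since $\Gamma$ has finite length and $\Gamma\subset\rho(H)$, after slightly deforming $\Gamma$ (keeping it inside $\rho(H)$) I may arrange $\Gamma\cap\sigma(H_0)=\emptyset$, so that both $M_{\mathbb N,\zeta}=(H_0-\zeta)^{-1}$ and $(H-\zeta)^{-1}$ are well-defined and have operator norms uniformly bounded on $\Gamma$. Approximating the integral by Riemann sums and invoking the (quasi-)triangle inequality for $\lVert\cdot\rVert_p$, the claim reduces to the pointwise bound
\[
\sup_{\zeta\in\Gamma}\lVert 1_{[0,1]^2+x_0}(M H_\varepsilon)^n(H-\zeta)^{-1}(H_\varepsilon M)^n 1_{L\Lambda^\complement}\rVert_p \le C(1+\lVert x_0\rVert)^\gamma L^{-\nu},
\]
with $\lvert\Gamma\rvert$ absorbed into $C$. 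Fixing such a $\zeta$, H\"older's inequality with $1/p=1/p_1+1/p_2$ yields
\[
\lVert\cdots\rVert_p \le \lVert 1_{[0,1]^2+x_0}(MH_\varepsilon)^n\rVert_{p_1}\cdot\lVert(H-\zeta)^{-1}\rVert_\infty\cdot\lVert(H_\varepsilon M)^n 1_{L\Lambda^\complement}\rVert_{p_2},
\]
with the middle factor uniformly $O(1)$ on $\Gamma$. For the left factor I invoke the Sobolev-embedding Schatten bound of \autoref{Sob emb Schatten2} (based on Gramsch) to control $\lVert 1_{[0,1]^2+x_0}(MH_\varepsilon)^n\rVert_{p_1}$ by the $W^{\gamma,2}([0,1]^2+x_0)$-norm of the kernel of $(MH_\varepsilon)^n$ in its first variable; after $\gamma$ differentiations the Gaussian decay of $M$ localizes the intermediate positions within $O(1)$ of $x_0$, and the $(\gamma,\varepsilon)$-tameness of $B_\varepsilon,V_\varepsilon$ together with factors of $A_0(y_j)\sim\lVert x_0\rVert$ produce at most polynomial growth $C(1+\lVert x_0\rVert)^\gamma$. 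For the right factor I choose $n$ so large that $n\varepsilon>\nu$ and $4n_0/n\le p_2$, so that iterating \autoref{4-Schatten} and using monotonicity of Schatten norms gives $\lVert(H_\varepsilon M)^n\rVert_{p_2}\le C$; a kernel-decay refinement in the spirit of \autoref{cube norm} extracts an extra $L^{-n\varepsilon}$ from the restriction to $L\Lambda^\complement$, using that (after a harmless translation arranging $0\in\Lambda$, as in the proof of \autoref{penultimate}) $L\Lambda^\complement$ lies outside $D_{cL}(0)$, and Gaussian linking forces the intermediate $y_j$ close to the final position, so each $(1+\lVert y_j\rVert)^{-\varepsilon}$ coming from $A_\varepsilon$ or $W_\varepsilon$ contributes $\le CL^{-\varepsilon}$.

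The main technical obstacle is extracting the full $L^{-n\varepsilon}$ decay inside the $p_2$-Schatten norm (rather than merely in operator norm) for the right factor. The cross term $-2A_\varepsilon(y)\cdot A_0(y)$ inside $H_\varepsilon$ grows like $(1+\lVert y\rVert)^{1-\varepsilon}$ at infinity, and this linear growth must be absorbed by the Gaussian in the adjacent $M$ factor in order to restore a net $(1+\lVert y\rVert)^{-\varepsilon}$ decay at each intermediate position. Carrying this cancellation through $n$ successive factors while remaining inside a Schatten (quasi-)norm seems to require inserting explicit multiplication-by-$(1+\lVert y\rVert)^s$ weights between consecutive $H_\varepsilon M$ pairs and bounding each resulting operator piece via a weighted version of \autoref{4-Schatten}; the bookkeeping is delicate but mirrors the kernel estimates developed for \autoref{cube norm}, with the bounded middle factor $(H-\zeta)^{-1}$ taking the technical place that the rank projection $P_l$ occupied there.
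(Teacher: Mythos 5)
There are genuine gaps. First, your reduction of the contour integral to a pointwise-in-$\zeta$ bound via Riemann sums and the quasi-triangle inequality fails precisely in the regime the theorem is about: for $p<1$ the $p$-triangle inequality applied to a Riemann sum with $N$ terms of size $\lvert\Gamma\rvert/N$ gives $\lVert\int_\Gamma\cdots\rVert_p^p\le N^{1-p}\lvert\Gamma\rvert^p\sup_\zeta\lVert\cdots\rVert_p^p$, which diverges as $N\to\infty$, so the quasi-norm of the integral is not controlled by $\lvert\Gamma\rvert\sup_\zeta\lVert\cdots\rVert_p$. The paper circumvents this by splitting off the only $\zeta$-independent source of the small Schatten exponent, namely the compact Sobolev embedding $1_{[0,1]^2}\colon H^{\gamma+2}([0,1]^2)\to\Lp^2$, which lies in $S_q$ for $q>\frac{2}{\gamma+2}$ (\autoref{Sob emb Schatten2}); what remains under the integral is then measured in the genuine Hilbert--Schmidt norm $S_2(\Lp^2(\mathbb R^2),H^{\gamma+2}([0,1]^2))$, for which the norm may be moved inside the $d\zeta$-integral. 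Second, your exponent bookkeeping does not reach the claimed range $p>\frac{2}{\gamma+3}$: controlling the left factor through $W^{\gamma,2}$-smoothness of the kernel only buys a Schatten index $>\frac 2\gamma$ from Gramsch, hence at best $p>\frac{2}{\gamma+1}$ (and nothing at all for $\gamma=0$, which is admitted). The missing ingredient is that the resolvent $M_{\mathbb N,\zeta}$ absorbs two additional covariant derivatives (\autoref{2nd diff resolv}), so the relevant smoothness order is $\gamma+2$, and that the derivatives must be the magnetic ones $-i\nabla+A_0$ (which commute with $M_{\mathbb N,\zeta}$), used after conjugation by the magnetic translation $U_{x_0}$ and converted to the $H^{\gamma+2}([0,1]^2)$-norm by the quasi-isometry $D_{\gamma+2}$ of \autoref{H quasi iso}; this also keeps the $x_0$-growth at the stated polynomial order via \autoref{commutator replacement}, whereas plain derivatives on the translated cube generate $A_0$-factors of size $\lVert x_0\rVert$ at every order.

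Concerning your declared ``main technical obstacle'': it is not actually present if one keeps $H_\varepsilon$ in the form $2A_\varepsilon\cdot(-i\nabla-A_0)+A_\varepsilon^2+V_\varepsilon$ used throughout the paper, since the linear growth you attribute to $A_\varepsilon\cdot A_0$ is packaged into the covariant derivative, whose action on the resolvent kernel is bounded (\autoref{1.resolvent bound}, \autoref{H eps M kernel}); no weighted version of \autoref{4-Schatten} is needed. For the right factor the paper simply takes $p_2=2$, bounds the kernel of $(H_\varepsilon M_{\mathbb N,\zeta})^n$ by $C\exp(-\lambda\lVert x-y\rVert^2)(1+\lVert y\rVert)^{-n\varepsilon}$ via \autoref{iterated comp kernel}, chooses $n\varepsilon>1+\nu$, and integrates over $y\in L\Lambda^\complement$ to get $L^{-\nu}$ in Hilbert--Schmidt norm; your sketch of this step is in the right spirit but is left unresolved as proposed.
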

By the $p$-triangle inequality, the covering of $L \Lambda$ by translated unit boxes, like in the proof of \autoref{penultimate}, and choosing $\nu$ sufficiently large, we arrive at
\begin{corollary} \label{cor final summand}
Let $\gamma \in \mathbb N $, $B_\varepsilon,V_\varepsilon$ be $(\gamma,\varepsilon)$ tame, let $\Gamma$ be a (finite-length) path in $\mathbb C \setminus \sigma(H)$ and let $1 \ge p> \frac 2 {\gamma +3 }  $. Then there is an $n\in \mathbb N$ and a $C>0$, such that  for any $L>1$ we have
\begin{align}
\left \lVert  \int_\Gamma 1_{L\Lambda}   (M_{\mathbb N, \zeta} H_\varepsilon)^n \frac 1 {H- \zeta}  (H_\varepsilon M_{\mathbb N, \zeta} )^n  1_{L\Lambda^\complement}d \zeta \right \rVert_p^p \le C.
\end{align}
The constant $C$ depends on $\Lambda,{B_0},\gamma,p, \varepsilon, B_\varepsilon,V_\varepsilon$.
\end{corollary}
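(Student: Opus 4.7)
The plan is to imitate the proof of \autoref{penultimate}: cover the support of $1_{L\Lambda}$ by translated unit cubes, apply the $p$-triangle inequality together with \autoref{theorem final summand} cube-by-cube, average over a sub-unit shift to convert the lattice sum into an integral, and bound the resulting spatial integral. Set
\begin{align}
T := \int_\Gamma (M_{\mathbb N,\zeta} H_\varepsilon)^n \frac 1 {H-\zeta} (H_\varepsilon M_{\mathbb N,\zeta})^n\,d\zeta,
\end{align}
with $n$ to be selected at the very end.

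First, for any shift $h_0 \in [0,1)^2$, I would use the Orthogonality and $p$-triangle inequality properties of the $p$-Schatten quasi-norm from \autoref{pnorm facts} to decompose $1_{L\Lambda}$ as a sum of $1_{[0,1)^2 + z + h_0}$ over those $z \in \mathbb Z^2$ with $z + h_0 \in D_{\sqrt 2}(L\Lambda)$, obtaining
\begin{align}
\lVert 1_{L\Lambda} T 1_{L\Lambda^\complement} \rVert_p^p \le \sum_{z + h_0 \in D_{\sqrt 2}(L\Lambda)} \lVert 1_{[0,1]^2 + z + h_0}\, T\, 1_{L\Lambda^\complement} \rVert_p^p,
\end{align}
in verbatim analogy with the chain leading up to (3.17) in the proof of \autoref{penultimate}.

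Next, I would apply \autoref{theorem final summand} with $x_0 = z + h_0$ and a value $\nu > 0$ to be fixed below to bound each summand by $C(1+\lVert z + h_0 \rVert)^{p\gamma} L^{-p\nu}$. Integrating over $h_0 \in [0,1)^2$ turns the lattice sum into an honest Lebesgue integral over $D_{\sqrt 2}(L\Lambda)$, so that
\begin{align}
\lVert 1_{L\Lambda} T 1_{L\Lambda^\complement} \rVert_p^p \le C L^{-p\nu}\int_{D_{\sqrt 2}(L\Lambda)} (1 + \lVert x \rVert)^{p\gamma}\,dx \le C' L^{2 + p\gamma - p\nu},
\end{align}
where the last step uses that $\Lambda$ is bounded, so $D_{\sqrt 2}(L\Lambda)$ is contained in a disk of radius $O(L)$ and has area $O(L^2)$.

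Finally, one picks $\nu$ large enough, for instance any $\nu \ge 2/p + \gamma$, so that the exponent $2 + p\gamma - p\nu$ is non-positive, and lets $n$ be the integer provided by \autoref{theorem final summand} for that $\nu$. This yields the required $L$-independent bound. The only point deserving care is the order of quantifier selection: $\nu$ must be fixed first, large enough to swallow both the area factor $L^2$ and the polynomial growth $(1+\lVert x \rVert)^{p\gamma}$ produced by the covering, and only afterwards is $n$ supplied by \autoref{theorem final summand}. The covering/averaging step is otherwise a direct transcription of the argument in \autoref{penultimate} and presents no genuine obstacle; the real work lies entirely inside \autoref{theorem final summand}.
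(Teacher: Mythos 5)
Your proposal is correct and follows essentially the same route the paper intends: the paper derives the corollary exactly by the unit-box covering with the orthogonality/$p$-triangle-inequality argument from the proof of \autoref{penultimate}, applying \autoref{theorem final summand} on each box and choosing $\nu$ large enough (after which $n$ is supplied by that theorem) to absorb the $O(L^{2})$ box count and the $(1+\lVert x_0\rVert)^{p\gamma}$ growth. Your explicit choice $\nu \ge 2/p+\gamma$ and the shift-averaging via \autoref{grid integral} fill in the details in the same way.
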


We can now conclude the 
\begin{proof}[Proof of \autoref{p-norm estimate final}]
We assume that $a,b \not \in \sigma(H)$. 
We begin with a fixed Landau level, meaning we even assume $B_0(2l-1)<a<B_0(2l+1)<b < B_0(2l+3)$ for some $l\in \mathbb N$. We choose $\Gamma$ as a path along the circle through $a,b$ with centre $\frac{a+b}2$. We choose $n \in \mathbb N$, as in \autoref{cor final summand}. Now we use \autoref{iterated resolvent identity}. Hence, for any $\zeta \in \operatorname{im} \Gamma$, we have 
\begin{align}
\frac 1 {H-\zeta} = \sum_{k=0}^{2n-1} (-1)^k\frac 1 {H_0-\zeta} \left( H_\varepsilon  \frac 1 {H_0-\zeta} \right)^k + \left(   \frac 1 {H_0-\zeta}H_\varepsilon \right)^n \frac 1 {H-\zeta} \left( H_\varepsilon  \frac 1 {H_0-\zeta} \right)^n.
\end{align}
The path integral over every summand for $0 \le k \le 2n-1$ can be resolved by \autoref{path integral resolve} and then bounded by \autoref{penultimate}. Hence, we have
\begin{align}
&\left \lVert - \frac 1 {2 \pi i}   \int_\Gamma (-1)^k  1_{L \Lambda}\frac 1 {H_0 - \zeta} \left( H_ \varepsilon \frac 1 {H_0 - \zeta} \right)^k 1_{L \Lambda^\complement}\right \rVert_p^p \\
=& \left \lVert (-1)^k \sum_{m=0}^k  1_{L \Lambda} (T_l H_\varepsilon)^m P_l (H_\varepsilon T_l)^{k-m} 1_{L \Lambda^\complement}\right \rVert_p ^p \\
\le & CL^{1-pk \varepsilon}.
\end{align}
In particular, we realize that $P_l$ is the integral over the summand for $k=0$ and hence this summand is cancelled in \eqref{pnef 2}. \autoref{cor final summand} tells us that the path integral over the final summand is even bounded in the $p$-Schatten norm independently of $L$. Another application of the $p$-triangle inequality finishes the proof for a fixed Landau level.

For every $l \in \mathbb N$, such that $a<{B_0}(2l+1) <b$, we choose a circle path, such that the last one hits $\mathbb R$ at $b$, each two neighbouring paths hit $\mathbb R$ at one common point not in $\sigma(H)$,  the first path hits $\mathbb R$ at $a$ and every circle has a real-valued centre. Then we apply the estimate for a single Landau level and the $p$-triangle inequality. 

If there is no Landau eigenvalue between $a$ and $b$, the associated projections are finite dimensional and will lead to an $O(1)$ term with respect to $L$. This also solves the case, where $a \in \sigma(H)$ or $b \in \sigma(H)$. Thus, it finishes the proof.
\end{proof}

\section{Kernel estimates}\label{penultimate section} 

In this section we establish several properties of the Landau Hamilton operator $H_0$ and the  operators $P_l,M_{I,\zeta}$ and in particular, their integral kernels. At the end of this section, we will also include an important integral bound.

We introduce the Laguerre polynomials and their generating function. For any $l \in \mathbb N$, the Laguerre polynomials $\Ln_l$ is given by
\begin{align}
\Ln_l\colon [0,\infty)\to \mathbb R, \quad t \mapsto \sum_{k=0}^{l} \binom l k \frac{(-1)^k }{k!} t^k.
\end{align}

For any $s \in [0,\infty)$, $-1<t<1$, their generating function is given by
\begin{align}
 \sum_{l\in \mathbb N} t^l \Ln_l (s)= \frac 1 {1-t} \exp\left(\frac{-ts}{1-t} \right).
\end{align}

Let $x,y \in \mathbb R^2$. For $l \in \mathbb N$, we define $p_l$ as the integral  kernel of $P_l$, 
\begin{align}
p_l(x,y)  \coloneqq \frac {B_0} {2 \pi}\exp \left(-\frac{B_0}4 \lVert x -y \rVert^2 + i\frac {B_0} 2 \langle x \mid J y \rangle \right) \Ln_l \left( B_0 \lVert x-y \rVert^2/2 \right). \label{pl kernel}
\end{align}

 Furthermore, for $0<t<1$, we define the operator $Q_t \coloneqq \sum_l t^l P_l$. Its integral kernel is given by
\begin{align}
q_t(x,y) \coloneqq& \sum_l t^l p_l(x,y) \label{q_t exp in t}\\
=&\frac {B_0} {2 \pi(1-t)}\exp \left(-\frac {B_0} 4\lVert x -y \rVert^2 + i\frac {B_0} 2 \langle x \mid J y \rangle - \frac{B_0t}{2-2t}\lVert x-y \rVert^2\right) \\
=&\frac {B_0} {2 \pi(1-t)}\exp \left(-\frac{B_0(1+t)  } {4(1-t)} \lVert x -y \rVert^2 + i\frac {B_0} 2 \langle x \mid J y \rangle \right) .\label{def q_t}
\end{align}
We easily calculate
\begin{align}
\left(-i\nabla_x - \frac {B_0} 2 Jx \right) q_t(x,y) &= \left(\frac{iB_0(1+t)  } {2(1-t)} (x-y)  - \frac {B_0} 2 J(x-y)\right)q_t(x,y) \label{H0^0.5q_t} ,
\end{align}
and
\begin{align}
& \left(-i\nabla_x - \frac {B_0} 2 Jx \right)^{\otimes 2} q_t(x,y) \\
=& \left( \left(\frac{iB_0(1+t)  } {2(1-t)} (x-y)  - \frac {B_0} 2 J(x-y)\right)^{\otimes 2} +\left(\frac{B_0(1+t)  } {2(1-t)}  \begin{pmatrix} 1 &0 \\0 &1 \end{pmatrix} - \frac {B_0} 2 J\right) \right) q_t(x,y). \label{p^2q_t}
\end{align}

\begin{lemma} \label{p_l exp bound}
For any $j \in \mathbb N$, there are $C,a>0$, independent of $l,B_0$, such that for any $x,y \in \mathbb R^2$
\begin{align}
\left \lVert (-i\nabla_x-A_0(x))^{\otimes j}p_l(x,y)\right  \rVert &\le B_0^{1+0.5j} C a^l\exp \left( \frac { -B_0 \lVert x-y \rVert^2} 8\right).
\end{align}
\end{lemma}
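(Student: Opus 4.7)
The plan is to extract $p_l$ from the generating function identity \eqref{q_t exp in t} via Cauchy's formula on a circle $|t|=r$ with $r\in(0,1)$, interchange the contour integral with the differential operator $(-i\nabla_x-A_0(x))^{\otimes j}$, and then bound the resulting integrand uniformly in $t$ on the contour. Concretely, the starting point is
\begin{align}
(-i\nabla_x-A_0(x))^{\otimes j}p_l(x,y) \;=\; \frac{1}{2\pi i}\oint_{|t|=r}\frac{(-i\nabla_x-A_0(x))^{\otimes j}q_t(x,y)}{t^{l+1}}\,dt,
\end{align}
which immediately produces the $a^l$ factor via $|t^{-l-1}|=r^{-l-1}$, so that taking $a=1/r$ at the end will yield the claim. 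The factor $B_0$ and the independence of $C,a$ from $l$ and $B_0$ will be visible from the analysis of the integrand.

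Next I would prove by induction on $j$, starting from the base cases \eqref{H0^0.5q_t} and \eqref{p^2q_t}, the structural identity
\begin{align}
(-i\nabla_x-A_0)^{\otimes j}q_t(x,y) \;=\; q_t(x,y)\cdot \Pi_j(x-y;t,B_0),
\end{align}
where $\Pi_j$ is a tensor-valued polynomial of degree at most $j$ in $(x-y)$. The induction step uses the Leibniz-type decomposition
$(-i\nabla_x-A_0)\bigl(f(x-y)q_t\bigr)=-i(\nabla f)(x-y)q_t+f(x-y)(-i\nabla_x-A_0)q_t$, together with the explicit first-order factor $\tfrac{iB_0(1+t)}{2(1-t)}(x-y)-\tfrac{B_0}{2}J(x-y)$ coming from \eqref{H0^0.5q_t}. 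The key bookkeeping observation is that each application of $(-i\nabla_x-A_0)$ either differentiates the polynomial coefficient (decreasing the degree by $1$ without producing a factor of $B_0$) or multiplies by the above vector (increasing the degree by $1$ at the cost of an $O(B_0)$ factor, with $t$-dependent constant bounded on $|t|=r$). Hence the coefficient of the degree-$k$ part of $\Pi_j$ is bounded, uniformly in $t$ on $|t|=r$, by $C_{j,k}(r)B_0^{(j+k)/2}$ for $0\le k\le j$.

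For the Gaussian control I would use the elementary identity $\mathrm{Re}\tfrac{1+t}{1-t}=\tfrac{1-r^2}{|1-t|^2}\ge \tfrac{1-r}{1+r}$ on $|t|=r$ together with $|1-t|\ge 1-r$ to write
\begin{align}
|q_t(x,y)|\le \frac{B_0}{2\pi(1-r)}\exp\!\Bigl(-\frac{B_0}{4}\cdot\frac{1-r}{1+r}\,\|x-y\|^2\Bigr).
\end{align}
I would then pick $r$ small enough that $\tfrac{1}{4}\cdot\tfrac{1-r}{1+r}>\tfrac{1}{8}$ (for instance $r=1/5$ gives exponent $-B_0\|x-y\|^2/6$) so that the slack permits absorbing the polynomial $\Pi_j$: one splits $e^{-B_0\|x-y\|^2/6}=e^{-B_0\|x-y\|^2/8}\cdot e^{-B_0\|x-y\|^2/24}$ and applies the standard inequality $u^k e^{-cu^2}\le C_{k,c}$ with $u=B_0^{1/2}\|x-y\|$. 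The $B_0^{(j+k)/2}\|x-y\|^k$ term thereby collapses into $C_{j,k}B_0^{j/2}$, which combined with the prefactor $B_0/(2\pi(1-r))$ of $|q_t|$ gives the advertised $B_0^{1+j/2}$ scaling.

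The main obstacle is the second step: keeping clean track, during the induction, of the joint polynomial-in-$(x-y)$ and polynomial-in-$B_0$ structure so that the exponent $B_0^{(j+k)/2}$ is correctly identified and uniform in $t\in\{|t|=r\}$. Once that is in place, plugging the pointwise estimate on $(-i\nabla_x-A_0)^{\otimes j}q_t$ into the Cauchy integral and using $|t^{-l-1}|=r^{-l-1}$ concludes the proof with $a=1/r$ (e.g.\ $a=5$) and a constant $C$ that depends only on $j$ and the absolute numerical choice of $r$, hence is independent of $l$ and of $B_0$.
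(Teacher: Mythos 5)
Your proposal is correct, but it takes a genuinely different route from the paper. The paper works directly with the explicit kernel $p_l = \text{Gaussian} \times \Ln_l$: it bounds the Laguerre polynomial and its derivatives crudely via the sum formula, $\lvert \Ln_l^{(j')}(t)\rvert \le (2/\delta)^l e^{\delta t}$ with $\delta = 1/8$, applies the product rule to distribute the $j$ covariant derivatives between the polynomial and the Gaussian, and absorbs the resulting $e^{B_0\lVert x-y\rVert^2/16}$ loss and the polynomial prefactor into the Gaussian, ending with $a=16$. You instead never touch the Laguerre polynomials: you extract $p_l$ from the closed Gaussian form of the generating kernel $q_t$ by Cauchy's coefficient formula on $\lvert t\rvert = r$, push the covariant derivatives onto $q_t$ (whose derivatives have the clean ``Gaussian times polynomial in $x-y$'' structure seen in \eqref{H0^0.5q_t}--\eqref{p^2q_t}), and track the joint $(x-y)$-degree/$B_0$-power bookkeeping to get the $B_0^{1+j/2}$ scaling, with $a = 1/r$. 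Two small points you should make explicit: the generating function identity is stated in the paper for real $t\in(-1,1)$, so you need the (immediate) analytic continuation to complex $\lvert t\rvert<1$ to apply Cauchy's formula; and the interchange of $(-i\nabla_x - A_0)^{\otimes j}$ with the contour integral deserves a one-line justification (the integrand is jointly smooth in $(t,x)$ on the compact contour, so differentiation under the integral sign is standard). Also note the constraint $\tfrac14\cdot\tfrac{1-r}{1+r} > \tfrac18$ forces $r<1/3$, i.e.\ $a>3$; your choice $r=1/5$, $a=5$ is fine, and the exact value of $a$ is immaterial for the paper's application. What your route buys is a cleaner structural argument (no Laguerre derivative estimates, slightly better $a$); what the paper's buys is a completely elementary, real-variable computation with no contour integration.
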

The norm on the left-hand side is the 2-norm on $\mathbb C^{2^j}$.

\begin{proof}
Using the explicit formula for the Laguerre polynomials, for any $t \ge 0, j' \in \mathbb N, 0< \delta<1 $, we bound the $j'$th differential as follows:
\begin{align}
\lVert \Ln_l^{(j')} (t)\rVert &\le \sum_{k=0}^{l-j'} 2^l  \frac{t^k}{k!} \\
&= \sum_{k=0}^{l-j'} \left( \frac 2 \delta \right)^l  \frac{(\delta t)^k}{k!} \\
&\le \left( \frac 2 \delta \right)^l \exp ( \delta t ) \label{Laguerre bound}.
\end{align}
Each of the $j$ differential operators have to be resolved with the product rule, where we apply the $-i \nabla_x$ to the polynomial, which is resolved by chain rule, and $-i\nabla_x -A_0(x)$  to the exponential. This will always be the exponential times a polynomial expression in $x-y$, taking values in $\mathbb C^{2^j}$.  This leads to the first bound, with a constant $C$ depending only on $j$, as the dependency on $l$ is encoded entirely in the polynomial $L_l$ and its differentials. Thus, we have
\begin{align}
&\left \lVert (-i\nabla_x-A_0(x))^{\otimes j}p_l(x,y)\right  \rVert  \\
\le & C  \sum_ {j'=0}^j \left \lVert \Ln_l^{(j')} \left( \frac {B_0}2\lVert x-y \rVert^2 \right)\right \rVert \left( 1+ \sqrt{B_0} \lVert x-y \rVert \right)^{j} B_0^{1+0.5j} \exp \left(  -\frac {B_0} 4 \lVert x-y \rVert^2 \right) .
\end{align}

By  setting $t= B_0 \lVert x-y \rVert^2/2$ and $\delta = \frac 1 8$ in \eqref{Laguerre bound}, we can finally estimate
\begin{align}
&\left \lVert (-i\nabla_x-A_0(x))^{\otimes j}p_l(x,y)\right  \rVert  \\
\le & C  B_0^{1+0.5j}  \sum_ {j'=0}^j \left \lVert \Ln_l^{(j')} \left( \frac{B_0}2\lVert x-y \rVert^2 \right)\right \rVert \left( 1+ \sqrt{B_0} \lVert x-y \rVert \right)^{j}\exp \left(  -\frac {B_0} 4 \lVert x-y \rVert^2 \right) \\
\le & C  B_0^{1+0.5j} 16^l  \exp \left(\frac {B_0} {16} \lVert x-y \rVert ^2 \right)    \left( 1+ \sqrt{B_0} \lVert x-y \rVert \right)^{j}\exp \left(  -\frac {B_0} 4 \lVert x-y \rVert^2 \right) \\
\le & C  B_0^{1+0.5j} 16^l  \exp \left(\frac {B_0} {8} \lVert x-y \rVert ^2 \right)  \exp \left(  -\frac {B_0} 4 \lVert x-y \rVert^2 \right) \\
\le & C  B_0^{1+0.5j} 16^l  \exp \left(-\frac {B_0} {8} \lVert x-y \rVert ^2 \right).
\end{align}
In the second to last step, we used that polynomials can be bounded by exponentials. The constant $C$ changed, but still only depends on $j$.
\end{proof}

\begin{lemma}  \label{ResolventKernelnew}
Let $I \subset  \mathbb N$ be cofinite, $  \zeta  \in \mathbb C , \zeta \not \in B_0(2I+1)$ and $l_0 \in \mathbb N$, such that $l_0 \ge \max(I^\complement\cup \{\Re\frac{\zeta+B_0}{2B_0}\})$.  Then we have the identity
\begin{align}
B_0 M_{I,\zeta} =& \int_0^1 t^{-\zeta/B_0} \left(Q_{t^2}-\sum_{ l \le l_0}t^{2l} P_l\right) dt+  \sum_{l\in I, l \le l_0} \frac{ P_l}{(2l+1-\zeta/B_0)} .
\end{align}

\end{lemma}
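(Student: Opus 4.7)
The plan is to verify the identity by unpacking the right-hand side directly, using \eqref{q_t exp in t}. First I would write
\begin{align*}
Q_{t^2} - \sum_{l \le l_0} t^{2l} P_l = \sum_{l > l_0} t^{2l} P_l,
\end{align*}
a series that, for each fixed $t \in (0,1)$, converges in operator norm with norm equal to $t^{2(l_0+1)}$ thanks to mutual orthogonality of the $P_l$. Termwise, the elementary scalar integral evaluates to
\begin{align*}
\int_0^1 t^{2l - \zeta/B_0}\, dt = \frac{1}{2l + 1 - \zeta/B_0},
\end{align*}
and integrability at $t=0$ is precisely where the hypothesis enters: $l_0 \ge \Re\tfrac{\zeta+B_0}{2B_0}$ forces $\Re(2l+1-\zeta/B_0) \ge 4$ for every $l > l_0$.

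The main obstacle, and the only step requiring genuine care, is interchanging the integral over $t \in (0,1)$ with the sum over $l > l_0$. For any $\phi \in \Lp^2(\mathbb R^2)$, Bessel's inequality gives the $N$-independent bound $\bigl\| \sum_{l_0 < l \le N} t^{2l} P_l \phi \bigr\| \le t^{2(l_0+1)} \|\phi\|$, which in turn provides the integrable majorant $t^{-\Re\zeta/B_0 + 2(l_0+1)}\|\phi\|$ on $(0,1)$ (integrability follows from the same computation as above). Dominated convergence applied to the partial sums then lets me swap integral and sum in the strong sense, producing
\begin{align*}
\int_0^1 t^{-\zeta/B_0}\left(Q_{t^2}-\sum_{l\le l_0}t^{2l}P_l\right)dt = \sum_{l>l_0}\frac{P_l}{2l+1-\zeta/B_0}
\end{align*}
as an identity of bounded operators on $\Lp^2(\mathbb R^2)$.

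Finally, I would add back the finite correction term from the statement. Because $l_0 \ge \max I^\complement$, every index $l > l_0$ automatically lies in $I$, so the two sums together exhaust $I$ and
\begin{align*}
\sum_{l \in I} \frac{P_l}{2l+1-\zeta/B_0} = B_0 \sum_{l \in I}\frac{P_l}{B_0(2l+1) - \zeta} = B_0 M_{I,\zeta},
\end{align*}
which is the left-hand side. The whole argument is essentially a Mellin-type evaluation of the Landau-level generating function $q_{t^2}$, with the role of $l_0$ being to isolate the Landau levels whose Mellin integrals converge absolutely at the origin from those for which the pole at $t=0$ has to be subtracted by hand.
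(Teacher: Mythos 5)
Your proposal is correct and follows essentially the same route as the paper: expand $Q_{t^2}-\sum_{l\le l_0}t^{2l}P_l=\sum_{l>l_0}t^{2l}P_l$, integrate termwise using $\int_0^1 t^{2l-\zeta/B_0}dt=(2l+1-\zeta/B_0)^{-1}$ (convergent at $t=0$ precisely because of the hypothesis on $l_0$), and justify the exchange of sum and integral via orthogonality of the $P_l$. The paper phrases this justification through Bochner integrability and its orthogonal-family lemma rather than your Bessel-plus-dominated-convergence argument, but the substance is the same.
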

\begin{proof}
The idea of this proof is the formal identity
\begin{align}
\int_0^1 \sum_{l \in I} t^{2l-\zeta/B_0} P_l dt &= \sum_{l\in I} \frac 1 {1+2l-\zeta/B_0} P_l .
\end{align}

Now we need to establish the precise meaning of this identity. First, we note that $t^{-\zeta/B_0}=\exp(-\zeta/B_0\ln(t))$ is well defined, as $t >0$. If $\Re(\zeta)/B_0 \ge 2l+1$, then the integral of the summands for $l$ will not exist, which is the reason we introduced $l_0$. We bounded the real part of $\zeta$ a little stronger than necessary to make the proof easier.
Hence, we have
\begin{align}
\int_0^1 \sum_{l>l_0} t^{2l-\zeta/B_0} P_l dt = \sum_{l>l_0} \frac 1 {1+2l-\zeta/B_0} P_l.
\end{align}

For any single $l> l_0$, the integral exists as a Bochner integral with respect to the operator norm.  \autoref{orth op fam} finishes the proof.
\end{proof}

We will deal with a few integral kernels that have a singularity at the diagonal. To describe such a singularity, for any $s\in \mathbb R$, we introduce 
\begin{align}
b_s \colon \mathbb R^2 \to [0,\infty), \quad (x,y) \mapsto \begin{cases}  -1_{D_{\frac {1 }{\sqrt {B_0}}}(0)}(x-y) \ln (\sqrt {B_0}\lVert x-y\rVert)  & s=0, \\
   1_{D_{\frac {1} {\sqrt {B_0}}}(0)}(x-y)\frac {1} {\lVert x-y\rVert^s} & s \not =0 .\end{cases} \label{def bs}
\end{align}

\begin{lemma} \label{1.resolvent bound}
Let $I \subset  \mathbb N$ be cofinite. Then there is a function $F \in \Lp^\infty_{loc}(\mathbb C\setminus(2I+1))$, 
 such that the following pointwise upper bounds hold for all $x,y \in \mathbb R^2, x \not=y$ and $\zeta \in \mathbb C \setminus{B_0}(2I+1)$:
\begin{align}
\lvert \operatorname{iker} M_{I,\zeta}(x,y)\rvert \le& F\left(\frac \zeta {B_0}\right)\left( b_0(x,y)+  \exp\left( -\frac {B_0} 8 \lVert x-y \rVert^2 \right)\right),\\
\lVert \operatorname{iker} (-i\nabla -A_0)  M_{I,\zeta}(x,y)\rVert \le&   F\left(\frac \zeta {B_0}\right)\left( b_1(x,y) + \sqrt {B_0}\exp\left( -\frac {B_0} 8 \lVert x-y \rVert^2   \right)\right),\\ 
\lVert  (-i\nabla_x -A_0(x))^{\otimes 2} \operatorname{iker} M_{I,\zeta}(x,y)\rVert \le&  F\left(\frac \zeta {B_0}\right)\left( b_2(x,y)+  {B_0}\exp\left( -\frac {B_0} 8 \lVert x-y \rVert^2 \right)\right).
\end{align}
\end{lemma}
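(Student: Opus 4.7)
The plan is to apply the representation from \autoref{ResolventKernelnew}, decomposing $B_0 M_{I,\zeta}$ as a finite sum $\sum_{l \in I, l \le l_0} P_l/(2l+1-\zeta/B_0)$ plus an integral $\int_0^1 t^{-\zeta/B_0}(Q_{t^2} - \sum_{l \le l_0} t^{2l} P_l)\,dt$, with $l_0$ depending only on $\Re \zeta$. The finite-sum contribution and its first two derivatives in $-i\nabla_x - A_0(x)$ are bounded pointwise by \autoref{p_l exp bound}, which gives exactly the exponentially decaying part $\exp(-B_0 \|x-y\|^2/8)$ of the claimed bounds; the coefficients $1/(2l+1-\zeta/B_0)$ are locally bounded in $\zeta \in \mathbb C \setminus B_0(2I+1)$ and are absorbed into the prefactor $F$, together with the overall factor of $B_0$ coming from dividing the identity by $B_0$.

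For the integral term I would split $[0,1] = [0,1/2] \cup [1/2,1]$. On $[0,1/2]$ the exponential coefficient $B_0(1+t^2)/(4(1-t^2))$ in $q_{t^2}$ stays $\ge B_0/4$, so $q_{t^2}$ and its first two derivatives (computed via \eqref{def q_t}, \eqref{H0^0.5q_t}, \eqref{p^2q_t}) decay exponentially in $\|x-y\|$ uniformly in $t$; the subtracted terms $t^{2l} p_l$ are controlled by \autoref{p_l exp bound}, and both contribute only to the exponentially decaying part of the claim. Integrability at $t=0$ is ensured by the subtraction: since $q_{t^2} = \sum_l t^{2l} p_l$ pointwise in operator norm, the remainder after subtracting $\sum_{l \le l_0} t^{2l} p_l$ is $O(t^{2(l_0+1)})$, which overcomes $t^{-\zeta/B_0}$ for $l_0$ sufficiently large.

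The analytic heart is the $[1/2,1]$ integral. Using \eqref{def q_t}, \eqref{H0^0.5q_t}, \eqref{p^2q_t}, the $j$th derivative ($j=0,1,2$) of $q_{t^2}(x,y)$ in $-i\nabla_x - A_0(x)$ is bounded pointwise by a finite sum of terms of the form
\begin{align*}
B_0^{1+j/2}(1-t^2)^{-1-k}\bigl(\sqrt{B_0}\|x-y\|\bigr)^m \exp\!\left(-\tfrac{B_0(1+t^2)}{4(1-t^2)}\|x-y\|^2\right),
\end{align*}
with $0 \le k \le j$ and $m \ge 0$ (the matrix correction in \eqref{p^2q_t} being the sole source of the term with $k=1$, $m=0$ at $j=2$). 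The substitution $u = B_0 \|x-y\|^2/(1-t^2)$ turns each such integral into an incomplete-Gamma-type integral: for $\sqrt{B_0}\|x-y\| \ge 1$ one obtains exponential decay at least as strong as $\exp(-B_0 \|x-y\|^2/8)$, while for $\sqrt{B_0}\|x-y\| < 1$ one recovers exactly $b_j(x,y)$ from \eqref{def bs} (logarithmic in $\sqrt{B_0}\|x-y\|$ for $j=0$ and $\|x-y\|^{-j}$ for $j=1,2$). The factor $t^{-\zeta/B_0} \le 2^{|\Re \zeta|/B_0}$ on $[1/2,1]$ is absorbed into $F(\zeta/B_0)$. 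The main obstacle is the careful bookkeeping in this substitution, in particular distinguishing the logarithmic $j=0$ case from the power cases $j=1,2$ and checking that the matrix correction in \eqref{p^2q_t} produces exactly the $b_2$ singularity at $j=2$.
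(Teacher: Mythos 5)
Your proposal follows essentially the same route as the paper: the representation of \autoref{ResolventKernelnew}, the bound of \autoref{p_l exp bound} for all Landau-level terms, and a split of the $t$-integral with a change of variables near $t=1$ that produces the Gaussian tail for $\sqrt{B_0}\lVert x-y\rVert\ge 1$ and the singularities $b_j$ for $\sqrt{B_0}\lVert x-y\rVert<1$; your bookkeeping of the terms coming from \eqref{H0^0.5q_t} and \eqref{p^2q_t}, including the matrix correction giving the $j=2$ contribution, matches the paper's computation.

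There is, however, one spot where your assembly does not quite close: the region near $t=0$. You propose to bound $q_{t^2}$ and the subtracted terms $t^{2l}p_l$ \emph{separately} on $[0,1/2]$ (which gives the Gaussian decay in $x-y$ but no smallness in $t$), and then to invoke the operator-norm statement that the remainder is $O(t^{2(l_0+1)})$ to beat $t^{-\Re\zeta/B_0}$. These two bounds do not combine as stated: the separate bounds are not integrable against $t^{-\Re\zeta/B_0}$ at $t=0$ when $\Re\zeta$ is large, and an operator-norm $O(t^{2(l_0+1)})$ estimate does not by itself yield the pointwise kernel bound with Gaussian decay that the lemma asserts. The simultaneous bound one actually needs is the pointwise tail estimate $\sum_{l>l_0}t^{2l}\lvert p_l(x,y)\rvert\le C\exp(-B_0\lVert x-y\rVert^2/8)\sum_{l>l_0}(at^2)^l$ from \autoref{p_l exp bound}; since that lemma produces $a=16$, this geometric series converges only for $t<1/4$, so your split point $1/2$ is too large for it. The fix is exactly the paper's: split instead at some $t_0$ with $t_0^2a<1$ (the intermediate region $[t_0,1/2]$ is then trivial because $t^{-\Re\zeta/B_0}$ is bounded there and the finitely many terms each decay Gaussianly). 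Finally, you should also record why the kernel of $M_{I,\zeta}$ (resp.\ its first derivative) is the $t$-integral of the kernels — the paper does this via \autoref{Bochner kernel} — and note that for $j=2$ the statement concerns the differentiated kernel rather than the kernel of an operator (the exchange of $(-i\nabla_x-A_0(x))^{\otimes 2}$ with the $t$-integral being justified by dominated convergence once your integrable majorant is in place).
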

\begin{remark}
The last inequality is structurally different, because the implied operator $(-i \nabla -A_0)^{\otimes 2} M_{I,\zeta}$ does not have a \emph{nice} integral kernel. The differential of the integral kernel can still be considered but is not $\Lp^1$ with respect to $y$ for any fixed $x$ and hence not a \emph{nice} integral kernel. In general, this kernel does not fully describe the operator. 
\end{remark}

\begin{proof}The set $I\subset \mathbb N$ is fixed throughout the proof.

For any $t \in [0,1), l \in \mathbb N, j \in \{0,1,2\}$, we define
\begin{align}
 q_{t,j}(x,y) &\coloneqq (-i\nabla_x-A_0(x))^{\otimes j}q_t(x,y), \\
p_{l,j} (x,y)&\coloneqq (-i\nabla_x-A_0(x))^{\otimes j}p_l(x,y).
\end{align}
As $q_{t,j}, p_{l,j}$ are \emph{nice} integral kernels, we can apply dominated convergence and see that
\begin{align}
 q_{t,j}(x,y) &=\operatorname{iker}\left((-i\nabla-A_0)^{\otimes j}Q_t\right)(x,y), \\
p_{l,j} (x,y)&=\operatorname{iker} \left( (-i\nabla-A_0)^{\otimes j}P_l\right)(x,y).
\end{align}
We choose $l_0 \in \mathbb N$ minimal, such that $(2l_0-1){B_0} > \Re \zeta$ and $l_0 \ge \max(I^\complement)$. Now we  use the representation established in \autoref{ResolventKernelnew}. To prove, that for $j \in \{0,1\}$, the operators have integral kernels, we want to use \autoref{Bochner kernel}. Hence, we only need to show, that the following inequality holds, in order to finish the proof for $j=0,1$:
\begin{align}
& \int_0^1 \left \lVert t^{-\zeta/{B_0}} \left(q_{t^2,j}(x,y)-\sum_{ l \le l_0} t^{2l} p_{l,j}(x,y)\right) \right \rVert dt+  \sum_{l\in I, l \le l_0}\frac{ \left \lVert p_{l,j}(x,y) \right \rVert}{(2l+1-\zeta/{B_0})} \\
\le & B_0 F\left(\frac \zeta {B_0}\right)\left( b_j(x,y)+  {B_0}\exp\left( -\frac {B_0} 8 \lVert x-y \rVert^2 \right)\right) .
\end{align}
For $j=2$, however, we need to consider, that as the integrand is smooth on $(0,1)$ and the summands at the end are smooth, we can try to exchange the integral with the differential operator $(-i \nabla-A_0)$. This will work, if the absolute value of the differential is integrable, by dominated convergence. Hence above integral bound also covers the case $j=2$ and we will now proceed to bound all terms at the same time by choosing $j\in \{0,1,2\}$. 
We want to use \autoref{p_l exp bound} to bound the first integral on the interval $(0, t_0 )$ and the sums. Hence,
\begin{align}
&  \int_0^{t_0} \left \lVert t^{-\zeta/{B_0}} \left(q_{t^2,j}(x,y)-\sum_{ l \le l_0} t^{2l} p_{l,j}(x,y)\right) \right \rVert dt \\
\le &\int_0^{t_0} \sum_{l>l_0}t^{2l-\Re(\zeta)/{B_0}} \lvert p_{l,j}(x,y) \rvert dt \\
\le &\int_0^{t_0} \sum_{l>l_0}t^{2l-\Re(\zeta)/{B_0}} C{B_0}^{1+0.5j}a^l \exp\left(- \frac {B_0} 8 \lVert x-y \rVert^2 \right)  dt \\
= &C{B_0}^{1+0.5j}  \sum_{l>l_0} \frac{ ( t_0^2 a)^l t_0}{(2l+1-\Re( \zeta)/{B_0})t_0^{\Re (\zeta)/{B_0}} } \exp\left(- \frac {B_0} 8 \lVert x-y \rVert^2 \right) \\
\le & F(\zeta/{B_0}){B_0}^{1+0.5j}\exp\left(- \frac {B_0} 8 \lVert x-y \rVert^2 \right) .
\end{align}
The last step holds, if $t_0^2a<1$, so we fix such a $t_0$ now\footnote{Actually $a=16$, so we could choose for example $t_0=0.1$, but the value is not relevant.}. The function $F_0$ is in $\Lp^\infty_{loc}(\mathbb C \setminus B_0(2I+1) )$, as $l_0$ is chosen locally bounded in $\zeta/{B_0}$. For fixed $l_0$ the function $F_0$ is continuous.
The next step is bounding the remaining finite sum terms. Here, we will use, that $l \le l_0$ and hence $a^l \le C$. Thus,
\begin{align}
&  \int_{t_0}^1 \left \lVert \sum_{ l \le l_0} t^{2l-\zeta/{B_0}} p_{l,j}(x,y) \right \rVert dt + \sum_{l\in I, l \le l_0}  \frac{ \left \lVert p_{l,j}(x,y) \right \rVert}{(2l+1-\zeta/{B_0})}   \\
\le&C{B_0}^{1+0.5j}\left( \sum_{ l \le l_0} \left( \int_{t_0}^1 t^{2l-\Re(\zeta)/{B_0}} dt  \right) +  \sum_{l\in I, l \le l_0} \left( \frac 1 {\lVert 2l+1-\zeta/{B_0} \rVert }  \right)\right) \exp\left(- \frac {B_0} 8 \lVert x-y \rVert^2 \right) \\ 
\le & F(\zeta/{B_0}) {B_0}^{1+0.5j} \exp\left(- \frac {B_0} 8 \lVert x-y \rVert^2 \right) .
\end{align}
The function $F_1$ is in $\Lp^\infty_{loc}(\mathbb C \setminus B_0(2I+1))$ by the same argumentation as $F_0$.
We will now turn our attention to the last remaining term. It is given by
\begin{align}
\int_{t_0}^1 \left \lVert q_{t^2,j}(x,y) t^{-\zeta/B_0} \right \rVert dt.
\end{align}
The integrand is given by \eqref{H0^0.5q_t} for $j=1$ and by \eqref{p^2q_t} for $j=2$. Only in the following lines, we  denote by $j \mapsto \delta(j,2)$ the function, that is $1$, if $j=2$ and $0$ otherwise.  We introduce the parameter $h \coloneqq \sqrt {B_0}  \lVert x-y \rVert$ and estimate 
\begin{align}
&\int_{t_0}^1 \left \lVert t^{-\zeta/B_0}q_{t^2,j}(x,y) \right \rVert dt  \\
\le &\left(t_0^{-\Re(\zeta)/{B_0}}+1\right) \int_{t_0}^1\frac{C{B_0}}{1-t^2} \left( \left( \frac{ \sqrt {B_0} h} {2(1-t^2)}\right)^j +\frac{ \delta(j,2) {B_0} } {2(1-t^2)}   \right)\exp \left( -\frac {1+t^2}{4(1-t^2)} h^2 \right)dt \\
\le  & \int_{0}^1\frac{F(\zeta/{B_0}) {B_0}}{1-t}\left( \left( \frac{ \sqrt {B_0} h} {2(1-t)}\right)^j + \frac{ \delta(j,2) {B_0} }{2(1-t)} \right)\exp \left( -\frac {1}{5(1-t)} h^2 \right)  dt.
\end{align}
In the last step, we used the fact, that $\frac{1+t^2}{1+t} \ge 2 \sqrt 2 -2> \frac 4 5$ to bound the factor in the exponential. The function $F_2$ is just continuous on $\mathbb C$.

We want to do a change of variables to $s \coloneqq \frac{ h^2} {5(1-t)}$.  The interval is changed to $(h^2/5, \infty)$ and the determinant is $h^2/(5s^2)$. In total we have
\begin{align}
&\int_{t_0}^1\left \lVert t^{-\zeta/B_0}q_{t^2,j}(x,y)  \right \rVert dt \\
\le & F\left(\frac \zeta {B_0} \right){B_0}^{1+0.5j}\int_{h^2/5}^\infty \frac{s}{h^2}\left( \left( \frac{ s} {h}\right)^j + \frac{ \delta(j,2) s }{h^2} \right)\exp \left( -s \right)   \frac{ h^2}{s^2}ds \\
\le & F\left(\frac \zeta {B_0} \right){B_0}^{1+0.5j}\int_{h^2/5}^\infty \frac{1}{s}\left( \left( \frac{ s} {h}\right)^j + \frac{ \delta(j,2) s }{h^2} \right)\exp \left( -s \right)   ds  \eqqcolon \Theta .
\end{align}
If $h>1$, we can bound the integrand by $C\exp(-\frac 5 8 s)$. The reduction in the exponent takes care of the factor $s$, that appears in the case $j=2$. Negative powers of $h$ can be bounded by one. The integral can then be resolved and we have
\begin{align}
\Theta \le &C  F\left(\frac \zeta {B_0} \right){B_0}^{1+0.5j} \exp \left( -\frac 5 8 \frac {h^2} 5\right) \\
=& C F\left(\frac \zeta {B_0} \right){B_0}^{1+0.5j} \exp \left( -\frac {B_0} 8 \lVert x- y \rVert^2\right) .
\end{align}
This is the desired upper bound. 

 If $h\le 1$, $j >0$, we can set the lower interval limit to $0$ and get an integrable function in $s$ multiplied by $h^{-j}$. This gives us
\begin{align}
\Theta \le  C F\left(\frac \zeta {B_0} \right) {B_0}^{1+0.5j}  h^{-j} \le & C F_2\left(\frac \zeta {B_0} \right) {B_0} b_j(x,y),
\end{align}
which is the desired upper bound.

Finally, if $h \le 1$, $j=0$, we get a constant from the integral starting at $\frac 1 5$. For the integral up to $\frac 1 5 $, we can bound the integrand by $\frac 1 s$. Hence, the remaining integral is bounded by  $C(1-\ln (h^2 ) )\le C (1+ b_0(x,y))$. Once again, this is the desired result.
\end{proof}

We need one very important bound, which will have multiple uses later.

\begin{lemma} \label{2.resolvent bound}
Let $u_1,u_2,u_3 \colon \mathbb R^2 \to \mathbb R^+$ be functions, such that $\ln \circ u_j$  is Lipschitz with Lipschitz constant $C_{lip}>0$. Let $0\le s_1,s_2 < 2$ and $\lambda>0$ be real numbers. Then there is a constant $C>0$, depending only on ${B_0},s_1,s_2,\lambda$ and $C_{lip}$, such that for all $x,y \in \mathbb R^2, x \not=y$  we have the estimate
\begin{align}
&\int_{\mathbb R^2}\frac{ \left( b_{s_1} (x,y) + \exp (-{B_0}\lambda \lVert x-y \rVert^2 ) \right) \left( b_{s_2} (y,z) + \exp (-{B_0}\lambda \lVert y-z \rVert^2 ) \right) }{u_1(x)u_2(y)u_3(z)}dy \\
\le & \frac{C b_{s_1+s_2-2}(x,z) + C \exp \left( -\frac{{B_0} \lambda}3 \lVert x-z \rVert^2 \right)}{u_1(x)u_2(x)u_3(x)} \, . \label{eq:2rb}
\end{align}
If $1/(u_1u_2u_3) \in \Lp^2(\mathbb R^2)$ and $s_1+s_2 <3$, then the integral kernel is Hilbert--Schmidt. 
\end{lemma}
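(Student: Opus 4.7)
My plan is to first use the Lipschitz hypothesis on $\ln\circ u_j$ to transport the evaluations $u_2(y)$ and $u_3(z)$ back to $x$. The inequality $u_j(x)/u_j(\xi)\le \exp(C_{lip}\lVert x-\xi\rVert)$ gives
\begin{align*}
\text{LHS of }\eqref{eq:2rb}\le \frac{\exp(C_{lip}\lVert x-z\rVert)}{u_1(x)u_2(x)u_3(x)}\int\exp(C_{lip}\lVert x-y\rVert)\bigl(b_{s_1}(x,y)+e^{-B_0\lambda\lVert x-y\rVert^2}\bigr)\bigl(b_{s_2}(y,z)+e^{-B_0\lambda\lVert y-z\rVert^2}\bigr)dy.
\end{align*}
The external factor $\exp(C_{lip}\lVert x-z\rVert)$ will be absorbed at the end into a Gaussian of rate $B_0\lambda/2$ via Young's inequality $C_{lip}t\le \tfrac{B_0\lambda}{6}t^2+\tfrac{3C_{lip}^2}{2B_0\lambda}$, and this is what produces the rate $\lambda/3$ on the right-hand side of \eqref{eq:2rb}. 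Inside the integral, $\exp(C_{lip}\lVert x-y\rVert)$ is a constant on the disk support of $b_{s_1}$, and is absorbed into any $e^{-B_0\lambda\lVert x-y\rVert^2}$ by the same Young trick with a slightly smaller Gaussian coefficient.

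Next I expand the product in the integrand and estimate the four resulting terms. The Gaussian--Gaussian convolution is resolved by completing the square, $\lVert x-y\rVert^2+\lVert y-z\rVert^2=2\lVert y-(x+z)/2\rVert^2+\tfrac12\lVert x-z\rVert^2$, producing $(C/B_0\lambda)\exp(-B_0\lambda\lVert x-z\rVert^2/2)$ after integration. For the two mixed terms, typified by $\int b_{s_1}(x,y)\,e^{-B_0\lambda\lVert y-z\rVert^2}\,dy$, I split on $\lVert x-z\rVert\le 2/\sqrt{B_0}$ (where the disk integral of $b_{s_1}$ is finite by $s_1<2$, and is absorbed into the Gaussian prefactor, which is bounded below by a positive constant on that region) versus $\lVert x-z\rVert>2/\sqrt{B_0}$ (where $\lVert y-z\rVert\ge \lVert x-z\rVert-1/\sqrt{B_0}$ on the support of $b_{s_1}(x,\cdot)$, yielding a clean $\exp(-B_0\lambda\lVert x-z\rVert^2/2)$ after a second Young step to absorb the linear cross term).

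The main technical step is the purely fractional convolution $\int b_{s_1}(x,y)b_{s_2}(y,z)\,dy$. It vanishes for $\lVert x-z\rVert>2/\sqrt{B_0}$. On $\lVert x-z\rVert\le 2/\sqrt{B_0}$, the classical splitting of the domain of integration into the three regions $\{\lVert x-y\rVert\le \lVert x-z\rVert/2\}$, $\{\lVert y-z\rVert\le \lVert x-z\rVert/2\}$, and the complement, combined with $s_1,s_2<2$ for local integrability in $\mathbb R^2$, gives $O(\lVert x-z\rVert^{2-s_1-s_2})$ when $s_1+s_2>2$, $O(\ln(1/\lVert x-z\rVert))$ when $s_1+s_2=2$, and $O(1)$ when $s_1+s_2<2$. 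In every case this is majorised by $C\,b_{s_1+s_2-2}(x,z)+C\exp(-B_0\lambda\lVert x-z\rVert^2/3)$ by the definition of $b_s$ in \eqref{def bs}.

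For the Hilbert--Schmidt claim, the bound in \eqref{eq:2rb} factors as $\sigma(x,z)/(u_1(x)u_2(x)u_3(x))$ with $\sigma(x,z)=C\,b_{s_1+s_2-2}(x,z)+C\exp(-B_0\lambda\lVert x-z\rVert^2/3)$. By translation invariance, $\int_{\mathbb R^2} \sigma(x,z)^2\,dz$ is a constant independent of $x$; the Gaussian contribution is finite unconditionally, while the $b$-contribution is finite exactly under the sharp planar condition $2(s_1+s_2-2)<2$, i.e.\ $s_1+s_2<3$. Fubini then gives the $\Lp^2(\mathbb R^2\times\mathbb R^2)$ norm squared of the kernel as a constant multiple of $\lVert 1/(u_1u_2u_3)\rVert_{\Lp^2(\mathbb R^2)}^2$, which is finite by hypothesis. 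The main obstacle will be the careful bookkeeping around the boundary cases $s_1+s_2=2$ and $s_j=0$, where logarithmic singularities appearing in the fractional convolution must be matched to the logarithm in the definition of $b_0$.
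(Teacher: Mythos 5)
Your proposal follows essentially the same route as the paper's proof: transport the $u_j$'s to the point $x$ via the log-Lipschitz bound and Young's inequality, expand the product into four terms, treat the Gaussian--Gaussian term with the identity $\lVert x-y\rVert^2+\lVert y-z\rVert^2=\tfrac12\lVert x-z\rVert^2+2\lVert y-(x+z)/2\rVert^2$, the mixed terms by a near/far split in $\lVert x-z\rVert$, and the $b_{s_1}$--$b_{s_2}$ convolution by the standard Riesz-composition estimate (the paper rescales to unit distance and splits at radius $2$ instead of your three-region split, which is the same computation), with the identical translation-invariance argument for the Hilbert--Schmidt claim. The only point to tighten is the bookkeeping of the Gaussian rates: once you shrink the $\lVert x-y\rVert$-Gaussian coefficient to absorb $\exp(C_{lip}\lVert x-y\rVert)$, completing the square gives a rate strictly below $B_0\lambda/2$ in $\lVert x-z\rVert$, so fixing the outer Young coefficient at exactly $B_0\lambda/6$ lands just under the stated $B_0\lambda/3$; choose the inner and outer coefficients with combined loss at most $B_0\lambda/6$ (or split $\lVert x-y\rVert\le\tfrac12\lVert x-z\rVert+\lVert y-(x+z)/2\rVert$ first) and the claimed exponent follows.
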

This is to be used together with \autoref{1.resolvent bound} with $\lambda=\frac 1 8$. The general $\lambda$ is included to be able to chain more resolvents inductively. 
 As all summands in the integral are positive, we may assume that they have the same constants in front.
\begin{proof}
We first need two minor results. Let $a,b \in \mathbb R^2, j \in\{1,2,3\}$. Then for any $\delta>0$, we have
\begin{align}
\frac{ u_j(a)}{u_j(b)} =& \exp \left( \ln \circ u_j(a)- \ln  \circ u_j(b) \right) \\
\le & \exp \left( C_{lip} \lVert a-b \rVert \right) \label{loglip1}\\
\le& \exp \left( \delta \lVert a-b \rVert ^2 +\frac{C_{lip}^2 }{4 \delta} \right) \\
=& C(C_{lip},\delta) \exp\left( \delta \lVert a-b \rVert^2 \right) \label{minor1}.
\end{align}
We used the Young inequality. Furthermore (for any $x,y,z \in \mathbb R^2$) we have the identity
\begin{align}
\lVert x-y \rVert^2 +\lVert y-z \rVert^2 = \frac 12 \lVert x-z \rVert^2 + 2 \left \lVert y- \frac{x+z}2 \right \rVert^2.
\end{align}
We write $R\coloneqq \frac 1 {\sqrt {B_0}}$. Let us begin with the left-hand side of \eqref{eq:2rb} and just write out most of the Hölder estimates. Hence,
\begin{align}
LHS \le & \frac C {u_1(x)u_3(z)} \\
& \Bigg(\int_{D_{R}(x)} b_{s_1}(x,y)b_{s_2}(y,z) dy \left \lVert \frac 1 {u_2(\blank)} \right \rVert_{\Lp^\infty  \left (D_{R} (x) \right)}\\
&\quad+ \exp \left( - \frac {{B_0}\lambda} {2} \lVert x-z \rVert^2 \right) \int_{\mathbb R^2}\frac{ \exp\left(-2{B_0}\lambda\left \lVert y- \frac{x+z}2 \right \rVert^2 \right) }{u_2(y)}  dy \\
& \quad + \left \lVert b_{s_1}(x, \blank) \right\rVert_{\Lp^1 \left (D_{R} (x) \right)}  \left \lVert \exp \left( - {B_0}\lambda \lVert \blank-z \rVert^2 \right)    \right\rVert_{\Lp^\infty \left (D_{R} (x) \right)} \left \lVert \frac 1 {u_2(\blank)} \right \rVert_{\Lp^\infty  \left (D_{R} (x) \right)} \\
& \quad  +\left \lVert  b_{s_2}( \blank,z )   \right\rVert_{\Lp^1 \left (D_{R} (z) \right)}  \left \lVert \exp \left( - {B_0}\lambda \lVert x-\blank \rVert^2 \right)  \right\rVert_{\Lp^\infty \left (D_{R} (z) \right)}  \left \lVert \frac 1 {u_2(\blank)} \right \rVert_{\Lp^\infty  \left (D_{R} (z) \right)}\Bigg) .
\end{align}

The $\Lp^\infty$ norms of the non-exponential terms can be bounded by a constant times the function evaluated at the centre, where the constant is given by \eqref{loglip1}, using $a$ as the centre of the ball and $b$ as any point in the ball. For the $\Lp^\infty$ norms of the exponential terms, we use \autoref{gauss disc} with $x_0 \coloneqq y-z$. We are left to estimate the four $\Lp^1$ norms, some of which are written as integrals. The last two $\Lp^1$ norms can be bounded by a constant and that is sufficient. For the exponential integral, we first use \eqref{minor1} with $\delta=B_0$ to replace the $u_2(y)$ in the denominator by $u_2((x+z)/2)$, getting a different Gaussian in the numerator, and then we can just bound its integral. With all of these, we get
\begin{align}
LHS \le & \frac C {u_1(x)u_3(z)} \times \\
& \Bigg(\int_{D_{R}(x)} b_{s_1}(x,y)b_{s_2}(y,z) dy \frac 1 {u_2(x)}+ \frac{ \exp \left( -  {{B_0}\lambda}/2\lVert x-z \rVert^2 \right) } {u_2( (x+z)/2)} \\
& \quad +  \exp \left( - \frac{{B_0}\lambda} 2 \lVert x-z \rVert^2 \right)      \frac 1 {u_2(x)}   +  \exp \left( - \frac{{B_0}\lambda} 2 \lVert x-z \rVert^2 \right)      \frac 1 {u_2(z)}  \Bigg) .
\end{align}
If we apply \eqref{minor1} again, we can get the desired bound for the last three summands. So, we only need to get the same bound for the first summand. If $\lVert x-z \rVert >2R$, the first summand vanishes. Otherwise, the term $1/u_3(z)$ can be bounded by $C/u_3(x)$ by \eqref{minor1}.  In the case $2R \ge \lVert x-z \rVert \ge  R/2$, we just bound the integral by a constant depending on $R$, which can then be bounded by a constant times the Gaussian.  We are left to consider the case $\lVert x-z \rVert < R/2$. So, we are left to bound the integral
\begin{align}
\int_{D_{R}(x)} b_{s_1}(x,y)b_{s_2}(y,z) dy.
\end{align}
We have $b_{s}(x, \blank) \in \Lp^p $ for any $1 \le p < 2/s$ and $b_s$ is symmetric in $x,y$. Hence, if $s_1+s_2<2$, we can bound this by a constant (independent of $x,z$) using Hölder. This can then by bounded by the Gaussian, as $\lVert x-z \rVert \le 2R$. We are left with the case $s_1+s_2 \ge 2$, where we want to bound the integral by $b_{s_1+s_2-2}(x,z)+ C$. As $s_1,s_2 <2$, we have $s_1,s_2>0$. Let $e_1 \in \mathbb R^2$ be the standard unit vector and let $D_{r_1,r_2}(0)$ be the annulus between the two radii $r_1 \le r_2$. Then we have
\begin{align}
&\int_{D_{R}(x)} b_{s_1}(x,y)b_{s_2}(y,z) dy\\
\le & \int_{D_{R}(x)} \frac 1 {\lVert x-y \rVert^{s_1} \lVert y-z \rVert^{s_2}}dy\\
= & \int_{D_{R}(0)} \frac 1 {\lVert y \rVert^{s_1} \lVert y-(z-x) \rVert^{s_2}}dy\\
= & \int_{D_{R\lVert x-z \rVert^{-1}}(0)} \frac {\lVert x-z \rVert ^{2-s_1-s_2}} {\lVert y \rVert^{s_1} \lVert y- e_1 \rVert^{s_2}}dy\\
\le &\frac {\int_{D_2(0)} \lVert y \rVert^{-s_1} \lVert y- e_1 \rVert^{-s_2}dy} {\lVert x-z \rVert ^{s_1+s_2-2}} + \int_{D_{2, R\lVert x-z \rVert^{-1}}(0)} \frac {\lVert x-z \rVert ^{2-s_1-s_2}}  {\lVert y \rVert ^{s_1}\lVert y-e_1 \rVert^{s_2}}dy\\
\le &\frac C {\lVert x-z \rVert ^{s_1+s_2-2}}+ \int_{D_{2, R\lVert x-z \rVert^{-1}}(0)} \frac {C\lVert x-z \rVert ^{2-s_1-s_2}}  {\lVert y \rVert^{s_1+s_2}}dy\\
\le & C b_{s_1+s_2-2}(x,z).
\end{align}
In the final step, we have to consider the case $s_1+s_2=2$ separately. In this case, the integral at the end yields  the term $b_0(x,z)$ up to a constant. In the case $s_1+s_2>2$, the integral over $\lVert y \rVert^{-s_1-s_2}$ can be bounded by a constant, independent of $x,z$ and we are left with the correct singularity at the diagonal. This finishes the proof of the upper bound. 

If $1/(u_1u_2u_3) \in \Lp^2$ and $s_1+s_2<3$, we get
\begin{align}
&C\int_{\mathbb R^2}dx \int_{\mathbb R^2} dz \left( \frac{b_{s_1+s_2-2}(x,z) + \exp \left( -\frac{{B_0} \lambda}3 \lVert x-z \rVert^2 \right)}{u_1(x)u_2(x)u_3(x)} \right)^2 \\
=&C\int_{\mathbb R^2}dx \int_{\mathbb R^2} d(x-z)\left( \frac{b_{s_1+s_2-2}(x,z) + \exp \left( -\frac{{B_0} \lambda}3 \lVert x-z \rVert^2 \right)}{u_1(x)u_2(x)u_3(x)} \right)^2 \\
\le&\int_{\mathbb R^2}dx\left( \frac{C}{\left(u_1(x)u_2(x)u_3(x)\right)^2} \right) \le C.
\end{align}
Hence the integral kernel is Hilbert--Schmidt.
\end{proof}

\begin{corollary} \label{iterated comp kernel}
Let $n \in \mathbb N$ and for any $0 \le i \le n$, let there be an operator $K_i$ with integral kernel $k_i$ on $\Lp^2(\mathbb R^2)$,  log-Lipschitz functions $u_i,v_i\colon \mathbb R^2 \to \mathbb R^+$, $\lambda_i >0$, and  $0 \le s_i <2$. Assume the integral kernels $k_i$ satisfy the upper bound
\begin{align}
u_i(x) \lvert k_i(x,y)\rvert v_i(y) \le C b_{s_i} (x,y) +C \exp \left( - \lambda_i \lVert x-y \rVert^2 \right), \label{icp assump}
\end{align}
for any $x \not =y$. Define $K \coloneqq \prod_{i=0}^n K_i$ and let 
\begin{align}
s = -2n+ \sum_{i =0}^n s_i .
\end{align}
Then $K$ has an integral kernel $k$ and there are $\lambda>0, C>0$, such that for any $x \not =y$, we have the inequalities
\begin{align}
\lvert k(x,y) \rvert \le \frac{C b_s(x,y) +C \exp \left( -\lambda \lVert x-y \rVert^2 \right)}{\prod_{i=0}^n u_i(x)v_i(x) }\label{icp eq1}, \\
\lvert k(x,y) \rvert \le \frac{C b_s(x,y) +C \exp \left( -\lambda \lVert x-y \rVert^2 \right)}{\prod_{i=0}^n u_i(y)v_i(y) }. \label{icp eq2}
\end{align}

\end{corollary}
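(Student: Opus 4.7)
I would prove this by induction on $n$, using \autoref{2.resolvent bound} as the inductive engine. For the base case $n=0$, the claim reduces directly to \eqref{icp assump} once we move the weight from $y$ to $x$: since $\ln v_0$ is Lipschitz, $v_0(y)^{-1} \le v_0(x)^{-1}\exp(C_{lip}\lVert x-y\rVert)$, and the extra factor is bounded on the support of $b_{s_0}$ and absorbed into the Gaussian via Young's inequality $C_{lip}\lVert x-y\rVert \le \delta\lVert x-y\rVert^2 + C_{lip}^2/(4\delta)$, at the cost of a slightly smaller $\lambda$. The symmetric argument transfers $u_0(x)$ to $u_0(y)$ to yield \eqref{icp eq2}.

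For the inductive step, to prove \eqref{icp eq1} I split $K = K_0 \cdot K'$ with $K' = K_1\cdots K_n$. By the inductive hypothesis $K'$ has an integral kernel $k'$ satisfying \eqref{icp eq2}, namely
\begin{align*}
\lvert k'(y,z)\rvert \le \frac{C\,b_{s'}(y,z) + C\exp(-\lambda'\lVert y-z\rVert^2)}{\prod_{i=1}^n u_i(y)v_i(y)}, \qquad s' \coloneqq -2(n-1) + \sum_{i=1}^n s_i.
\end{align*}
The bounds are locally integrable in the intermediate variable (because each $s_i < 2$, so every $b_{s_i}$ is $\Lp^1$ near the diagonal), so Fubini yields $k(x,z) = \int k_0(x,y)k'(y,z)\,dy$. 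Now \autoref{2.resolvent bound} applies with $u_1 = u_0$, $u_2 = v_0\prod_{i=1}^n u_i v_i$, $u_3 \equiv 1$ (all log-Lipschitz, since $\ln$ of a product of positive log-Lipschitz functions is a finite sum of Lipschitz functions), $s_1 = s_0$, and $s_2 = s'$. The arithmetic $s_0 + s' - 2 = -2n + \sum_{i=0}^n s_i = s$ matches on the nose, giving \eqref{icp eq1}. Bound \eqref{icp eq2} follows by the symmetric split $K = (K_0\cdots K_{n-1})K_n$ combined with the $x \leftrightarrow z$ variant of \autoref{2.resolvent bound}, which holds because $b_s$ is symmetric in its two arguments and relabelling $u_1 \leftrightarrow u_3$ is allowed.

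The main technical subtlety is that $s'$ may be negative once $n \ge 2$ and the $s_i$ are small, while \autoref{2.resolvent bound} formally requires $0 \le s_1, s_2 < 2$. I handle this by the elementary observation that for any $s \le 0$, the function $b_s$ is supported in $D_{1/\sqrt{B_0}}(0)$ and bounded there by $B_0^{|s|/2}$, hence dominated by $C\exp(-\lambda\lVert\cdot\rVert^2)$ for any prescribed $\lambda > 0$ (with $C$ depending on $s,B_0,\lambda$). Whenever $s' < 0$, I therefore absorb $b_{s'}$ into the Gaussian term of $k'$ before invoking \autoref{2.resolvent bound} with $s_2 = 0$; the same observation subsequently replaces any output $b_s$ with $s<0$ by Gaussian decay. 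Since $s_0 < 2$ and $s' < -2(n-1) + 2n = 2$, both arguments of \autoref{2.resolvent bound} lie in $[0,2)$ after this reduction, so the hypotheses are met at every stage and the induction closes.
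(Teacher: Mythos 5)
Your overall strategy is the same as the paper's: the base case by transferring the log-Lipschitz weights with Young's inequality, the inductive step driven by \autoref{2.resolvent bound}, and the absorption of $b_s$ with negative exponent into the Gaussian (the paper makes exactly this remark right after the statement). The one genuine gap is the sentence ``so Fubini yields $k(x,z)=\int k_0(x,y)k'(y,z)\,dy$.'' That identity --- that the product is again an integral operator whose kernel is the composition of the two kernels --- is precisely the nontrivial content of the paper's proof, and local integrability of the bounds in the intermediate variable does not deliver it. The hypothesis only gives the kernel representation of $K_0$ tested against $f\in C^0_c(\mathbb R^2)$, whereas $K'f$ is neither continuous nor compactly supported; moreover $k_0(x,\cdot)$ is only controlled by $b_{s_0}$ with possibly $1\le s_0<2$, so it need not be locally square integrable and one cannot simply extend the representation to $K'f$ by $\Lp^2$-density. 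The paper closes this gap by introducing the auxiliary space $\mathcal Y$ of bounded functions with super-exponential decay: $C^0_c(\mathbb R^2)\subset\mathcal Y$, each $K_i$ factors as log-Lipschitz multiplications around a \emph{nice} integral operator, all of which act continuously on $\mathcal Y$, and only then does Fubini legitimately give \eqref{K0K1 has kernel}. Your induction needs this (or an equivalent) argument at every step, applied to $K_0$ and $K'=K_1\cdots K_n$; once it is inserted, the rest of your proof goes through, since the composed bound has again the form \eqref{icp assump}.

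Three small slips, none serious: the bound you quote for $k'(y,z)$, with the weights at the first argument $y$, is \eqref{icp eq1} of the inductive hypothesis rather than \eqref{icp eq2}; for $s\le 0$ the function $b_s$ is bounded on its support by $B_0^{-\lvert s\rvert/2}$, not $B_0^{\lvert s\rvert/2}$; and before invoking \autoref{2.resolvent bound} you should pass to a common Gaussian rate $\lambda=\min\{\lambda_0,\lambda'\}$ in both factors, since the lemma is stated with the same rate on each, which is also why the paper ends up with $\lambda=\tfrac13\min\{\lambda_1,\lambda_2\}$ in the $n=1$ case.
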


For $s<0$, we can replace $b_s$ by $0$ in \eqref{icp eq1} and \eqref{icp eq2}, as $b_s$ is bounded and can be absorbed in the Gaussian.
\begin{proof}
The case $n=0$ follows by \eqref{minor1}. We continue with the case $n=1$. By \autoref{2.resolvent bound}, we only have to show that $K_0K_1$ has is an integral operator and that for any $x,z \in \mathbb R^2$ with $x \neq z$, we have
\begin{align}
\operatorname{iker}K_0K_1(x,z)= \int_{\mathbb R^2} dy \operatorname{iker}K_0 (x,y) \operatorname{iker}K_1 (y,z) . \label{K0K1 has kernel}
\end{align}
To do so, it is sufficient to find a function space $\mathcal Y \supset C^0_c(\mathbb R^2)$, on which $K_0$ and $K_1$ are continuous.  We claim the topological vector space
\begin{align}
\mathcal Y \coloneqq \bigcap_{\lambda \in \mathbb R} \left \{ f({\cdot}) \exp( \lambda \lVert {\cdot}\rVert)  \in \Lp^\infty(\mathbb R^2)\right\} 
\end{align}
does the trick. 

We observe that any log-Lipschitz function $u \colon \mathbb R^2 \to \mathbb R^+$ with log-Lipschitz constant $C_{Lip}$ satisfies for any $x \in \mathbb R^2$ that
\begin{align}
 u(0) \exp(-C_{Lip} \lVert x \rVert ) \le u(x) \le  u(0) \exp(C_{Lip} \lVert x \rVert ) .
\end{align}
Hence, such a function defines a continuous multiplication operator on $\mathcal Y$. By the assumption \eqref{icp assump}, the operators $K_i$ can each be written as a product of two such multiplication operators and a nice integral operator $K_i'$ satisfying the kernel estimate
\begin{align}
\operatorname{iker} K_i' (x,y) \le b_{s_i}(x,y)+ \exp\left(- \lambda_i \lVert x-y \rVert^2 \right)
\end{align}
for any $x,y \in \mathbb R^2$ with $x \neq y$. We observe that such an integral operator is bounded on $\mathcal Y$. Now, by Fubini we can conclude \eqref{K0K1 has kernel}. This finishes the case $n=1$ with $\lambda= \frac 1 3\min\{\lambda_1,\lambda_2\}$.

As the resulting estimate for $K_0K_1$ is of the same form as the required estimate in \eqref{icp assump}, the induction over $n$ follows trivially.
\end{proof}

\section{Proof of \autoref{cube norm} and \autoref{theorem final summand}} \label{last section}

We will first briefly summarize the approach for both proofs. We will start by conjugating with the unitary operator $U_{x_0}$, as defined in \autoref{U_{x_0}}. Then, we can show that the operators we produce this way are Hilbert--Schmidt operators from $\Lp^2(\mathbb R^2)$ to $H^{\gamma+2}([0,1]^2)$ using the quasi isometry $D_{\gamma+2}$, that we have constructed in \autoref{H quasi iso} and  some commutator relations to move the differentials around. The proofs will conclude with  \autoref{Sob emb Schatten2}.

 We denote by $t_l$ the integral kernel of $T_l$. By \autoref{1.resolvent bound}, we can only apply one full differential in $x$ or $y$ to $t_l$, before we get a function, that is not a \emph{nice} integral kernel anymore. However, the operator $P_l$ has a smooth integral kernel, which is why we would like to move differentials over to it. We will see that we can apply two differentials after $M_{I, \zeta}$ and still remain with a bounded operators, that is (in general) not an integral operator in \autoref{2nd diff resolv}. 

We also still need to prove \autoref{4-Schatten}. However, it is convenient to prove a more general integral kernel bound along with it. For that, we need to introduce some new notation.

For any $x_0 \in \mathbb R^2$, $j \in \mathbb N$, we introduce the three multiplication operators, which are given for any $x \in \mathbb R^2$ by
\begin{align}
B_{\varepsilon,x_0}^{(j)}(x) &\coloneqq (-i)^j \left( \left(\prod_{h=1}^j  \partial_{\theta_h}\right) B_\varepsilon(x+x_0)\right)_{\theta \in \{1,2\}^j}, \label{def B_eps^j} \\
A_{\varepsilon,x_0}^{(j)}(x) &\coloneqq B_\varepsilon^{(j)}(\blank+x_0) * \frac{ J\blank}{2\pi \lVert \blank \rVert^2}  (x),\\
W^{(j)}_{\varepsilon,x_0}(x) &\coloneqq  (-i)^j \left( \left(\prod_{h=1}^j  \partial_{\theta_h}\right) W_\varepsilon(x+x_0)\right)_{\theta \in \{1,2\}^j}, \\
H_{\varepsilon,x_0}^{(j)} &\coloneqq     A^{(j)}_{\varepsilon,x_0}\cdot \left(-i \nabla - A_0 \right)+W^{(j)}_{\varepsilon,x_0} \label{def H_{eps,x0}} .
\end{align}
The last equation defines a non-multiplication operator.  
[We have not defined $A_{x_0}$ and $H_{x_0}$, as this may lead to confusion with $A_0$ and $H_0$, if we set $x_0=0$.] The scalar product in the definition of $H^{(j)}_{\varepsilon,x_0}$ reduces the final component of $A^{(j)}_{\varepsilon,x_0}$, which originates from the convolution with the $\mathbb R^2$ valued function $\frac{ J {\cdot}}{2 \pi \lVert {\cdot \rVert^2}}$. We will write $X_{\varepsilon,x_0}$ for $X^{(0)}_{\varepsilon,x_0}$ for $X \in \{A,W,B\}$.  

We observe that by \autoref{dA def} for $f= B_\varepsilon^{(j)}$, we have
\begin{align}
 A^{(j)}_{\varepsilon,x_0}\cdot \left(-i \nabla - A_0 \right) = \left(-i \nabla - A_0 \right) \cdot  A^{(j)}_{\varepsilon,x_0},
\end{align}
where the scalar product on both sides reduces the  final component of $A^{(j)}_{\varepsilon,x_0}$, which originates from the convolution with the $\mathbb R^2$ valued function $\frac{ J {\cdot}}{2 \pi \lVert {\cdot \rVert^2}}$, as above.
Hence, we have, with the same scalar product,
\begin{align}
H_{\varepsilon,x_0}^{(j)} &=      \left(-i \nabla - A_0 \right) \cdot A^{(j)}_{\varepsilon,x_0}+W^{(j)}_{\varepsilon,x_0} \label{def H_{eps,x0} alt} .
\end{align}
The idea behind these definitions is, as we hinted at in the introduction to this section, that by conjugating with the unitary operator $U_{x_0}\colon \Lp^2(\mathbb R^2 ) \to \Lp^2(\mathbb R^2)$, as defined in \autoref{U_{x_0}}, we observe the identity
\begin{align}
\left \lVert 1_{[0,1]^2-x_0} (T_l H_\varepsilon)^m P_l (H_\varepsilon T_l)^{k-m} 1_{D^\complement_R(x_0)}\right \rVert_p
=& \left \lVert 1_{[0,1]^2} (T_l H_{\varepsilon,x_0})^m P_l (H_{\varepsilon,x_0} T_l)^{k-m} 1_{D^\complement_R(0)}\right \rVert_p, \label{conjugate U_x0}
\end{align}
as the $p$-Schatten norm is unitarily invariant. Something similar applies for the proof of \autoref{theorem final summand}.

It is now time to prove \autoref{4-Schatten}. However, as we will need a more general statement, we will prove that instead.

\begin{lemma} \label{H eps M kernel}
Let $\gamma \in \mathbb N$, $V_\varepsilon, B_ \varepsilon$ be $(\gamma,\varepsilon)$ tame, $I \subset \mathbb N$ cofinite, and let $\zeta \in \mathbb C \setminus (2I+1) {B_0}$. Furthermore, let $x_0 \in \mathbb R^2$ and $\mathbb N \ni d \le \gamma$. Then there is a function $F \in L^\infty_{loc}(\mathbb C \setminus (2I+1) {B_0})$ and a real number $\lambda>0$, such that for any $x,y \in \mathbb R^2$ with $x \neq y$, we have the upper bound
\begin{align}
\left \lVert \operatorname{iker}H_{\varepsilon,x_0}^{(d)} M_{I, \zeta} (x,y) \right \rVert \le F( \zeta) \frac{ b_1(x,y) + \exp(-\lambda \lVert x-y \rVert^2) }{(1+ \lVert x+ x_0 \rVert)^\varepsilon }.
\end{align}
In particular, this is a nice integral kernel and the operator norm of $H_{\varepsilon,x_0}^{(d)} M_{I, \zeta}$ is bounded independently of $x_0$. Furthermore, we have the estimate
\begin{align}
\left \lVert H_{\varepsilon} M_{I, \zeta} \right \rVert_{4n_0} \le F(\zeta),
\end{align}
where $n_0$ is the smallest integer such that $2n_0 \varepsilon >1$.
\end{lemma}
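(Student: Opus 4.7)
The plan is first to establish the pointwise kernel estimate, from which both the \emph{nice} integral kernel property (hence the operator-norm boundedness via \autoref{nice>bounded}) and the $4n_0$-Schatten estimate will follow. By the definition \eqref{def H_{eps,x0}} I would split
\begin{align*}
H_{\varepsilon,x_0}^{(d)}M_{I,\zeta} = A_{\varepsilon,x_0}^{(d)}\cdot\bigl((-i\nabla - A_0) M_{I,\zeta}\bigr) + W_{\varepsilon,x_0}^{(d)}\, M_{I,\zeta},
\end{align*}
and bound each kernel pointwise. \autoref{1.resolvent bound} provides the estimates for $\operatorname{iker}((-i\nabla - A_0)M_{I,\zeta})(x,y)$ in terms of $b_1(x,y) + \sqrt{B_0}\exp(-B_0\lVert x-y\rVert^2/8)$ and for $\operatorname{iker} M_{I,\zeta}(x,y)$ in terms of $b_0(x,y) + \exp(-B_0\lVert x-y\rVert^2/8)$, each with a locally bounded prefactor $F(\zeta/B_0)$.

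The coefficients absorb into the denominator: since $B_\varepsilon$ is $(\gamma,\varepsilon)$ tame, every derivative $B_\varepsilon^{(d)}$ of order $d\le\gamma$ satisfies $\lVert B_\varepsilon^{(d)}(x)\rVert \le C(1+\lVert x\rVert)^{-1-\varepsilon}$; \autoref{dA def} applied to the shifted function $B_\varepsilon^{(d)}({\cdot}+x_0)$ then yields $\lVert A_{\varepsilon,x_0}^{(d)}(x)\rVert \le C(1+\lVert x+x_0\rVert)^{-\varepsilon}$. The same decay $\lvert W_{\varepsilon,x_0}^{(d)}(x)\rvert \le C(1+\lVert x+x_0\rVert)^{-\varepsilon}$ follows from $W_\varepsilon = A_\varepsilon^2 + V_\varepsilon$, tameness of $V_\varepsilon$, and the previous bound on $A_\varepsilon$. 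Combining and using $b_0 \le Cb_1$ delivers the claimed pointwise estimate. The resulting kernel is dominated by $F(\zeta)(b_1(x,y) + \exp(-\lambda\lVert x-y\rVert^2))$ up to the (harmless, in this step) denominator factor, so its $\Lp^1$-norm in either variable is finite and uniform in $x_0$. Hence it is \emph{nice} and \autoref{nice>bounded} provides the operator-norm bound independent of $x_0$.

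For the $4n_0$-Schatten estimate of $T\coloneqq H_\varepsilon M_{I,\zeta}$ I would use the identity $\lVert T\rVert_{4n_0}^{4n_0} = \lVert (TT^*)^{n_0}\rVert_2^2$, which follows from the singular value identity $s_i((TT^*)^{n_0}) = s_i(T)^{2n_0}$ together with the ``Hilbert--Schmidt kernel'' property in \autoref{pnorm facts}. The operator $(TT^*)^{n_0}$ is a product of $2n_0$ factors alternating between $T$ and $T^*$. For $T^*$ the pointwise kernel bound is the same but with the weight $(1+\lVert\cdot\rVert)^\varepsilon$ attached to the second variable, since $\operatorname{iker} T^*(x,y) = \overline{\operatorname{iker} T(y,x)}$ and both $b_1$ and the Gaussian are symmetric. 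I then apply \autoref{iterated comp kernel} with $s_i=1$ for each factor and log-Lipschitz weights $u_i(\cdot) = (1+\lVert\cdot\rVert)^\varepsilon$, $v_i = 1$ for the $T$-factors and the reverse for the $T^*$-factors. The resulting kernel for $(TT^*)^{n_0}$ is dominated by $\bigl(b_{2-2n_0}(x,y) + \exp(-\lambda'\lVert x-y\rVert^2)\bigr)/(1+\lVert x\rVert)^{2n_0\varepsilon}$. For $n_0=1$ the residual $b_0$ is locally square-integrable; for $n_0\ge 2$ the exponent $2-2n_0$ is negative and $b_{2-2n_0}$ is bounded. The Hilbert--Schmidt integral then converges precisely because $2n_0\varepsilon > 1$, which is the defining property \eqref{def n0} of $n_0$.

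The main obstacle is the exponent bookkeeping in \autoref{iterated comp kernel}: one must correctly distribute the $(1+\lVert\cdot\rVert)^{-\varepsilon}$ decay factors across the alternating $T$, $T^*$ factors so that the total weight power $2n_0\varepsilon$ just crosses the $L^2$-integrability threshold; everything else reduces to routine counting of singular exponents. The $\zeta$-dependence is tracked throughout by a function $F$ that remains in $\Lp^\infty_{\mathrm{loc}}(\mathbb C\setminus B_0(2I+1))$.
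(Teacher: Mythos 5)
Your proposal is correct and follows essentially the same route as the paper: the same splitting of $H_{\varepsilon,x_0}^{(d)}M_{I,\zeta}$ into the $A_{\varepsilon,x_0}^{(d)}\cdot(-i\nabla-A_0)M_{I,\zeta}$ and $W_{\varepsilon,x_0}^{(d)}M_{I,\zeta}$ pieces bounded via \autoref{1.resolvent bound}, \autoref{dA def} and tameness, followed by the identity $\lVert T\rVert_{4n_0}^{4n_0}=\lVert (TT^*)^{n_0}\rVert_2^2$ and \autoref{iterated comp kernel} with alternating weights $u_i,v_i$ and $s_i=1$. Your exponent bookkeeping ($s=2-2n_0$, so $b_0$ only for $n_0=1$, and $L^2$-integrability from $2n_0\varepsilon>1$) matches the paper's argument exactly.
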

This lemma generalizes \autoref{4-Schatten}. The operators $H_{\varepsilon,x_0}^{(d)}$ and $M_{I, \zeta}$ have been defined in \eqref{def H_{eps,x0}} and \eqref{def M_Iz}. The function $b_s$ has been defined in \eqref{def bs}.

\begin{proof}
We can estimate pointwise for $x \not = y$, using the assumption that $B_\varepsilon,V_\varepsilon$ are $(\gamma,\varepsilon)$ tame, \autoref{dA def}, and \autoref{1.resolvent bound}. Thus, we have
\begin{align}
&\left \lVert \operatorname{iker} A_{\varepsilon,x_0}^{(d)} (-i\nabla -A_0) M_{I,\zeta}  (x,y) \right \rVert \\ 
\le & \frac {C}{(1+\lVert x+x_0 \rVert )^{\varepsilon}}\left  \lVert  \operatorname{iker}( - i \nabla  - A_0  ) M_{I,\zeta} (x,y) \right \rVert\\
\le &  F\left(\frac \zeta{B_0}\right)\frac {b_1(x,y)+\sqrt {B_0} \exp \left( - \frac {B_0} 8 \lVert x-y \rVert^2 \right)}{(1+\lVert x+x_0 \rVert )^{\varepsilon}}.
\end{align}
And now for the other part, we observe that $W_\varepsilon = A_\varepsilon^2 + V_\varepsilon \in W^{\gamma,\infty}_{(\varepsilon)}(\mathbb R^2, \mathbb R)$ and can then use \autoref{1.resolvent bound} to see
\begin{align}
&\left \lVert \operatorname{iker} W_{\varepsilon,x_0}^{(d)} M_{I,\zeta}   (x,y) \right \rVert \\ 
\le &  \frac {C}{(1+\lVert x+x_0 \rVert )^{\varepsilon}} \lVert \operatorname{iker} M_{ I, \zeta} (x,y) \rVert \\
\le &  F\left(\frac \zeta{B_0}\right)\frac {b_0(x,y)+ \exp \left( - \frac {B_0} 8 \lVert x-y \rVert^2 \right)}{(1+\lVert x +x_0\rVert )^{\varepsilon}} \\
\le &  F\left(\frac \zeta{B_0}\right)\frac {b_1(x,y)+\sqrt {B_0} \exp \left( - \frac {B_0} 8 \lVert x-y \rVert^2 \right)}{(1+\lVert x+x_0 \rVert )^{\varepsilon}}.
\end{align}
In the last step we used $b_0 \le Cb_1$ and $1 = C \sqrt {B_0}$. This shows the first claim.

We use properties we denoted as powers and Hilbert--Schmidt kernel of the $p$-Schatten norms. Hence the $4n_0$-Schatten norm of $T$ can be calculated as the $4n_0$th root of the square integral of the integral kernel of $(TT^*)^{n_0}$. We note, that $u(x) \coloneqq (1+\lVert x \rVert)^\varepsilon$ is log-Lipschitz. We want to use \autoref{iterated comp kernel}. Hence, we define for $0 \le i \le 2n_0-1$
\begin{align}
K_i \coloneqq \begin{cases}  H_{\varepsilon} M_{I,\zeta} & i \text{ even}, \\
\left( H_\varepsilon M_{I,\zeta} \right) ^* & i\text{ odd}. \end{cases}
\end{align}
For even $i$, we choose $u_i(x)=u(x),v_i(x)=1$ and for odd $i$, we choose $v_i(x)= u(x) , u_i(x)=1$. We always have $s_i=1$. Now we can apply \autoref{iterated comp kernel} and get for any $x \not =y$ that

\begin{align}
&\left \lvert \operatorname{iker} \left(  \left( H_\varepsilon  M_{I,\zeta} \right) \left( H_\varepsilon  M_{I,\zeta} \right) ^* \right) ^{n_0}(x,y) \right \rvert \\
\le & F\left( \zeta\right)\frac{ b_0(x,y)+ \exp \left( -\lambda {B_0} \lVert x-y \rVert^2 \right) }{(1+ \lVert x \rVert)^{2n_0 \varepsilon}}.
\end{align}
The function $F$ is in $\Lp^\infty_{loc}(\mathbb C \setminus (2I+1){B_0})$. This integral kernel is in $\Lp^2$, as $2n_0 \varepsilon >1$. The $b_0$ term only appears for $n_0=1$, as for $n_0 >1$, we get $s <0$, which corresponds to a bounded $b_s$.
\end{proof}

We will now prove some useful methods to deal with the differentials we will have to apply in order to use \autoref{Sob emb Schatten2}. We will first see that, in a way, $M_{I,\zeta}$ can take two differentials, and then we will see how to move further differentials past $M_{I,\zeta}$ and $H_{\varepsilon,x_0}$.

Let $j_1,j_2 \in \{1,2\}$ and $h \in \{\pm 1\}$. Then we observe the commutator relation
\begin{align}
\left[ \left(- i \nabla - h A_0 \right)_{j_1} , \left( - i \nabla - A_0 \right)_{j_2} \right] &= i \frac{B_0} 2  \left( [ \nabla_{j_1}, (JX)_{j_2} ] +  h  [  (JX)_{j_1} , \nabla_{j_2} \right) \\
&= i \frac{B_0} 2 \left( J_{j_2 j_1} - h  J_{j_1 j_2} \right). \label{comm relations}
\end{align}
Here, $X$ refers to the multiplication operator associated to the identity on $\mathbb R^2$. 
As the matrix $J$ is skew-symmetric, this states that the so called covariant derivative $-i \nabla +A_0$ commutes with $-i \nabla-A_0$ and hence it commutes with the operators $H_0,P_l,M_{I, \zeta},T_l$ for any $ l \in \mathbb N$, cofinite subset $I \subset \mathbb N$ and any $\zeta \in \mathbb C \setminus {B_0} (2I+1)$.

For $h=+1$, however, it motivates the definition of the annihilation and construction operators. They are defined by
\begin{align}
a_{\pm} \coloneqq \frac 1 {\sqrt {B_0} } \left( \left(- i \nabla - A_0 \right)_1 \pm i   \left(- i \nabla - A_0 \right)_2 \right). 
\end{align}
Using \eqref{comm relations} for $j_1=1,j_2=2,h=1$ and \eqref{def J}, we observe
\begin{align}
{B_0} a_-a_+ =H_0 + B_0 , \quad{B_0} a_+ a_- = H_0 -B_0 , \quad a_+^* = a_-. \label{ a_+a_-}
\end{align}
This implies that $a_-$ is surjective and $a_+$ is injective. Let $l \in \mathbb N$. Then we have
\begin{align}
(H_0+B_0)a_-P_l = {B_0} a_-a_+a_- P_l = a_- (H_0- B_0) P_l= 2lB_0  a_- P_l. 
\end{align}
This states that $a_-P_l$ maps into the eigenspace of $H_0$ with eigenvalue $(2l-1)B_0$, which is the image of $P_{l-1}$. If $l>0$, as $a_-$ is surjective, it has to map the image of $P_l$ onto the image of $P_{l-1}$. With an analogous computation for $a_-^*=a_+$, we arrive at
\begin{align}
P_{l-1}a_-=P_{l-1}a_-P_l=a_-P_l. \label{ann + gen property}
\end{align}
We recall that the operator $M_{i,\zeta}$ has been defined in \eqref{def M_Iz}.
\begin{lemma} \label{2nd diff resolv}
For any $I \subset \mathbb N$ cofinite and $ j_1, j_2 \in\{1,2\}$, there is an $F \in L^\infty \left( \mathbb C \setminus B_0 (2I+1) \right)$, such that for any $\zeta \in \mathbb C \setminus B_0(2I+1)$, we have the estimates
\begin{align}
\left \lVert \left( - i \nabla - A_0 \right)_{j_1}  M_{I, \zeta}  \left( - i \nabla - A_0 \right)_{j_2} \right \rVert_\infty \le F( \zeta) , \\
\left \lVert \left( - i \nabla - A_0 \right)_{j_1}  \left( - i \nabla - A_0 \right)_{j_2}  M_{I, \zeta} \right \rVert_\infty \le F( \zeta).
\end{align}
\end{lemma}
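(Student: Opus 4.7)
The plan is to reduce both estimates to the annihilation and creation operators $a_\pm$ defined just above the lemma. Inverting their definition,
\begin{align*}
(-i\nabla-A_0)_1=\tfrac{\sqrt{B_0}}{2}(a_++a_-),\qquad (-i\nabla-A_0)_2=\tfrac{\sqrt{B_0}}{2i}(a_+-a_-),
\end{align*}
so by bilinearity it suffices to prove both bounds with each factor $(-i\nabla-A_0)_{j_k}$ replaced by $a_{\sigma_k}$ for arbitrary $\sigma_1,\sigma_2\in\{+,-\}$.

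The two structural inputs I would use are, first, that $a_-$ maps the image of $P_l$ into that of $P_{l-1}$ (with $a_-P_0=0$) and $a_+$ maps the image of $P_l$ into that of $P_{l+1}$ (the first from \eqref{ann + gen property}, the second by taking adjoints); and second, that from \eqref{ a_+a_-} one has $a_+a_-P_l=2l\,P_l$ and $a_-a_+P_l=(2l+2)P_l$, giving the sharp operator-norm identities $\lVert a_-P_l\rVert_\infty=\sqrt{2l}$ and $\lVert a_+P_l\rVert_\infty=\sqrt{2l+2}$. Consequently, for any $\sigma_1,\sigma_2$, the operators $a_{\sigma_1}a_{\sigma_2}$ and $a_{\sigma_1}M_{I,\zeta}a_{\sigma_2}$ restricted to the image of $P_l$ land in a single Landau level with operator norm of order $l$.

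Both inequalities then follow from a sup-over-$l$ bound. Decompose $f=\sum_l P_lf$. For the second estimate, $M_{I,\zeta}$ acts on the image of $P_l$ as the scalar $\frac{\mathbf 1_{l\in I}}{B_0(2l+1)-\zeta}$, after which $a_{\sigma_1}a_{\sigma_2}$ shifts the level by $\sigma_1+\sigma_2\in\{-2,0,2\}$ with norm at most $2\sqrt{(l+O(1))(l+O(1))}$. For the first, $a_{\sigma_2}$ first shifts $P_l$ to the level $P_{l+\sigma_2}$ (the contribution vanishing unless $l+\sigma_2\in I$), $M_{I,\zeta}$ multiplies by $\frac{1}{B_0(2(l+\sigma_2)+1)-\zeta}$, and $a_{\sigma_1}$ shifts once more. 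In both cases, contributions from different $l$ lie in mutually orthogonal Landau levels, so Pythagoras reduces the operator norm on $f$ to the supremum over $l$ of a ratio whose numerator and denominator both grow linearly in $l$. The large-$l$ asymptotic is $1/B_0$, and for the finitely many small $l$ with the relevant shifted index in $I$ the denominator is bounded below on any compact subset of $\mathbb C\setminus B_0(2I+1)$. This produces the required upper bound $F(\zeta)$, locally bounded on that set.

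The only genuine obstacle is the bookkeeping: matching the shifted Landau indices against the cofiniteness condition so that every term whose denominator would hit the excluded spectrum $B_0(2I+1)$ is precisely one of those killed by $M_{I,\zeta}$. Once that is in place, the estimate is immediate from the $\sqrt{2l}$-type norms of $a_\pm P_l$.
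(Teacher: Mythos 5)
Your proposal is correct and follows essentially the same route as the paper: both reduce the covariant derivatives to the ladder operators $a_\pm$, use the intertwining relations $a_\pm P_l = P_{l\pm1}a_\pm$ together with $\lVert a_\pm P_l\rVert^2 = 2l+1\pm1$, and then take a supremum over Landau levels via orthogonality (which the paper formalizes through \autoref{orth op fam}, with the indicator $1_I(l+h_2)$ handling the cofiniteness bookkeeping you mention). No substantive difference or gap.
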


\begin{proof} We will only prove the first claim, as the second follows completely analogous. 
As both components of $(- i \nabla -A_0)$  are linear combinations of $a_+,a_-$, it suffices to show that for any $h_1,h_2 \in \{ +,- \}$, we have the required estimate for the operator $a_{h_1} M_{I, \zeta} a_{h_2}$. Let $l \in \mathbb N$.  We consider the operator
\begin{align}
S_l \coloneqq  a_{h_1} M_{I, \zeta} a_{h_2} P_l.
\end{align}
For any $ k\in \mathbb N$, we define $k++=k+1$ and $k+-=k-1$. Using \eqref{ann + gen property}, we see
\begin{align}
a_{h_1} M_{I, \zeta} a_{h_2} P_l &= a_{h_1} M_{I, \zeta} P_{l+h_2}  a_{h_2} \\
&= \frac{ 1_I(l+h_2)}{ (2(l+h_2)+1)B_0 - \zeta}a_{h_1}P_{l+h_2} a_{h_1} \label{A 80}\\
&=\frac{ 1_I(l+h_2)}{ (2(l+h_2)+1)B_0 - \zeta} P_{l+h_1+h_2} a_{h_1}a_{h_2}.
\end{align}
We use the convention $\frac 0 0 =0$ in this proof. 
Hence, the family of operators $S_l$ satisfy the conditions of \autoref{orth op fam}. So we just need to bound the norm of $S_l$. Using \eqref{ a_+a_-}, we observe that for any $h \in \{+,- \} $ and $k \in \mathbb N$,
\begin{align}
\lVert a_hP_k \rVert^2 = \lVert P_k a_h^* a_h P_k \rVert = 2k+1+h.
\end{align}
Using \eqref{A 80}, this leads to
\begin{align}
\lVert S_l \rVert &= \frac{  1_I(l+h_2)}{ \lvert (2(l+h_2)+1)B_0 - \zeta \rvert } \lVert a_{h_1} P_{l+h_2} \rVert \lVert P_{l+h_2} a_{h_2} P_l \rVert \\
&= \frac{  1_I(l+h_2)}{ \lvert (2(l+h_2)+1)B_0 - \zeta \rvert } \sqrt { (2(l+h_2)+1+h_1) ( 2l+1+h_2) }  \le F( \zeta).
\end{align}
This finishes the proof.
\end{proof}

\begin{lemma} \label{commutator replacement}
Let $\gamma,n \in \mathbb N$ and assume that $(B_\varepsilon,V_\varepsilon)$ are $(\gamma,\varepsilon)$ tame. Let $\mathbb N \ni \gamma' \le \gamma$. Then there is a set of matrices $\left( N_ \mu \in \operatorname{Lin}\left(( \mathbb C^{2^{\gamma'}} , \mathbb C^{2^{\gamma'}}\right) \right)_{\mu \in \mathbb N^{n+1}, \lvert \mu \rvert =\gamma'}$, such that for any admissible $I, \zeta,x_0$, the identity
\begin{align}
\left( - i \nabla + A_0 \right) ^{\otimes \gamma'} \left( M_{I,\zeta} H_{\varepsilon, x_0}\right)^n&= \sum_{\mu \in \mathbb N ^{n+1}, \lvert \mu \rvert= \gamma'}N_\mu \left( \bigotimes_{j=1}^n  M_{I ,\zeta} H_{\varepsilon,x_0} ^{(\mu_j)}\right)\otimes \left(-i \nabla + A_0 \right) ^{\otimes \mu_{n+1}}
\end{align}
holds in the sense that both operators agree as continuous operators from the space $W^{\gamma', \infty}_{(\infty)}(\mathbb R^2 , \mathbb C)$ to the space $W^{0, \infty}_{(\infty)}\left (\mathbb R^2 , \mathbb C^{2^{\gamma'}}\right)$.
\end{lemma}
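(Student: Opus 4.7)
The proof rests on two algebraic observations. First, the \emph{conjugate} momentum $-i\nabla + A_0$ commutes with the usual covariant momentum $-i\nabla - A_0$: taking $h=-1$ in \eqref{comm relations} yields $i\tfrac{B_0}{2}(J_{j_2j_1}+J_{j_1j_2})=0$ by antisymmetry of $J$. Hence $-i\nabla+A_0$ commutes with $H_0=(-i\nabla-A_0)^2$ and with every bounded function of $H_0$, so in particular with all $P_l$ and $M_{I,\zeta}$. Second, since $A_0$ is a multiplication operator it commutes with the multiplication operators $A^{(d)}_{\varepsilon,x_0}$, $W^{(d)}_{\varepsilon,x_0}$, so the commutator of $-i\nabla+A_0$ with $H^{(d)}_{\varepsilon,x_0}$ reduces to the action of $-i\nabla$ as a derivation on the coefficients. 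A short computation, using the definitions \eqref{def B_eps^j}--\eqref{def H_{eps,x0}} and the fact that $-i\nabla+A_0$ passes freely through $-i\nabla-A_0$, gives
\[
\bigl[(-i\nabla+A_0)_{j},\,H^{(d)}_{\varepsilon,x_0}\bigr]=H^{(d+1)}_{\varepsilon,x_0},
\]
with the new $j$-index placed in a fixed (say, first) slot of the $(d+1)$-fold tensor.

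The identity is then proved by induction on $\gamma'$. The case $\gamma'=0$ is trivial, with the single multi-index $\mu=(0,\ldots,0)$ and $N_\mu=1$. For the induction step, apply one further copy of $-i\nabla+A_0$ to the left of the inductive formula and push it through the product. Passing through each factor $M_{I,\zeta}$ costs nothing; passing through a factor $H^{(\mu_j)}_{\varepsilon,x_0}$ either produces, by the displayed commutator, an extra tensor index that raises $\mu_j\to\mu_j+1$, or leaves a copy of $-i\nabla+A_0$ on its right which continues to the next factor. Summing over the position at which the fresh derivative finally lands (anywhere from position $1$ to position $n+1$) reproduces the Leibniz-style sum over multi-indices with $|\mu|=\gamma'+1$. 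The bookkeeping of the fresh tensor slot is exactly what the linear maps $N_\mu\in\mathrm{Lin}(\mathbb C^{2^{\gamma'}},\mathbb C^{2^{\gamma'}})$ encode.

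The continuity statement between $W^{\gamma',\infty}_{(\infty)}(\mathbb R^2,\mathbb C)$ and $W^{0,\infty}_{(\infty)}(\mathbb R^2,\mathbb C^{2^{\gamma'}})$ is verified factor by factor: the multiplication operators $A^{(d)}_{\varepsilon,x_0},W^{(d)}_{\varepsilon,x_0}$ act continuously on these weighted-Schwartz-type spaces because $(B_\varepsilon,V_\varepsilon)$ is $(\gamma,\varepsilon)$ tame and $d\le\gamma'\le\gamma$; the differential operators $-i\nabla\pm A_0$ lose one Sobolev order and cost a factor of polynomial growth; and $M_{I,\zeta}$ maps $W^{0,\infty}_{(\infty)}$ into itself continuously, as follows from the pointwise kernel estimate in \autoref{1.resolvent bound} (the logarithmic on-diagonal singularity of $b_0$ is integrable and the off-diagonal Gaussian decays rapidly, so convolution preserves arbitrary polynomial decay).

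The only real obstacle is notational: one must fix a convention for the placement of each newly generated tensor index so that the matrices $N_\mu$ are unambiguously defined. With the convention ``new indices are inserted in the leftmost free slot,'' the $N_\mu$ become explicit permutation-like linear maps on $\mathbb C^{2^{\gamma'}}$ and the induction goes through without further surprises.
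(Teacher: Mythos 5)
Your proposal is correct and follows essentially the same route as the paper: the commutation of $-i\nabla+A_0$ with $-i\nabla-A_0$ (hence with $M_{I,\zeta}$), the commutator identity turning $H^{(d)}_{\varepsilon,x_0}$ into $H^{(d+1)}_{\varepsilon,x_0}$, a Leibniz-type induction with the matrices $N_\mu$ bookkeeping the tensor slots, and factor-by-factor continuity on the weighted spaces. The only difference is organizational --- you run a single induction on $\gamma'$ pushing the fresh derivative through all $n$ factors, whereas the paper first inducts on $\gamma'$ for $n=1$ and then on $n$ --- which changes nothing of substance.
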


\begin{proof}
Let $d \in \mathbb N, h\in \{1,2\}$ with $0 \le d < \gamma$. We recall \eqref{def B_eps^j} to \eqref{def H_{eps,x0}}, and  \eqref{def H_{eps,x0} alt}. 
 We have $A_{\varepsilon,x_0}^{(d)} \in W^{\gamma-d,\infty}_{(\varepsilon)}(\mathbb R^2, \mathbb C^{2^{d+1}} )$ and $W_{\varepsilon,x_0}^{(d)}\in  W^{\gamma-d,\infty}_{(\varepsilon)}(\mathbb R^2, \mathbb C^{2^{d}} )$ by the assumptions and \autoref{dA def}.
Hence, by the product rule, we have for any $\mathbb N \ni \gamma' \le \gamma-d$ that the multiplication operators $A_{\varepsilon,x_0}^{(d)}$ and $W_{\varepsilon, x_0} ^{(d)}$ are continuous operators from $ W^{\gamma',\infty}_{(\infty)}(\mathbb R^2, \mathbb C )$ to the spaces $W^{\gamma',\infty}_{(\infty)}(\mathbb R^2, \mathbb C^{2^{d+1}})$ respectively  $ W^{\gamma',\infty}_{(\infty)}(\mathbb R^2, \mathbb C^{2^{d}})$. Furthermore, the operator  $(-i \nabla + A_0)$ obviously maps $ W^{\gamma'+1,\infty}_{(\infty)}(\mathbb R^2, \mathbb C )$ to   $W^{\gamma',\infty}_{(\infty)}(\mathbb R^2, \mathbb C^2)$ continuously for any $\gamma' \in \mathbb N$.   Finally, by \autoref{1.resolvent bound} and the fact that the covariant derivative $-i \nabla + A_0$ commutes with $M_{I, \zeta}$ by \eqref{comm relations}, for any $\gamma' \in \mathbb N$,  the operators $M_{I,\zeta}$ and $(-i\nabla -A_0 ) M_{I,\zeta}$ are continuous from $W^{\gamma',\infty}_{(\infty)}(\mathbb R^2, \mathbb C )$ to the spaces $W^{\gamma',\infty}_{(\infty)}(\mathbb R^2, \mathbb C)$, respectively $W^{\gamma',\infty}_{(\infty)}(\mathbb R^2, \mathbb C^2)$. These statements guarantee that every composition of operators we consider is well-defined in the claimed sense.

Now, by \eqref{comm relations} and \eqref{def H_{eps,x0} alt}, we have
\begin{align}
&\left(-i \nabla + A_0 \right)_h M_{I,\zeta} H^{(d)}_{\varepsilon,x_0}\\
 =&  M_{I,\zeta}\left(   \left(-i \nabla - A_0 \right) \cdot  \left(-i \nabla + A_0 \right)_h  A^{(d)}_{\varepsilon,x_0}+   \left(-i \nabla + A_0 \right)_h   W^{(d)}_{\varepsilon,x_0} \right) \\
= &M_{I,\zeta} \Big(  \left(-i \nabla - A_0 \right) \cdot A^{(d)}_{\varepsilon,x_0}  \left(-i \nabla + A_0 \right)_h   -i\left(-i \nabla - A_0 \right) \cdot \partial_ h A^{(d)}_{\varepsilon,x_0} \\
& \phantom{M_{I,\zeta} \Big(  } \quad  +W^{(d)}_{\varepsilon,x_0} \left(-i \nabla + A_0 \right)_h -i\partial_h  W^{(d)}_{\varepsilon,x_0} \Big) \\
= &M_{I, \zeta} H^{(d)}_{\varepsilon,x_0}   \left(-i \nabla + A_0 \right)_h + M_{I,\zeta} \left(e_h \cdot  H^{(d+1)}_{\varepsilon,x_0} \right).
\end{align}
The scalar product $e_h \cdot   H^{(d+1)}_{\varepsilon,x_0}$ reduces the first component of the tensor product $\left(\mathbb C^2 \right)^{\otimes(d+1)}$.

 Let $N'_d \colon \mathbb C^{2^d} \otimes \mathbb C^2 \to \mathbb C^2 \otimes \mathbb C^{2^d}$ that swaps the tensor factors ($u \otimes v \mapsto v \otimes u$). Then we have
\begin{align}
\left(-i \nabla + A_0 \right) \otimes  M_{I,\zeta} H^{(d)}_{\varepsilon,x_0} = N'_d M_{I, \zeta} H^{(d)}_{\varepsilon,x_0} \otimes \left(-i \nabla + A_0 \right) +M_{I, \zeta} H^{(d+1)}_{\varepsilon,x_0}.
\end{align}
The case $n=0$ or $\gamma'=0$ is tautological. The case $n=\gamma'=1$ follows, if we set $d=0$ above. Now we consider $n=1$ and the step $\gamma' \mapsto \gamma'+1 \le \gamma$,
\begin{align}
&\left( - i \nabla + A_0 \right) ^{\otimes (\gamma'+1)}M_{I,\zeta} H_{\varepsilon, x_0}\\
 =& \left( - i \nabla + A_0 \right) \otimes \sum_{\mu \in \mathbb N^2, \mu_1+\mu_2=\gamma'} N_\mu   M_{I ,\zeta} H_{\varepsilon,x_0} ^{(\mu_1)}\otimes  \left( - i \nabla + A_0 \right)^{\otimes \mu_2} \\
=& \sum_{\mu \in \mathbb N^2, \mu_1+\mu_2=\gamma'} \left( Id_{\mathbb C^2} \otimes N_ \mu \right) \left( - i \nabla + A_0 \right) \otimes M_{I ,\zeta} H_{\varepsilon,x_0} ^{(\mu_1)}\otimes  \left( - i \nabla + A_0 \right)^{\otimes \mu_2}  \\
=& \sum_{\mu \in \mathbb N^2, \mu_1+\mu_2=\gamma'}  \left( Id_{\mathbb C^2} \otimes N_ \mu \right) \left( N'_{\mu_1}     M_{I, \zeta} H_{\varepsilon, x_0} ^{(\mu_1)} \otimes \left( - i \nabla + A_0 \right) + M_{I, \zeta} H_{\varepsilon, x_0} ^{(\mu_1+1)}  \right)\otimes \left( - i \nabla + A_0 \right)^{\otimes \mu_2} \\
=& \sum_{\mu \in \mathbb N^2, \mu_1+\mu_2=\gamma'+1} N_\mu   M_{I ,\zeta} H_{\varepsilon,x_0} ^{(\mu_1)}\otimes  \left( - i \nabla + A_0 \right)^{\otimes \mu_2}. \\
\end{align}
In the last step, we used the inductive definition
\begin{align}
N_{(\mu_1,\mu_2)} \coloneqq \left( Id_{\mathbb C^2} \otimes N_ {(\mu_1, \mu_2-1)} \right)     \left( N'_{\mu_1} \otimes Id_{\mathbb C^{\mu_2}}\right) +\left( Id_{\mathbb C^2} \otimes N_ {(\mu_1-1, \mu_2)} \right) . 
\end{align}
To conclude the proof, we do an induction on $n$ over the statement of the lemma. The idea is to use the induction hypothesis and then the case $n=1$. We omit the details, as it works pretty similar to the induction on $\gamma'$. The only annoying part is creating a recursive description for the $N_\gamma$s. But we have no use for 
such a description.
\end{proof}

We can now prove \autoref{theorem final summand}.
\begin{proof}[Proof of \autoref{theorem final summand}]
We begin by conjugating with the unitary operator $U_{x_0}$, that we have defined in \autoref{U_{x_0}}. Hence, as the $p$-Schatten quasi norm is unitarily equivalent, we have
\begin{align}
&\left \lVert \int_\Gamma 1_{[0,1]^2-x_0} \left(M_{\mathbb N ,\zeta} H_\varepsilon \right)^n \frac 1 {H-\zeta}   \left(   H_\varepsilon M_{\mathbb N ,\zeta}  \right)^n 1_{L \Lambda^\complement} d \zeta \right \rVert_p \\
=& \left \lVert \int_\Gamma 1_{[0,1]^2}  \left(M_{\mathbb N ,\zeta} H_ {\varepsilon,x_0} \right)^n  U_{x_0}\frac 1 {H-\zeta}   \left(   H_\varepsilon M_{\mathbb N ,\zeta}  \right)^n 1_{L \Lambda^\complement}  U_{x_0}^{-1} d \zeta \right \rVert_p \label{final summand conjugated}
\end{align}
Let $q$ satisfy $\frac 1 q + \frac 1 2 = \frac 1 p$. As $p > \frac 2 { \gamma +3}$, we have $q> \frac 2 { \gamma+2}$. Hence, we can apply \autoref{Sob emb Schatten2} with $\gamma+2$ and the property Hölder I (see \autoref{pnorm facts}) to get the upper bound
\begin{align}
\eqref{final summand conjugated}\le &C \left \lVert  \int_\Gamma \left(M_{\mathbb N ,\zeta} H_ {\varepsilon,x_0} \right)^n  U_{x_0}\frac 1 {H-\zeta}   \left(   H_\varepsilon M_{\mathbb N ,\zeta}  \right)^n 1_{L \Lambda^\complement}  U_{x_0}^{-1} d \zeta \right \rVert_{S_2(\Lp^2(\mathbb R^2), H^{\gamma+2}([0,1]^2))} \\ 
\le& C \int_\Gamma \left \lVert  \left(M_{\mathbb N ,\zeta} H_ {\varepsilon,x_0} \right)^n  U_{x_0}\frac 1 {H-\zeta}   \left(   H_\varepsilon M_{\mathbb N ,\zeta}  \right)^n 1_{L \Lambda^\complement}  U_{x_0}^{-1}  \right \rVert_{S_2(\Lp^2(\mathbb R^2), H^{\gamma+2}([0,1]^2))} d \zeta. \label{last summand int est}
\end{align}
The last step relies on the fact that the Hilbert--Schmidt norm (2-Schatten norm) is a norm and not just a quasi-norm. Now it suffices to bound the integrand uniformly on the integration path. For this, we first use the quasi-isometry $D_{\gamma+2}$ constructed in \autoref{H quasi iso}. Hence, we have
\begin{align}
& \left \lVert  \left(M_{\mathbb N ,\zeta} H_ {\varepsilon,x_0} \right)^n  U_{x_0}\frac 1 {H-\zeta}   \left(   H_\varepsilon M_{\mathbb N ,\zeta}  \right)^n 1_{L \Lambda^\complement}  U_{x_0}^{-1}  \right \rVert_{S_2(\Lp^2(\mathbb R^2), H^{\gamma+2}([0,1]^2))} \\
\le & C \sum_{\gamma'=-2}^\gamma  \left \lVert \left( - i \nabla +A_0 \right)^{\otimes (\gamma'+2)} \left(M_{\mathbb N ,\zeta} H_ {\varepsilon,x_0} \right)^n  U_{x_0}\frac 1 {H-\zeta}   \left(   H_\varepsilon M_{\mathbb N ,\zeta}  \right)^n 1_{L \Lambda^\complement}  U_{x_0}^{-1}  \right \rVert_{S_2(\Lp^2(\mathbb R^2), \Lp^2([0,1]^2)}  \\
=& C\sum_{\gamma'=-2}^\gamma  \left \lVert1_{[0,1]^2} \left( - i \nabla +A_0 \right)^{\otimes (\gamma'+2)} \left(M_{\mathbb N ,\zeta} H_ {\varepsilon,x_0} \right)^n  U_{x_0}\frac 1 {H-\zeta}   \left(   H_\varepsilon M_{\mathbb N ,\zeta}  \right)^n 1_{L \Lambda^\complement}  U_{x_0}^{-1}  \right \rVert_2 \\ 
\le &C \sum_{\gamma'=-2}^\gamma  \left \lVert1_{[0,1]^2} \left( - i \nabla +A_0 \right)^{\otimes (\gamma'+2)} \left(M_{\mathbb N ,\zeta} H_ {\varepsilon,x_0} \right)^n \right \rVert_\infty  \left \lVert U_{x_0} \right \rVert_\infty  \left \lVert \frac 1 {H-\zeta}   \left(   H_\varepsilon M_{\mathbb N ,\zeta}  \right)^n 1_{L \Lambda^\complement}   \right \rVert_2  \left \lVert U_{x_0}^{-1}  \right \rVert_\infty \\
= &  C\sum_{\gamma'=-2}^\gamma \left \lVert1_{[0,1]^2} \left( - i \nabla +A_0 \right)^{\otimes (\gamma'+2)} \left(M_{\mathbb N ,\zeta} H_ {\varepsilon,x_0} \right)^n \right \rVert_\infty  \left \lVert \frac 1 {H-\zeta}\right \rVert_\infty  \left \lVert   \left(   H_\varepsilon M_{\mathbb N ,\zeta}  \right)^n 1_{L \Lambda^\complement}   \right \rVert_2 \label{last summand split eq}
\end{align}
The third step relies on applications of Hölder I (see \autoref{pnorm facts}). The last step uses that $U_{x_0}$ is unitary on $\Lp^2(\mathbb R^2)$ and another application of Hölder I. The conjugation with $U_{x_0}$ was only needed fo the first term. It does make a difference there, as $U_{x_0}$ is not unitary on $H^{\gamma+2}([0,1]^2)$ and does not commute with $D_{\gamma+2}$.

We begin with the last factor in \eqref{last summand split eq}. As we are still free to choose $n \in \mathbb N$, we can assume $n>2$. We use the kernel estimate in \autoref{H eps M kernel} and \autoref{iterated comp kernel}, similar to the proof of the second result of \autoref{H eps M kernel} to arrive at the following estimate for any $x,y \in \mathbb R^2$:
\begin{align}
\left \lvert \operatorname{iker} \left( H_\varepsilon M_{ \mathbb N, \zeta} \right)^n (x,y) \right \rvert \le F( \zeta) \frac{ \exp \left( - \lambda \lVert x-y \rVert^2 \right)}{(1+\lvert y \rVert)^{n \varepsilon}}.
\end{align}
Now we let $n \varepsilon > 1 +\nu$. Then using the Hilbert--Schmidt kernel identity, we have
\begin{align}
 \left \lVert   \left(   H_\varepsilon M_{\mathbb N ,\zeta}  \right)^n 1_{L \Lambda^\complement}   \right \rVert_2^2=& \int_{\mathbb R^2} dx \int_{L \Lambda^\complement} dy \left \lvert \operatorname{iker} \left( H_\varepsilon M_{ \mathbb N, \zeta} \right)^n (x,y) \right \rvert^2 \\
\le & F(\zeta)  \int_{L \Lambda^\complement} dy \frac 1 {(1+\lVert y \rVert )^{2+2\nu}} \\
\le & F( \zeta) L^{-2 \nu} \label{last summand last factor HS}
\end{align}
In the second step, we use that the Gauss kernel is integrable over $x$, that the integral is independent of $y$, and that $n \varepsilon > 1 +\nu$. The third step uses that there is some $r>0$ such that $D_r(0) \subset \Lambda$ and that $\nu>0$.

For the second factor in \eqref{last summand split eq}, we observe
\begin{align}
\left \lVert \frac 1 {H- \zeta} \right \rVert_\infty = \frac 1 {\operatorname{dist}(\zeta, \sigma(H))}, \label{last summand middle factor}
\end{align}
which is bounded along the path $\Gamma$.

For the first factor in \eqref{last summand split eq}, we first consider the case $\gamma' \ge0$. Here, 
we start by using \autoref{commutator replacement} with the parameters $\gamma'$ and $2$. Hence, we have
\begin{align}
&\left \lVert1_{[0,1]^2} \left( - i \nabla +A_0 \right)^{\otimes (\gamma'+2)} \left(M_{\mathbb N ,\zeta} H_ {\varepsilon,x_0} \right)^n \right \rVert_\infty  \\
=&\left \lVert1_{[0,1]^2} \left( - i \nabla +A_0 \right)^{\otimes 2} \otimes \! \sum_{\substack {\mu \in \mathbb N^{3}, \\ \lvert \mu \rvert=\gamma'}} \! N_\mu  M_{\mathbb N ,\zeta} H_{\varepsilon,x_0} ^{(\mu_1)}  \otimes  M_{\mathbb N ,\zeta} H_{\varepsilon,x_0} ^{(\mu_2)} \otimes \left(-i \nabla + A_0 \right) ^{\otimes \mu_3}   \left(M_{\mathbb N ,\zeta} H_ {\varepsilon,x_0} \right)^{n-2} \right \rVert_\infty \\
\le & C \sup_{\mu_1,\mu_2,\mu_3 \le \gamma} \left \lVert 1_{[0,1]^2} \left( -i \nabla + A_0 \right)^{\otimes 2} \otimes M_{\mathbb N, \zeta} \right \rVert_\infty \left \lVert  H_{\varepsilon,x_0} ^{(\mu_1)}  \otimes  M_{\mathbb N ,\zeta} H_{\varepsilon,x_0} ^{(\mu_2)} \right \rVert_\infty \\
& \quad \times \left \lVert \left(-i \nabla + A_0 \right) ^{\otimes \mu_3}   \left(M_{\mathbb N ,\zeta} H_ {\varepsilon,x_0} \right)^{n-2} \right \rVert_\infty. \label{long resolvent chain}
\end{align}
In the last step, we also used that $\mu_j \le \gamma' \le \gamma$. 
Now we need to estimate these three factors. We begin with the first one. 

By the proof of \autoref{H quasi iso}, we conclude that the map $D_2' \colon H^2([0,1]^2) \to \Lp^2([0,1]^2,\mathbb C^7)$ given by $u \mapsto \left( (-i \nabla - A_0)^{\otimes j} u \right)_{j=0}^2$ is a quasi-isometry. Hence, as the operators $M_{\mathbb N, \zeta}$ and $(-i \nabla -A_0) M_{\mathbb N, \zeta}$ are bounded by \autoref{1.resolvent bound} , and the operator $(-i \nabla -A_0)^{\otimes 2} M_{\mathbb N , \zeta}$ is bounded by \autoref{2nd diff resolv}, we have
\begin{align}
\left \lVert 1_{[0,1]^2} \left( -i \nabla +A_0 \right) ^{\otimes 2} M_{\mathbb N , \zeta} \right \rVert_\infty \le& C \left \lVert M_{\mathbb N, \zeta} \right \rVert_{S_\infty(\Lp^2(\mathbb R^2), H^2([0,1]^2)}\\
 \le& C \sum_{j=0}^2\left \lVert \left( -i \nabla -A_0 \right)^{\otimes j} M_{\mathbb N, \zeta} \right \rVert_\infty \le F(\zeta). \label{(-idel +a)^2M}
\end{align}
For any $\mathbb N \ni d \le \gamma$, the multiplication operators $A_{\varepsilon,x_0}^{(d)}, W_{\varepsilon,x_0}^{(d)}$ are bounded operators with a norm not depending on $x_0$. Furthermore, by \autoref{1.resolvent bound}, the operators $M_{\mathbb N, \zeta}, (-i \nabla -A_0) M_{\mathbb N, \zeta}$ and $M_{\mathbb N,\zeta} (-i \nabla -A_0)= \left( (-i \nabla- a_0) M_{I , \overline \zeta } \right)^*$ are bounded, and the operator $(-i \nabla -A_0) \otimes M_{\mathbb N, \zeta} (-i \nabla - A_0)$ is bounded by \autoref{2nd diff resolv}. Now, we use \eqref{def H_{eps,x0}} and \eqref{def H_{eps,x0} alt} to conclude
\begin{align}
&\left \lVert  H_{\varepsilon,x_0} ^{(\mu_1)}  \otimes  M_{\mathbb N ,\zeta} H_{\varepsilon,x_0} ^{(\mu_2)} \right \rVert_\infty \\
=& \left \lVert \left(   A_{\varepsilon,x_0} ^{(\mu_1)} \cdot (-i \nabla - A_0) +  W_{\varepsilon,x_0} ^{(\mu_1)} \right)  \otimes  M_{\mathbb N ,\zeta} \left(    (-i \nabla - A_0) \cdot A_{\varepsilon,x_0} ^{(\mu_2)} +  W_{\varepsilon,x_0} ^{(\mu_2)} \right)   \right \rVert_\infty \le  F(\zeta). \label{Heps M Heps}
\end{align}
We are left to estimate the expression in \eqref{long resolvent chain}. We rename $\mu_3$ to $d$ and do an induction over $d$ for $0 \le d \le \gamma$. Let $e  \in \mathbb N$ be minimal with $e \varepsilon \ge 1$. The claim of our induction is that for $n \ge d (e+2)+3$, we have the estimate
\begin{align}
 \left \lVert \left(-i \nabla + A_0 \right) ^{\otimes d}   \left(M_{\mathbb N ,\zeta} H_ {\varepsilon,x_0} \right)^{n-2} \right \rVert_\infty \le F(\zeta)(1+\lVert x_0 \rVert)^d, \label{ind hyp in last summand proof}
\end{align}
for some $F \in \Lp^\infty_{loc}(\mathbb C \setminus \sigma(H))$ depending on $n,d$. 
The induction start at $d=0$ only uses that  $ \lVert M_{I, \zeta}H_{\varepsilon,x_0}\rVert_\infty =\left  \lVert \left( H_{\varepsilon,x_0} M_{I, \overline{\zeta}}\right) ^*\right \rVert_\infty \le F(\zeta)$ by \autoref{H eps M kernel} and that the product of bounded operators is bounded. For the step $d \to d+1 \le \gamma$, we first use \autoref{commutator replacement} with the parameters $d$ and $e+2$. Hence, we have
\begin{align}
 &\left \lVert \left(-i \nabla + A_0 \right) ^{\otimes (d+1)}   \left(M_{\mathbb N ,\zeta} H_ {\varepsilon,x_0} \right)^{n-2} \right \rVert_\infty \\
=& \left \lVert \left(-i \nabla + A_0 \right) \otimes \sum_{\mu \in \mathbb N^{e+3}, \lvert \mu \rvert=d} N_\mu \bigotimes_{j=1}^{e+2} \left(M_{I ,\zeta} H_{\varepsilon,x_0} ^{(\mu_j)}\right)\otimes \left(-i \nabla + A_0 \right) ^{\otimes \mu_{e+3}} \left(M_{\mathbb N ,\zeta} H_ {\varepsilon,x_0} \right)^{n-4-e} \right \rVert_\infty \\
\le &C \sup_{\mu \in \mathbb N_{\le d}^{e+3} } \left \lVert \left(-i \nabla + A_0 \right) \otimes  \bigotimes_{j=1}^{e+2} \left(M_{I ,\zeta} H_{\varepsilon,x_0} ^{(\mu_j)}\right)\right \rVert_\infty \left \lVert   \left(-i \nabla + A_0 \right) ^{\otimes \mu_{e+3}} \left(M_{\mathbb N ,\zeta} H_ {\varepsilon,x_0} \right)^{n-4-e} \right \rVert_\infty \\
\le & C\sup_{\mu \in \mathbb N_{\le d}^{e+2} } \left \lVert \left(-i \nabla + A_0 \right)M_{\mathbb N, \zeta} \otimes   \bigotimes_{j=1}^{e} \left( H_{\varepsilon,x_0} ^{(\mu_j)}M_{I ,\zeta} \right)\right \rVert_\infty \left \lVert H_{\varepsilon,x_0}^{(\mu_{e+1})}\otimes  M_{\mathbb N, \zeta} H_{\varepsilon, x_0}^{(\mu_{e+2})} \right \rVert_\infty F(\zeta) (1+ \lVert x_0 \rVert)^d \\
\le & F(\zeta)(1+ \lVert x_0 \rVert)^d  \sup_{\mu \in \mathbb N_{\le d}^{e+2} } \left \lVert \left(-i \nabla + A_0 \right)M_{\mathbb N, \zeta} \otimes   \bigotimes_{j=1}^{e} \left( H_{\varepsilon,x_0} ^{(\mu_j)}M_{\mathbb N ,\zeta} \right)\right \rVert_\infty  . 
\end{align}
In the third step, we used the induction hypothesis and in the last step we used \eqref{Heps M Heps}. The remaining operator is just a product of integral operators.  The kernel of $(-i \nabla + A_0) M_{\mathbb N, \zeta}$ can be bounded using \autoref{1.resolvent bound}. Hence, we have
\begin{align}
&\left \lVert \operatorname{iker} (-i \nabla + A_0) M_{\mathbb N, \zeta} (x,y) \right \rVert \\
\le &\left  \lVert \operatorname{iker} (-i \nabla - A_0) M_{\mathbb N, \zeta} (x,y) \right \rVert +C \lVert x \rVert \lVert \operatorname{iker} M_{\mathbb N, \zeta} (x,y) \rVert\\
 \le& F(\zeta)(1+\lVert x \rVert ) \left(b_1(x,y)+\exp(- \lambda \lVert x-y \rVert^2)\right) . \label{6.51}
\end{align}
We used $b_0 \le C b_1$. 
We have estimated the integral kernels of the operators $H_{\varepsilon,x_0} ^{(\mu_j)}M_{\mathbb N ,\zeta}$ in \autoref{H eps M kernel}. Now, we can apply \autoref{iterated comp kernel}. As $\varepsilon<1$, we have $e>1$ and hence there is no singularity on the diagonal (the $b_s$ term is bounded). Hence, we have
\begin{align}
&\left \lVert \operatorname{iker} \left(-i \nabla + A_0 \right)M_{\mathbb N, \zeta} \otimes   \bigotimes_{j=1}^{e} \left( H_{\varepsilon,x_0} ^{(\mu_j)}M_{\mathbb N ,\zeta} \right) (x,y) \right \rVert \\
\le & F(\zeta) \frac{1+ \lVert x \rVert} {(1+ \lVert x+x_0 \rVert)^{e \varepsilon}} \exp(- \lambda \lVert x-y \rVert^2) \\
\le& F(\zeta) (1+ \lVert x_0 \rVert)  \exp(- \lambda \lVert x-y \rVert^2). \label{(-idel+A)M}
\end{align}
The final step relies on the fact $e \varepsilon \ge 1$. Using \autoref{nice>bounded}, we can conclude
\begin{align}
 \left \lVert \left(-i \nabla + A_0 \right)M_{\mathbb N, \zeta} \otimes   \bigotimes_{j=1}^{e} \left( H_{\varepsilon,x_0} ^{(\mu_j)}M_{\mathbb N ,\zeta} \right)\right \rVert_\infty \le F(\zeta) (1+ \lVert x_0 \rVert).
\end{align}
This finishes the induction over $d$. Hence, we have proven \eqref{ind hyp in last summand proof} and can continue the estimate in \eqref{long resolvent chain}. Using \eqref{(-idel +a)^2M} and \eqref{Heps M Heps}, we observe that for $0 \le \gamma' \le \gamma$, we have
\begin{align}
\left \lVert1_{[0,1]^2} \left( - i \nabla +A_0 \right)^{\otimes (\gamma'+2)} \left(M_{\mathbb N ,\zeta} H_ {\varepsilon,x_0} \right)^n \right \rVert_\infty \le F(\zeta) (1+ \lVert x_0 \rVert)^{\gamma} .
\end{align}
Now we need to consider the case $\gamma' \in \{-2,-1\}$. For  these, we estimate
\begin{align}
&\left \lVert1_{[0,1]^2} \left( - i \nabla +A_0 \right)^{\otimes (\gamma'+2)} \left(M_{\mathbb N ,\zeta} H_ {\varepsilon,x_0} \right)^n \right \rVert_\infty  \\
\le &  \left \lVert1_{[0,1]^2} \left( - i \nabla +A_0 \right)^{\otimes (\gamma'+2)}  M_{\mathbb N, \zeta} \right \rVert_\infty \left \lVert H_{\varepsilon, x_0} M_{\mathbb N, \zeta} H_{\varepsilon, x_0} \right \rVert_\infty \left \lVert M_{\mathbb N, \zeta} H_{\varepsilon,x_0} \right \rVert_\infty^{n-2}  \\
\le& F(\zeta) \le  F(\zeta) (1+ \lVert x_0 \rVert)^{\gamma} .
\end{align}
The operator $M_{\mathbb N, \zeta} H_{\varepsilon x_0}=H_{\varepsilon,x_0}M_{\mathbb N, \overline {\zeta}}^*$ has an operator norm $\le F(\zeta)$ by \autoref{H eps M kernel}, the middle factor is bounded by \eqref{Heps M Heps}, and the first factor is bounded by \autoref{1.resolvent bound} for $\gamma'=-2$ and by \eqref{6.51} for $\gamma'=-1$, in both cases the operator norm is $\le F(\zeta)$.

Now we have suitable upper bounds for the all factors in \eqref{last summand split eq}. The other factors are bounded by \eqref{last summand last factor HS} and \eqref{last summand middle factor}. Thus, we conclude
\begin{align}
& \left \lVert  \left(M_{\mathbb N ,\zeta} H_ {\varepsilon,x_0} \right)^n  U_{x_0}\frac 1 {H-\zeta}   \left(   H_\varepsilon M_{\mathbb N ,\zeta}  \right)^n 1_{L \Lambda^\complement}  U_{x_0}^{-1}  \right \rVert_{S_2(\Lp^2(\mathbb R^2), H^{\gamma+2}[0,1]^2)} \le F(\zeta) (1+ \lVert x_0 \rVert)^\gamma L^{- \nu}.
\end{align}
Using \eqref{last summand int est}, we have now finished this proof.
\end{proof}

We need one more technical lemma to prove \autoref{cube norm}. 
\begin{lemma} \label{hilfslemma}
Let $d \in \mathbb N, \kappa \in [0, \infty)$, and let $S$ be an integral operator on $\Lp^2(\mathbb R^2)$ satisfying for any $x,y \in \mathbb R^2$
\begin{align}
\lvert \operatorname{iker} S (x,y) \rvert \le C \frac{(1+\lVert x \rVert)^d}{(1+\lVert x +x_0 \rVert)^\kappa} \exp \left( -\lambda \lVert x- y \rVert^2 \right). 
\end{align}
Furthermore, let $\Omega \subset \mathbb R^2$ be bounded. Then there are constants $C,\lambda'$ such that for any $R \in [0,\infty)$, we have the estimate
\begin{align}
\left \lVert 1_{\Omega} S 1_{D_R^\complement(0)} \right \rVert_2 \le C \frac {\exp(- \lambda' R^2)} {(1+ \lVert x_0 \rVert)^\kappa}.
\end{align}
\end{lemma}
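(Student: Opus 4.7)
The plan is to compute the Hilbert--Schmidt norm directly from its integral-kernel representation and exploit the boundedness of $\Omega$ together with the Gaussian decay in $\|x-y\|$. Writing $M_\Omega \coloneqq \sup_{x\in\Omega}\|x\|<\infty$, I would start from
\begin{align}
\left\lVert 1_\Omega S 1_{D_R^\complement(0)}\right\rVert_2^2
= \int_{\Omega}\int_{D_R^\complement(0)} \bigl|\operatorname{iker}S(x,y)\bigr|^2\,dy\,dx
\end{align}
and insert the hypothesized kernel estimate. Since $\Omega$ is bounded, the factor $(1+\|x\|)^{2d}$ is uniformly bounded on $\Omega$ and can be absorbed into the constant.

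The next step is to replace $(1+\|x+x_0\|)^{-2\kappa}$ by a bound in $(1+\|x_0\|)^{-2\kappa}$ that is uniform in $x\in\Omega$. By the reverse triangle inequality, $1+\|x+x_0\|\ge 1+\|x_0\|-M_\Omega$ for $x\in\Omega$. Splitting into the two cases $\|x_0\|\le 2M_\Omega$ (where $1+\|x_0\|$ is itself bounded by a constant, and $1+\|x+x_0\|\ge 1$) and $\|x_0\|>2M_\Omega$ (where $1+\|x+x_0\|\ge \tfrac{1}{2}(1+\|x_0\|)$), one obtains a constant $c_\Omega>0$ with
\begin{align}
\frac{1}{(1+\|x+x_0\|)^{2\kappa}} \le \frac{c_\Omega}{(1+\|x_0\|)^{2\kappa}}\qquad (x\in\Omega).
\end{align}

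What remains is to control the Gaussian double integral. For $R\le 2M_\Omega$ the target bound $\exp(-\lambda' R^2)$ is bounded below by a positive constant, so it suffices to estimate the integral by $|\Omega|\int_{\mathbb R^2}\exp(-2\lambda\|z\|^2)\,dz$, which is finite. For $R>2M_\Omega$, fix $x\in\Omega$ and substitute $z=y-x$: the condition $\|y\|\ge R$ forces $\|z\|\ge R-\|x\|\ge R-M_\Omega\ge R/2$, and one computes
\begin{align}
\int_{D_R^\complement(0)}\exp(-2\lambda\|x-y\|^2)\,dy
\le \int_{\|z\|\ge R/2}\exp(-2\lambda\|z\|^2)\,dz
\le \frac{\pi}{2\lambda}\exp\!\left(-\tfrac{\lambda}{2}R^2\right),
\end{align}
using polar coordinates. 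Integrating this over $x\in\Omega$ contributes only a factor $|\Omega|$.

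Combining the three ingredients yields
\begin{align}
\left\lVert 1_\Omega S 1_{D_R^\complement(0)}\right\rVert_2^2 \le \frac{C\exp(-\tfrac{\lambda}{2}R^2)}{(1+\|x_0\|)^{2\kappa}},
\end{align}
and taking square roots gives the claim with $\lambda'=\lambda/4$. No step is really an obstacle here; the only point requiring a little care is the dichotomy in $\|x_0\|$ used to get a clean $(1+\|x_0\|)^{-\kappa}$ factor, and the dichotomy in $R$ used so that the Gaussian estimate $\|z\|\ge R/2$ is valid. Both are resolved by absorbing the ``small'' regime into the constant $C$.
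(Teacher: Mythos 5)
Your proposal is correct and follows essentially the same route as the paper: compute the Hilbert--Schmidt norm as the $\Lp^2$ norm of the kernel on $\Omega\times D_R^\complement(0)$, absorb $(1+\lVert x\rVert)^{2d}$ and convert $(1+\lVert x+x_0\rVert)^{-2\kappa}$ into $(1+\lVert x_0\rVert)^{-2\kappa}$ using the boundedness of $\Omega$, and then exploit the Gaussian tail. The paper handles the Gaussian via its \autoref{gauss disc} lemma (recentring the exponential at $y$ and integrating over $D_R^\complement(0)$), while you shift variables and split into the regimes $R\le 2M_\Omega$ and $R>2M_\Omega$ — a cosmetic difference only.
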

\begin{proof}
We use the Hilbert--Schmidt kernel property (see \autoref{pnorm facts}). Hence, by the unitary equivalence of the $p$-Schatten norms, we have
\begin{align}
\left \lVert 1_{\Omega} S 1_{D_R^\complement(0)} \right \rVert_2^2 =& \int_{\Omega} dx \int_{D_R^\complement (0) }dy\left \lVert  \operatorname{iker} S (x,y) \right \rVert^2 \\
\le & C\int_{\Omega} dx \int_{D_R^\complement (0) }dy \frac{(1+\lVert x \rVert)^{2d}}{(1+\lVert x +x_0 \rVert)^{2\kappa}} \exp \left( -2\lambda \lVert x- y \rVert^2 \right) \\
\le &C  \int_{\Omega} dx \int_{D_R^\complement (0) }dy  \frac{1}{(1+\lVert x_0 \rVert)^{2\kappa}} \exp(- \lambda \lVert y \rVert^2)  \\
\le &C   \frac{\exp (- 2\lambda' R^2) }{(1+\lVert x_0 \rVert)^{2\kappa}} 
\end{align}
The second step uses $x \in \Omega$ and \autoref{gauss disc}.  Then we used \autoref{gauss disc} again. This finishes the proof.
\end{proof}

\begin{proof}[Proof of \autoref{cube norm}]
We start off similarly to the proof of \autoref{theorem final summand}. In particular, we begin by conjugating with the unitary operator $U_{x_0}$, as defined in \autoref{U_{x_0}}. Hence, we have\footnote{We have already mentioned this equality in \eqref{conjugate U_x0}.}
\begin{align}
\left \lVert 1_{[0,1]^2-x_0} (T_l H_\varepsilon)^m P_l (H_\varepsilon T_l)^{k-m} 1_{D^\complement_R(x_0)}\right \rVert_p
=& \left \lVert 1_{[0,1]^2} (T_l H_{\varepsilon,x_0})^m P_l (H_{\varepsilon,x_0} T_l)^{k-m} 1_{D^\complement_R(0)}\right \rVert_p. \label{cube norm proof eq1}
\end{align}
Now, once again, let $q$ satisfy $\frac 1 q + \frac 1 2 = \frac 1 p$. As $p > \frac 2 { \gamma +3}$, we have $q> \frac 2 { \gamma+2}$. Hence, we can apply \autoref{Sob emb Schatten2} with $\gamma+2$ and the property Hölder I (see \autoref{pnorm facts}) to get the upper bound
\begin{align}
\eqref{cube norm proof eq1}\le &C \left \lVert   1_{[0,1]^2} (T_l H_{\varepsilon,x_0})^m P_l (H_{\varepsilon,x_0} T_l)^{k-m} 1_{D^\complement_R(0)} \right \rVert_{S_2(\Lp^2(\mathbb R^2), H^{\gamma+2}([0,1]^2))} \\ 
\le& C\sum_{\gamma'=-2}^\gamma\left \lVert   1_{[0,1]^2} (-i\nabla+A_0)^{\otimes(\gamma'+2)}(T_l H_{\varepsilon,x_0})^m P_l (H_{\varepsilon,x_0} T_l)^{k-m} 1_{D^\complement_R(0)} \right \rVert_{2}.
\end{align}
We used the quasi-isometry $D_{\gamma+2}$ as constructed in \autoref{H quasi iso}. 
We will now establish two kernel estimates that will be needed to finish this proof.

Looking at \eqref{pl kernel}, we observe that for any $d \in \mathbb N$ and $h \in \{0,1\}$, there are $\lambda,C \in \mathbb R^+$, depending on $B_0,l,d,h$, such that for any $x,y \in \mathbb R^2$, we have the upper bound
\begin{align} 
\left \lVert \left( -i \nabla -A_0\right)^{\otimes h}\otimes  \left(-i \nabla_x + A_0(x) \right)^{\otimes d} p_l(x,y) \right \rVert \le C (1+\lVert x \rVert)^{d+h} \exp\left( - \lambda \lVert x-y \rVert^2 \right). \label{pl is smooth}
\end{align}
Let $\mathbb N \ni j \le \gamma$. Then, using \eqref{def H_{eps,x0}}, we observe
\begin{align}
\left \lVert \operatorname{iker}H_{\varepsilon,x_0}^{(j)} \left(-i \nabla + A_0 \right)^{\otimes d} P_l(x,y) \right \rVert \le C\frac{ (1+\lVert x \rVert)^{d+1}}{(1+\lVert x +x_0  \rVert)^\varepsilon} \exp\left( - \lambda \lVert x-y \rVert^2 \right). \label{H eps smooth pl}
\end{align}
Now we consider the case $m=0$. Here, we can use \eqref{pl is smooth}, the kernel estimate for $H_{\varepsilon,x_0} T_l$, that is provided by \autoref{H eps M kernel}, and \autoref{iterated comp kernel} to arrive at
\begin{align}
\left \lVert \operatorname{iker}      (-i\nabla+A_0)^{\otimes(\gamma'+2)} P_l \left( H_{\varepsilon,x_0} T_l\right)^{k} (x,y) \right \rVert \le C \frac{(1+\lVert x \rVert)^{\gamma'+2} }{(1+\lVert x+ x_0 \rVert)^{k \varepsilon} } \exp\left( - \lambda \lVert x-y \rVert^2 \right).
\end{align}
As the kernel of $(-i\nabla +A_0)^{\otimes d}P_l$ has no singularity at the diagonal, the term $b_s$ can be ignored. By \autoref{hilfslemma}, we have now finished the case $m=0$.

Now we consider the case $m>0$ and $\gamma' \in \{-2,-1\}$. Here, we can use \autoref{1.resolvent bound} to get (compare \eqref{6.51})
\begin{align}
\left \lVert \operatorname{iker} \left(-i\nabla +A_0\right)^{\otimes (\gamma'+2)}T_l (x,y) \right \rVert \le C (1+ \lVert x \rVert )^{\gamma'+2} \left( b_1(x,y) + \exp(-\lambda \lVert x-y \rVert^2)\right).
\end{align}
With this kernel estimate, the one in \autoref{H eps M kernel}, and \eqref{H eps smooth pl} with $d=j=0$, we can employ \autoref{iterated comp kernel} to get
\begin{align}
\left \lVert \operatorname{iker} \left(-i\nabla +A_0\right)^{\otimes (\gamma'+2)} (T_l H_{\varepsilon,x_0})^m P_l (H_{\varepsilon,x_0} T_l)^{k-m} (x,y)\right \rVert \le C \frac{(1+\lVert x \rVert)^{\gamma'+3}}{(1+ \lVert x+x_0\rVert)^{k \varepsilon}} \exp \left( -\lambda \lVert x-y \rVert^2 \right).
\end{align}
Once again, as $p_l$ has no singularity at the diagonal, the term $b_s$ can be ignored and by \autoref{hilfslemma}, we have finished this case as well.

We are left with the case $m>0$ and $\gamma'  \ge 0$. Here, we first apply \autoref{commutator replacement} with the parameters $\gamma'$ and $m$. Hence, we have
\begin{align}
&\left \lVert 1_{[0,1]^2}  (-i\nabla+A_0)^{\otimes(\gamma'+2)}(T_l H_{\varepsilon,x_0})^m P_l (H_{\varepsilon,x_0} T_l)^{k-m} 1_{D^\complement_R(0)} \right \rVert_{2} \\
=& \left \lVert 1_{[0,1]^2}  (-i\nabla+A_0)^{\otimes2}\otimes \sum_{\mu \in \mathbb N^k, \lvert \mu \rvert=\gamma'} N_\mu \bigotimes_{j=1}^m(T_l H_{\varepsilon,x_0}^{(\mu_j)})\otimes  \left( - i \nabla+A_0 \right) ^{\otimes \mu_{m+1}} P_l (H_{\varepsilon,x_0} T_l)^{k-m} 1_{D^\complement_R(0)} \right \rVert_{2} \\
\le & C \sup_{\mu \in \mathbb N_ {\le \gamma}^{m+1} } \left \lVert 1_{[0,1]^2}  (-i\nabla+A_0)^{\otimes2}\otimes  \bigotimes_{j=1}^m(T_l H_{\varepsilon,x_0}^{(\mu_j)})\otimes  \left( - i \nabla+A_0 \right) ^{\otimes \mu_{m+1}} P_l (H_{\varepsilon,x_0} T_l)^{k-m} 1_{D^\complement_R(0)} \right \rVert_{2} \\
=&  C \! \sup_{\mu \in \mathbb N_ {\le \gamma}^{m+1} } \Bigg \lVert 1_{[0,1]^2}  (-i\nabla+A_0)^{\otimes2}T_l \otimes \\
& \phantom{ C \! \sup_{\mu \in \mathbb N_ {\le \gamma}^{m+1} } \Bigg \lVert } \bigotimes_{j=1}^{m-1}(H_{\varepsilon,x_0}^{(\mu_j)}T_l)\otimes H_{\varepsilon,x_0}^{(\mu_m)}  \left( - i \nabla+A_0 \right) ^{\otimes \mu_{m+1}} P_l (H_{\varepsilon,x_0} T_l)^{k-m} 1_{D^\complement_R(0)} \Bigg \rVert_{2} \label{m>0, g large} 
\end{align}
The operator $1_{[0,1]^2}(-i \nabla + A_0) ^{\otimes 2} T_l$ does not have a nice integral kernel. This is why we cannot directly get a kernel bound from this representation. Let $\varphi \in C^\infty_c(\mathbb R^2)$ be a smooth cutoff function satisfying $\varphi(x)=1$ for $x \in D_1(0)$, $\varphi(x)=0$ for $x \in D_2^\complement(0)$, and $0 \le \varphi(x) \le 1$ everywhere. We introduce the operators $T_{l,n}$ and $T_{l,f}$, which are defined by the integral kernels given for any $x,y \in \mathbb R^2$ with $x \neq y$  by
\begin{align}
t_{l,n}(x,y)\coloneqq &\varphi(x-y) t_l(x,y), \\
t_{l,f}(x,y)\coloneqq &(1-\varphi(x-y)) t_l(x,y).
\end{align}
Obviously, $T_{l,n}+T_{l,f}=T_l$. Furthermore, for any $d\in \{0,1,2\}$, the operator $(-i\nabla - A_0 )^{\otimes d} T_{l,f}$ has a nice integral kernel satisfying
\begin{align}
\left \lVert \operatorname{iker} (-i\nabla - A_0 )^{\otimes d} T_{l,f}(x,y) \right \rVert \le C \exp(- \lambda \lVert x-y \rVert^2)
\end{align}
by \autoref{1.resolvent bound}. 
This implies the kernel estimate
\begin{align}
\left \lVert \operatorname{iker} (-i\nabla + A_0 )^{\otimes 2} T_{l,f}(x,y) \right \rVert \le C (1+\lVert x \rVert)^2\exp(- \lambda \lVert x-y \rVert^2). \label{ T_lf kernel}
\end{align}
Hence, the operator $1_{[0,1]^2}(-i\nabla + A_0)^{\otimes 2} T_{l,f}$ is bounded. 
The operator $(-i\nabla-A_0)^{\otimes d} T_l$ is bounded by \autoref{1.resolvent bound} for $d=0,1$ and by \autoref{2nd diff resolv} for $d=2$. Hence, the operator $1_{[0,1]^2}(-i\nabla + A_0)^{\otimes 2} T_l$ is bounded. By the triangle inequality, we can conclude that the operator $1_{[0,1]^2}(-i\nabla + A_0)^{\otimes 2} T_{l,n}$ is bounded. Furthermore, we have the identity 
\begin{align}
1_{[0,1]^2}(-i\nabla + A_0)^{\otimes 2} T_{l,n} = 1_{[0,1]^2}(-i\nabla + A_0)^{\otimes 2}1_{[-1,2]^2}  T_{l,n} =1_{[0,1]^2}(-i\nabla + A_0)^{\otimes 2}  T_{l,n} 1_{[-3,4]^2}. \label{fin propagation}
\end{align}
The value at $x \in [0,1]^2$ of $(-i \nabla +A_0)f$ only depends on $f$ in an arbitrary small neighbourhood of $x$, which proves the first identity. The second identity follows by the construction of $T_{l,n}$ as an integral operator with a kernel that vanishes if $\lVert x-y \rVert\ge 2$.

We will now estimate the kernel of the operator in \eqref{m>0, g large}, where we replace the first $T_l$ by $T_{l,f}$. The kernels of the operators $H_{\varepsilon,x_0}^{(\mu_j)}T_l$ and $H_{\varepsilon,x_0}T_l$ can be bounded by \autoref{H eps M kernel},  the kernel of $H_{\varepsilon,x_0}^{(\mu_m)}  \left( - i \nabla+A_0 \right) ^{\otimes \mu_{m+1}} P_l $ has been bounded in \eqref{H eps smooth pl}, and the kernel of $(-i\nabla+A_0)^{\otimes2}T_{l,f}$ has been bounded in \eqref{ T_lf kernel}. Hence, we can apply \autoref{iterated comp kernel} to arrive at
\begin{align}
&\left \lVert  \operatorname{iker}(-i\nabla+A_0)^{\otimes2}T_{l,f} \otimes  \bigotimes_{j=1}^{m-1}(H_{\varepsilon,x_0}^{(\mu_j)}T_l)\otimes H_{\varepsilon,x_0}^{(\mu_m)}  \left( - i \nabla+A_0 \right) ^{\otimes \mu_{m+1}} P_l (H_{\varepsilon,x_0} T_l)^{k-m} (x,y) \right \rVert \\
\le & C \frac{(1+\lVert x \rVert)^{\gamma+3}}{(1+ \lVert x+x_0\rVert)^{k \varepsilon}} \exp \left( -\lambda \lVert x-y \rVert^2 \right). \label{Tlf full kernel}
\end{align}
Once more, the $b_s$ term can be ignored as the operator $H_{\varepsilon,x_0}^{(\mu_m)}  \left( - i \nabla+A_0 \right) ^{\otimes \mu_{m+1}} P_l $ has no singularity at the diagonal and by \autoref{hilfslemma}, this establishes the required estimate. 

We are only left with the term in \eqref{m>0, g large}, where we replace the first $T_l$ by $T_{l,n}$. Here, we can use \eqref{fin propagation} to see
\begin{align}
&\left \lVert 1_{[0,1]^2}  (-i\nabla+A_0)^{\otimes2}T_{l,n} \otimes  \bigotimes_{j=1}^{m-1}(H_{\varepsilon,x_0}^{(\mu_j)}T_l)\otimes H_{\varepsilon,x_0}^{(\mu_m)}  \left( - i \nabla+A_0 \right) ^{\otimes \mu_{m+1}} P_l (H_{\varepsilon,x_0} T_l)^{k-m} 1_{D^\complement_R(0)} \right \rVert_2 \\
= &   \Bigg \lVert 1_{[0,1]^2}  (-i\nabla+A_0)^{\otimes2}T_{l,n} 1_{[-3,4]^2}  \otimes \\
&\phantom{    \Bigg \lVert  }\quad \bigotimes_{j=1}^{m-1}(H_{\varepsilon,x_0}^{(\mu_j)}T_l)\otimes H_{\varepsilon,x_0}^{(\mu_m)}  \left( - i \nabla+A_0 \right) ^{\otimes \mu_{m+1}} P_l (H_{\varepsilon,x_0} T_l)^{k-m} 1_{D^\complement_R(0)} \Bigg \rVert_2 \\
\le &C  \left \lVert 1_{[0,1]^2}  (-i\nabla+A_0)^{\otimes2}T_{l,n} \right \rVert_\infty \\
& \quad \times  \left \lVert 1_{[-3,4]^2}  \otimes  \bigotimes_{j=1}^{m-1}(H_{\varepsilon,x_0}^{(\mu_j)}T_l)\otimes H_{\varepsilon,x_0}^{(\mu_m)}  \left( - i \nabla+A_0 \right) ^{\otimes \mu_{m+1}} P_l (H_{\varepsilon,x_0} T_l)^{k-m} 1_{D^\complement_R(0)} \right \rVert_2.
\end{align}
The operator $1_{[0,1]^2}  (-i\nabla+A_0)^{\otimes2}T_{l,n}$ is bounded. For the remaining part, we estimate the kernel. This is incredibly similar to \eqref{Tlf full kernel}. 
The kernels of the operators $H_{\varepsilon,x_0}^{(\mu_j)}T_l$ and $H_{\varepsilon,x_0}T_l$ can be bounded by \autoref{H eps M kernel} and  the kernel of $H_{\varepsilon,x_0}^{(\mu_m)}  \left( - i \nabla+A_0 \right) ^{\otimes \mu_{m+1}} P_l $ has been bounded in \eqref{H eps smooth pl}. Hence, we can apply \autoref{iterated comp kernel} to arrive at
\begin{align}
&\left \lVert  \operatorname{iker} \bigotimes_{j=1}^{m-1}(H_{\varepsilon,x_0}^{(\mu_j)}T_l)\otimes H_{\varepsilon,x_0}^{(\mu_m)}  \left( - i \nabla+A_0 \right) ^{\otimes \mu_{m+1}} P_l (H_{\varepsilon,x_0} T_l)^{k-m} (x,y) \right \rVert \\
\le & C \frac{(1+\lVert x \rVert)^{\gamma+1}}{(1+ \lVert x+x_0\rVert)^{k \varepsilon}} \exp \left( -\lambda \lVert x-y \rVert^2 \right) .
\end{align}
For one final time, the $b_s$ term can be ignored as the operator $H_{\varepsilon,x_0}^{(\mu_m)}  \left( - i \nabla+A_0 \right) ^{\otimes \mu_{m+1}} P_l $ has no singularity at the diagonal and by \autoref{hilfslemma}, this establishes the required estimate. 

This brings this proof to a close.
\end{proof}

\appendix
\section{}

\begin{lemma} \label{U_{x_0}}
Let $x_0 \in \mathbb R^2$. Then there is a unitary operator $U_{x_0}\colon \Lp^2(\mathbb R^2) \to \Lp^2(\mathbb R^2)$, such that the following identities hold for any $f \in \Lp^\infty (\mathbb R^2)$, any $I \subset \mathbb N$ cofinite and any $\zeta \in \mathbb C \setminus B_0(2I+1)$:
\begin{align}
U_{x_0}f(X) U_{x_0}^{-1} =& f(X+x_0), \\
U_{x_0}  \left (-i\nabla - A_0\right) U_{x_0}^{-1} =& \left (-i\nabla - A_0 \right),\\
U_{x_0}  H_\varepsilon  U_{x_0}^{-1} =& H_{\varepsilon,x_0} \, ,\\
U_{x_0}  M_{I,\zeta}  U_{x_0}^{-1} =& M_{I,\zeta}\,.
\end{align}
Here, $X$ refers to the multiplication operator with the identity on $\mathbb R^2$ and $f(X)$ is defined by functional calculus and hence the multiplication operator with the function $f$.
\end{lemma}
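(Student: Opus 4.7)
The plan is to take $U_{x_0}$ to be the magnetic translation operator associated with the Coulomb gauge $A_0(x) = \tfrac{B_0}{2}Jx$, namely
\begin{align*}
(U_{x_0} f)(x) \coloneqq e^{i\phi_{x_0}(x)}\,f(x+x_0),\qquad \phi_{x_0}(x) \coloneqq \tfrac{B_0}{2}\langle x_0, Jx\rangle.
\end{align*}
Since $\phi_{x_0}$ is real-valued, multiplication by $e^{i\phi_{x_0}}$ is unitary on $\Lp^2(\mathbb R^2)$, and translation by $x_0$ is an isometry, so $U_{x_0}$ is unitary; its inverse acts by $(U_{x_0}^{-1}g)(y) = e^{-i\phi_{x_0}(y-x_0)}g(y-x_0)$.

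The first identity is immediate: sandwiching a multiplication operator $f(X)$ between $U_{x_0}$ and $U_{x_0}^{-1}$ cancels the two phase factors and composes the two translations into a net shift of the argument by $x_0$. The decisive step is the second identity. Differentiating $e^{i\phi_{x_0}(x)}f(x+x_0)$ with the product rule shows that the relation $U_{x_0}(-i\nabla - A_0) = (-i\nabla - A_0)U_{x_0}$ is equivalent to the pointwise identity $\nabla\phi_{x_0}(x) = A_0(x) - A_0(x+x_0)$, and both sides of this equation evaluate to the constant vector $-\tfrac{B_0}{2}Jx_0$ with our choice of $\phi_{x_0}$. This is the standard gauge cancellation built into magnetic translations, and it is really the only genuine content of the lemma.

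With the first two identities in hand, the third and fourth are formal. Using the decomposition $H_\varepsilon = 2A_\varepsilon\cdot(-i\nabla - A_0) + W_\varepsilon$ established in \autoref{Ansatz}, conjugation by $U_{x_0}$ passes through products, leaves $-i\nabla - A_0$ invariant by the second identity, and, by the first identity, replaces the multiplication operators $A_\varepsilon$ and $W_\varepsilon$ by their translates $A_\varepsilon(\blank+x_0) = A^{(0)}_{\varepsilon,x_0}$ and $W_\varepsilon(\blank+x_0) = W^{(0)}_{\varepsilon,x_0}$, producing $H_{\varepsilon,x_0}$. For the fourth identity, squaring the second gives $U_{x_0}H_0U_{x_0}^{-1} = H_0$; by the functional calculus $U_{x_0}$ then commutes with every spectral projection $P_l$ of $H_0$, and hence with the (norm-convergent) series defining $M_{I,\zeta}$.

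The hard part, to the extent there is one, is the brief gauge computation in the second step. A minor technical point to handle is that $U_{x_0}$ should preserve a common dense core (for instance the Schwartz class) so that the manipulations involving unbounded operators are rigorous; this is automatic since $\phi_{x_0}$ is a polynomial of degree one and $U_{x_0}$ therefore restricts to a continuous self-map of the Schwartz space, on which all four identities may be verified and then extended by density and functional calculus.
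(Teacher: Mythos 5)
Your proposal is correct and takes essentially the same route as the paper: the paper's $U_{x_0}$ is exactly your magnetic translation (translation by $x_0$ composed with the phase $e^{-i\frac{B_0}{2}\langle x\mid Jx_0\rangle}$, which equals your $e^{i\phi_{x_0}(x)}$), and the paper likewise verifies the multiplication-operator identity and the gauge-cancellation commutation with $-i\nabla-A_0$ on test functions, then deduces the $H_\varepsilon$ and $M_{I,\zeta}$ identities formally. Your explicit detour through $U_{x_0}H_0U_{x_0}^{-1}=H_0$ and the spectral projections $P_l$ just spells out what the paper compresses into "this implies $U_{x_0}T_lU_{x_0}^{-1}=T_l$", so no substantive difference.
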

The operators $H_\varepsilon, H_{\varepsilon,x_0}$, and $M_{I,\zeta}$ have been defined in \eqref{def Heps}, \eqref{def H_{eps,x0}}, and \eqref{def M_Iz}.

\begin{proof}
For any $x_0 \in \mathbb R^2$, we define the three unitary operators $U_{x_01} ,U_{x_02}, U_{x_0}$ by 
\begin{alignat}{3}
\forall x \in \mathbb R^2  \colon \quad &\left(U_{x_01} \psi \right)(x) \,&\coloneqq&\,\psi(x+x_0), \\  
\forall x \in \mathbb R^2  \colon \quad &\left(U_{x_02} \psi \right)(x) \,&\coloneqq  &\,\psi(x) \exp \left(- i\frac {B_0} 2 \langle x \mid Jx_0 \rangle \right),
\end{alignat}
\begin{align}
U_{x_0}\coloneqq U_{x_01} U_{x_02} . 
\end{align}
As we can see, these operators and their inverses preserve $C^\infty_c(\mathbb R^2)$. Hence, it is sufficient to show that the claimed operator identites hold, when evaluated at a test function $\psi \in C^\infty_c( \mathbb R^2)$. 

We have
\begin{align}
U_{x_01}U_{x_02} f(X) U_{x_02}^{-1} U_{x_01}^{-1}  &= U_{x_01}  f(X) U_{x_01} ^{-1} \\
&= f(X+x_0). \label{conj1}
\end{align}

Now, we need to check how $(-i\nabla-\frac {B_0}2JX)$ behaves under conjugation with $U_{x_0}$. Hence, we get
\begin{align}
&\left( U_{x_02} \left (-i\nabla - \frac {B_0} 2 JX \right) U_{x_02}^{-1} U_{x_01}^{-1} \psi \right) (x) \\
=&  \exp \left(-i\frac {B_0} 2 \langle x \mid Jx_0 \rangle \right) \left(-i\nabla_x-  \frac {B_0} 2 Jx \right) \exp \left( i\frac {B_0} 2 \langle x \mid Jx_0 \rangle \right) \psi(x-x_0) \\
=& \left(-i\nabla_x- \frac {B_0} 2 Jx \right) \psi(x-x_0) + \psi(x-x_0) \left(-i\nabla_x \right) \left( i \frac {B_0}2 \langle x \mid Jx_0 \rangle \right)  \\
=& \left(-i\nabla_x- \frac {B_0} 2 J(x-x_0) \right) \psi(x-x_0) \\
=&\left(  U_{x_01}^{-1}  \left (-i\nabla - \frac B 2 JX \right) \psi \right) (x).
\end{align}
In the second step, we used the product and chain rule and the exponentials cancel. The interior derivative is then resolved in the next step.

In conclusion, we have
\begin{align}
U_{x_0}  \left (-i\nabla - A_0\right) U_{x_0}^{-1} = \left (-i\nabla - A_0 \right).
\end{align}
This implies 
\begin{align}
U_{x_0}  T_l U_{x_0}^{-1} = T_l.
\end{align}
Together with \eqref{conj1}, this implies the identity
\begin{align}
U_{x_0}  H_\varepsilon  U_{x_0}^{-1} =& H_{\varepsilon,x_0} .
\end{align}
This finishes the proof.
\end{proof}

\begin{lemma}\label{grid integral}
Let $\Omega \subset \mathbb R^n$ be measurable and let $f \colon \Omega \to \mathbb C$ be integrable. Then we have the identity
\begin{align}
\int_{[0,1)^n} \sum_{z \in \mathbb Z^n, z+h_0 \in \Omega} f(z+h_0) dh_0 = \int_\Omega f(x) dx.
\end{align}
\end{lemma}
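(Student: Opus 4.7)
The plan is to exploit the fact that the half-open unit cube $[0,1)^n$ tiles $\mathbb R^n$ under integer translations, together with translation invariance of Lebesgue measure and Fubini--Tonelli.

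First I would introduce the disjoint partition $\mathbb R^n = \bigsqcup_{z \in \mathbb Z^n} \left( z + [0,1)^n \right)$, so that for any measurable $g \colon \mathbb R^n \to \mathbb C$ integrable on $\Omega$ we have
\begin{align}
\int_\Omega f(x)\, dx = \int_{\mathbb R^n} 1_\Omega(x) f(x)\, dx = \sum_{z \in \mathbb Z^n} \int_{z + [0,1)^n} 1_\Omega(x) f(x) \, dx.
\end{align}
Substituting $x = z + h_0$ in each summand (translation invariance of Lebesgue measure) rewrites this as
\begin{align}
\int_\Omega f(x)\, dx = \sum_{z \in \mathbb Z^n} \int_{[0,1)^n} 1_\Omega(z + h_0) f(z+h_0)\, dh_0.
\end{align}

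Next I would justify interchanging the sum and the integral. Applying the same computation to $\lvert f \rvert$ in place of $f$ gives
\begin{align}
\sum_{z \in \mathbb Z^n} \int_{[0,1)^n} 1_\Omega(z+h_0) \lvert f(z+h_0) \rvert\, dh_0 = \int_\Omega \lvert f(x) \rvert\, dx < \infty,
\end{align}
since $f$ is integrable on $\Omega$. This is exactly the hypothesis needed for Fubini--Tonelli applied to the product measure on $[0,1)^n \times \mathbb Z^n$ (with counting measure on $\mathbb Z^n$), so the sum and integral may be exchanged. Doing so yields
\begin{align}
\int_\Omega f(x)\, dx = \int_{[0,1)^n} \sum_{z \in \mathbb Z^n} 1_\Omega(z+h_0) f(z+h_0)\, dh_0,
\end{align}
and the inner sum is precisely $\sum_{z \in \mathbb Z^n,\, z+h_0 \in \Omega} f(z+h_0)$, completing the identity.

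There is no real obstacle here; the only mild subtlety is ensuring that the restriction $z + h_0 \in \Omega$ in the stated sum is handled by the indicator $1_\Omega(z+h_0)$, and that for each fixed $h_0$ the counting sum is at most countable and well defined almost everywhere by the integrability hypothesis (guaranteed by the Tonelli step applied to $\lvert f \rvert$).
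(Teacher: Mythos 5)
Your proof is correct and follows essentially the same route as the paper: rewrite the constrained sum via the indicator $1_\Omega$, identify $[0,1)^n \times \mathbb Z^n$ with $\mathbb R^n$ through the unit-cube tiling, and apply Fubini--Tonelli (your Tonelli step with $\lvert f \rvert$ just makes the interchange explicit, which the paper leaves implicit). No gaps.
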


\begin{proof} We observe
\begin{align}
\int_{[0,1)^n} \sum_{z \in \mathbb Z^n, z+h_0 \in \Omega} f(z+h_0) dh_0 =& \int_{[0,1)^n} \sum_{z \in \mathbb Z^n}1_\Omega(z+h_0) f(z+h_0) dh_0 \\
=&\int_{\mathbb R^n} 1_\Omega(x) f(x) dx \\
=& \int_\Omega f(x) dx.
\end{align}
In the second step we used Fubini with $[0,1)^n \times  \mathbb Z^n = \mathbb R^n$.

\end{proof}

\begin{lemma} \label{Fubini1}
Let $f \in C^1(\mathbb R)$ with $ f' \le 0$ and $\lim_{t \to \infty} f(t)=0$, $h \in C^0(\mathbb R^2 ,\mathbb R)$ and $\Lambda \subset \mathbb R^2$ be measurable. 

Then we have
\begin{align} 
\int_\Lambda f(h(x) ) dx = \int_{\mathbb R} -f'(t) \lvert \{ x \in \Lambda \mid h(x) \le t \} \rvert dt .
\end{align}
As both integrands are positive, we do not need to require the existence of the integral, both sides being $\infty$ is an option.

\end{lemma}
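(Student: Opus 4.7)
The strategy is a standard layer-cake (Tonelli) argument. Since $f\in C^1(\mathbb R)$, $f'\le 0$, and $f(t)\to 0$ as $t\to\infty$, the fundamental theorem of calculus gives, for every $s\in\mathbb R$,
\begin{align}
f(s) \;=\; -\int_s^{\infty} f'(t)\,dt \;=\; \int_{\mathbb R} (-f'(t))\,\mathbf 1_{\{s\le t\}}\,dt.
\end{align}
Applying this pointwise with $s = h(x)$ yields
\begin{align}
f(h(x)) \;=\; \int_{\mathbb R} (-f'(t))\,\mathbf 1_{\{h(x)\le t\}}\,dt
\end{align}
for every $x\in\Lambda$.

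Next I would integrate both sides over $\Lambda$ against Lebesgue measure. Since $-f'\ge 0$ and $\mathbf 1_{\{h(x)\le t\}}\ge 0$, the integrand $(x,t)\mapsto -f'(t)\,\mathbf 1_{\{h(x)\le t\}}$ is nonnegative and (jointly) measurable: measurability in $t$ comes from $f'\in C^0$, measurability in $x$ comes from $h\in C^0$, and measurability of the product set $\{(x,t):h(x)\le t\}\cap(\Lambda\times\mathbb R)$ follows from continuity of $h$ and measurability of $\Lambda$. Tonelli's theorem then permits interchanging the order of integration without any integrability hypothesis, giving
\begin{align}
\int_\Lambda f(h(x))\,dx
&= \int_\Lambda \int_{\mathbb R} (-f'(t))\,\mathbf 1_{\{h(x)\le t\}}\,dt\,dx \\
&= \int_{\mathbb R} (-f'(t)) \int_\Lambda \mathbf 1_{\{h(x)\le t\}}\,dx\,dt \\
&= \int_{\mathbb R} (-f'(t))\,\lvert\{x\in\Lambda \mid h(x)\le t\}\rvert\,dt.
\end{align}

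There is essentially no obstacle: the only point that needs a moment of care is the justification of the pointwise identity $f(s)=\int_{\mathbb R}(-f'(t))\mathbf 1_{\{s\le t\}}\,dt$, which requires the assumed decay $f(t)\to 0$ at $+\infty$ to make the improper integral converge and equal $f(s)$. The interchange of integrals is unconditional thanks to nonnegativity, which also explains the remark that both sides are allowed to be $+\infty$.
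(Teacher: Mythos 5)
Your proposal is correct and follows essentially the same route as the paper: the fundamental theorem of calculus (using the decay $f(t)\to 0$) to write $f(h(x))$ as an integral of $-f'$, followed by Tonelli/Fubini, which applies unconditionally since the integrand is nonnegative. No gaps; the measurability remarks you add are a fine (if routine) supplement to the paper's terser argument.
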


\begin{proof}
We use the fundamental theorem of calculus and Fubini. As everything is positive, we can apply both theorems. Thus,
\begin{align}
\int_\Lambda f(h(x) ) dx =& \int_\Lambda dx \int_{h(x)}^\infty( -f'(t)) dt\\
=&\int_{\mathbb R^2}dx \int_{\mathbb R}dt 1_{\Lambda}(x) 1_{(h(x), \infty)}(t) (-f'(t))\\
=&\int_{\mathbb R}dt \int_{\mathbb R^2}dx 1_{\Lambda}(x) 1_{(h(x), \infty)}(t) (-f'(t))\\
=& \int_{\mathbb R} -f'(t) \left \lvert \{ x \in \Lambda \mid h(x) \le t \} \right  \rvert dt. \qedhere
\end{align}

\end{proof}

\begin{lemma} \label{Lipschitz slice}
Let $\Lambda \subset \mathbb R^2$ be a bounded Lipschitz region. Then there is a constant $C>0$, such that for any $r>0$
\begin{align}
\left \lvert \{ x \in \Lambda \mid \operatorname{dist} (x, \Lambda^\complement) \le r \} \right \rvert & \le Cr \label{inner near boundary}, \\
\left \lvert \{ x \in \Lambda^\complement \mid \operatorname{dist}(x, \Lambda) \le r\} \right \rvert & \le C(r+r^2) \label{outer near boundary}.
\end{align}
\end{lemma}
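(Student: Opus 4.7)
My plan is to reduce both estimates to a standard tubular neighborhood calculation in the Lipschitz charts of $\partial\Lambda$, and then treat separately the regimes of small and large $r$. Since $\Lambda$ is a bounded Lipschitz domain, $\partial\Lambda$ is compact and admits a finite open cover $U_1,\ldots,U_N$ such that, after a rigid motion of coordinates in each $U_i$, we have $\partial\Lambda\cap U_i=\{(y_1,\varphi_i(y_1)):y_1\in B_i\}$ for a Lipschitz function $\varphi_i\colon B_i\to\mathbb R$ (with constant $L_i$) on a bounded open set $B_i\subset\mathbb R$, and $\Lambda$ lies locally on the prescribed side $\{y_2<\varphi_i(y_1)\}$. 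By a mild shrinking of the $U_i$, I can fix $r_0>0$ with $D_{r_0}(\partial\Lambda)\subset\bigcup_i U_i$ in such a way that the first coordinate of any nearest boundary point to $x\in D_{r_0}(\partial\Lambda)\cap U_i$ still lies in $B_i$.

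\textbf{Small $r$ regime.} For $r\le r_0$ and a point $x=(y_1,y_2)$ in the inner tube $\{x\in\Lambda : \operatorname{dist}(x,\Lambda^\complement)\le r\}\cap U_i$, I will pick a nearest boundary point $(z_1,\varphi_i(z_1))$, from which $|y_1-z_1|\le r$, the Lipschitz estimate $|\varphi_i(z_1)-\varphi_i(y_1)|\le L_i r$, and $|y_2-\varphi_i(z_1)|\le r$ together yield $\varphi_i(y_1)-y_2\le(1+L_i)r$. Hence the inner tube in $U_i$ is contained in a strip of area at most $(1+L_i)\,r\,|B_i|$, and summing over the finitely many charts gives a bound $Cr$. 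The identical argument applied from the opposite side bounds the outer tube by $Cr$ as well.

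\textbf{Large $r$ regime.} For $r>r_0$, the inner bound is immediate from $|\{x\in\Lambda:\operatorname{dist}(x,\Lambda^\complement)\le r\}|\le|\Lambda|\le(|\Lambda|/r_0)\,r$. For the outer estimate, I will choose $R$ with $\Lambda\subset D_R(0)$; then $\{x\in\Lambda^\complement:\operatorname{dist}(x,\Lambda)\le r\}\subset D_{R+r}(0)$ has area $\pi(R+r)^2\le C(r+r^2)$ (absorbing the $R^2$ contribution via $R^2\le(R^2/r_0)\,r$). Merging the two regimes yields the displayed bounds. The main obstacle is the small-$r$ step, where one must ensure that the nearest boundary point to $x$ still lives in the same chart so that the graph estimate applies; this is precisely what the preliminary shrinking of the $U_i$ is designed to guarantee, so no further regularity of $\partial\Lambda$ beyond the Lipschitz graph property is needed.
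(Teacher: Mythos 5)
Your proof is correct. The paper in fact gives no detailed proof of this lemma at all: it only appends a remark using the same two-regime split you chose, asserting the linear small-$r$ behaviour as a standard property of Lipschitz boundaries and noting that for large $r$ the inner set is controlled by $\lvert\Lambda\rvert$ while the outer set sits inside a ball of radius $r+r_0$, which produces the $r^2$ term. Your contribution is to make the small-$r$ step explicit, and it is sound: cover $\partial\Lambda$ by finitely many Lipschitz graph charts, shrink them so that a point of the $r$-tube and its nearest boundary point lie in a common chart (a compactness/Lebesgue-number argument; alternatively extend each $\varphi_i$ to all of $\mathbb R$ with the same Lipschitz constant by McShane extension so that $\varphi_i(y_1)$ is always defined), and then the elementary estimate $\lvert \varphi_i(y_1)-y_2\rvert\le(1+L_i)r$ confines the tube in each chart to a strip of measure $O(r)$, uniformly because the cover is finite; the same computation on the other side of the graphs gives the outer bound for $r\le r_0$. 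Your large-$r$ bookkeeping, $\lvert\Lambda\rvert\le(\lvert\Lambda\rvert/r_0)\,r$ for \eqref{inner near boundary} and $\pi(R+r)^2\le C(r+r^2)$ for \eqref{outer near boundary}, is exactly the paper's remark made quantitative, so the two arguments agree in structure and yours simply supplies the details the paper leaves implicit.
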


In both cases, for small $r$ we have an approximately linear dependency. In the first case, it is bounded by $\left \lvert \Lambda \right \rvert<\infty$ and in the second case it is contained in a ball of radius $r+r_0$, which explains the $r^2$ term.

\begin{lemma} \label{gauss disc}
Let $R,\lambda > 0$ be real numbers and $x_0,x \in \mathbb R^2$ with $\lVert x-x_0 \rVert \le R$. Then we have
\begin{align}
\exp ( -\lambda  \lVert x \rVert^2 )  &\le e^{\lambda R^2} \exp(-\frac \lambda 2 \lVert x_0 \rVert ^2), \\
\int_{D_R^\complement (0) }\exp ( -\lambda  \lVert x' \rVert^2 ) dx'& = \frac \pi \lambda \exp(-\lambda R^2).
\end{align}
\end{lemma}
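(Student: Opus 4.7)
For the first inequality the plan is to reverse-engineer the bound via a squared-triangle-inequality argument. Writing $x_0 = x - (x - x_0)$ and applying Young's inequality (or equivalently the parallelogram identity) gives
\begin{align}
\lVert x_0 \rVert^2 \le 2\lVert x \rVert^2 + 2 \lVert x - x_0 \rVert^2 \le 2 \lVert x \rVert^2 + 2 R^2 ,
\end{align}
which rearranges to $-\lambda \lVert x \rVert^2 \le \lambda R^2 - \tfrac{\lambda}{2} \lVert x_0 \rVert^2$. Exponentiating yields the claimed pointwise bound. No case distinctions are needed and the constants come out exactly as stated.

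For the integral identity I would simply pass to polar coordinates centered at the origin. Writing $x' = (r \cos\theta, r \sin\theta)$ gives
\begin{align}
\int_{D_R^\complement(0)} \exp(-\lambda \lVert x' \rVert^2) \, dx' = 2\pi \int_R^\infty r \exp(-\lambda r^2) \, dr ,
\end{align}
and the substitution $u = \lambda r^2$, $du = 2\lambda r\, dr$ resolves the remaining integral to $\tfrac{\pi}{\lambda} e^{-\lambda R^2}$.

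There is no real obstacle here: both statements are elementary calculus facts. The only thing worth flagging is that one should not be tempted to replace the constant $e^{\lambda R^2}$ in the first inequality by a tighter one using $\lVert x \rVert \ge \lVert x_0 \rVert - R$ directly, because that approach only works when $\lVert x_0 \rVert \ge R$ and produces a cross-term $\lambda R \lVert x_0 \rVert$ that is not of the clean product form required downstream; the Young-inequality route avoids this dichotomy and gives a single bound valid for all $x_0$.
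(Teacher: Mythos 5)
Your proof is correct, and for the first inequality you take a slightly different route than the paper. The paper argues by a case distinction: for $\lVert x_0 \rVert \le R$ the bound is trivial (since then $e^{\lambda R^2} e^{-\frac{\lambda}{2}\lVert x_0\rVert^2} \ge 1$), and for $\lVert x_0 \rVert > R$ one uses $\lVert x \rVert \ge \lVert x_0 \rVert - R$, takes logarithms, divides by $\lambda$, and completes the square to absorb the cross term $2R\lVert x_0\rVert$. Your single application of $\lVert x_0\rVert^2 \le 2\lVert x\rVert^2 + 2\lVert x - x_0\rVert^2$ avoids the dichotomy entirely and produces the stated constant in one line; indeed, the "tighter constant" pitfall you flag at the end is precisely the route the paper takes and resolves by its case split, so your closing remark correctly identifies the trade-off between the two arguments rather than an error in either. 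For the integral identity, your polar-coordinate computation is the standard one (the paper leaves it implicit), and it is exact as you state.
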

For $\lVert x_0 \rVert \le R$, the estimate is trivial. Otherwise, the proof follows by taking the $\ln$, dividing by $\lambda$ and then completing the square.

\begin{lemma} \label{Bochner kernel}
For every $t \in (0,1)$, let $K_t \colon \Lp^\infty(\mathbb R^2) \to \Lp ^\infty(\mathbb R^2)$ be an operator with a \emph{nice} integral kernel $k_t: \mathbb R^2 \times \mathbb R^2 \to \mathbb C$. Assume, that for every $x \in \mathbb R^2$, the function $[0,1]\times \mathbb R^2 \to \mathbb C\colon (t,y) \mapsto k_t(x,y)$ is integrable, its integral is bounded independently of $x$, and the same holds for $x$ and $y$ reversed. Then we have
\begin{align}
\operatorname{iker}\left(\int_0^1 K_t dt \right)(x,y) =\int_0^1  k_t(x,y) dt.
\end{align}
\end{lemma}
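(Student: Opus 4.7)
The plan is to construct the candidate integral kernel as the pointwise $t$-integral of $k_t$ and then verify by Fubini that the resulting operator agrees with $\int_0^1 K_t\,dt$ on test functions in the sense of the paper's definition of an integral kernel.

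First, I would define $\tilde k(x,y)\coloneqq \int_0^1 k_t(x,y)\,dt$, which exists for almost all $(x,y)$ by the integrability assumption on $(t,y)\mapsto k_t(x,y)$ (and, symmetrically, on $(t,x)\mapsto k_t(x,y)$). The very same hypothesis—that both iterated integrals are bounded uniformly in the free variable—implies that $\tilde k$ is a \emph{nice} kernel in the sense of the paper, so the operator $\tilde K$ defined by $(\tilde K f)(x)\coloneqq \int_{\mathbb R^2}\tilde k(x,y)f(y)\,dy$ is bounded on $\Lp^2(\mathbb R^2)$ by \autoref{nice>bounded}. It then suffices to show that $\tilde K$ represents $\int_0^1 K_t\,dt$ on test functions, which by definition of ``integral kernel'' yields the claimed identity.

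The main computation is a direct application of Fubini. Fix $f\in C^0_c(\mathbb R^2)$. For each fixed $x\in\mathbb R^2$, the function $(t,y)\mapsto k_t(x,y)f(y)$ is dominated in absolute value by $\lVert f\rVert_\infty\,|k_t(x,y)|$, which by hypothesis is integrable on $[0,1]\times\mathbb R^2$. Hence Fubini yields
\begin{align}
(\tilde K f)(x)
= \int_{\mathbb R^2}\!\!\left(\int_0^1 k_t(x,y)\,dt\right)\!f(y)\,dy
= \int_0^1\!\!\int_{\mathbb R^2} k_t(x,y)f(y)\,dy\,dt
= \int_0^1 (K_t f)(x)\,dt
\end{align}
for almost all $x\in\mathbb R^2$. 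The right-hand side is exactly the pointwise value (a.e.) of $\int_0^1 K_t f\,dt$, which is how the operator $\int_0^1 K_t\,dt$ acts on test functions; therefore $\tilde k$ is the integral kernel of $\int_0^1 K_t\,dt$ in the sense of the paper's definition.

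The only genuine subtlety is interpretational rather than technical: the statement does not pin down in which topology the operator-valued integral $\int_0^1 K_t\,dt$ is taken, so one must read it as the operator whose action on $f\in C^0_c(\mathbb R^2)$ is the pointwise integral $x\mapsto \int_0^1 (K_t f)(x)\,dt$. With that reading, the hypothesis is tailored precisely so that Fubini applies and no further work is needed; the \emph{nice} kernel bound on $\tilde k$ guarantees that the resulting operator is well-defined and bounded, closing the argument.
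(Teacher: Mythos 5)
Your central computation is exactly the paper's: fix $f\in C^0_c(\mathbb R^2)$, dominate $(t,y)\mapsto k_t(x,y)f(y)$ by $\lVert f\rVert_\infty\,\lvert k_t(x,y)\rvert$, and apply Fubini to swap the $t$- and $y$-integrals; the observation that the resulting kernel $\tilde k$ is again \emph{nice} also appears at the end of the paper's proof. So the Fubini step and the verification against the paper's definition of an integral kernel are fine.

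Where you diverge is precisely the point you flag as ``interpretational'': you resolve the meaning of $\int_0^1 K_t\,dt$ by decree, declaring it to be the operator whose action on $f$ is $x\mapsto\int_0^1 (K_tf)(x)\,dt$. In the paper this is not a convention but the first step of the proof: the integral is an honest Bochner integral of operators (that is how the lemma is consumed in \autoref{ResolventKernelnew}, whose operator-valued integral then feeds into \autoref{1.resolvent bound}), and the identity $\bigl(\bigl(\int_0^1 K_t\,dt\bigr)f\bigr)(x)=\int_0^1 (K_tf)(x)\,dt$ is justified by the fact that a Bochner integral commutes with the bounded linear operations of applying it to $f$ and evaluating. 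With your reading, the lemma as you prove it would not directly apply to the operator the paper actually integrates, unless you add the (short) argument that your ``pointwise'' operator coincides with the Bochner integral. So the gap is small and easily closed — note that the Bochner integral exists in the relevant operator norm and commutes with $K\mapsto (Kf)(x)$, which turns your convention into a theorem — but as written the proposal replaces the one nontrivial identification the paper makes with a definition, and that identification is needed for the lemma to do its job downstream.
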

\begin{proof}
The integral $\int_0^1 K_t dt$ exists as a Bochner integral with respect to the operator norm from $\Lp^\infty(\mathbb R^2)$ to $\Lp^\infty(\mathbb R^2)$ by the integrability assumptions on the kernel. 
Let $f \in C_c^0(\mathbb R^2)$. Then, for every $x \in \mathbb R^2$, we have 
\begin{align}
\left(\left(\int_0^1 K_t dt\right) f\right)(x) =& \left(\int_0^1 K_t f dt \right) (x) \\
=& \int_0^1 \left( \int_{ \mathbb R^2} k_t(x,y) f(y) dy \right) dt \\
=& \int_{\mathbb R^2} \left( \int_0^1 k_t(x,y) dt \right) f(y) dy .
\end{align}
The first step holds, as the Bochner integral commutes with the (linear, bounded) evaluation operator. The second step is the definition of $k_t$ and the last step is Fubini, as $f$ is bounded and we assumed $k_\blank(x,\blank)$ to be integrable for any $x\in \mathbb R^2$. The same holds, if $x$ and $y$ are reversed, hence this is a \emph{nice} integral kernel again.
\end{proof}

\begin{lemma} \label{orth op fam}
For any $k \in \mathbb Z^+$, let $S_k$ be an operator on the Hilbert space $\mathcal H$ and assume that for any $k \neq l$, the conditions $S_k^*S_l=0$ and $S_kS_l^*=0$ hold. Then we have
\begin{align}
\left\lVert \sum_{k \in \mathbb Z^+}S_k \right  \rVert_\infty &= \sup_{k \in \mathbb Z^+} \lVert S_k \rVert_\infty . 
\end{align}
\end{lemma}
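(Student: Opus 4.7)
My plan is to derive the two inequalities separately from simple geometric consequences of the hypotheses. The condition $S_k^*S_l = 0$ for $k \neq l$ says that $\operatorname{ran}(S_l) \subseteq \ker(S_k^*) = \overline{\operatorname{ran}(S_k)}^\perp$, so the closures of the ranges of the $S_k$ are pairwise orthogonal. Dually, $S_kS_l^* = 0$ says the closures of the ranges of the $S_k^*$ are pairwise orthogonal; equivalently, the orthogonal complements $(\ker S_k)^\perp = \overline{\operatorname{ran}(S_k^*)}$ are pairwise orthogonal. Let $P_k$ denote the orthogonal projection onto $(\ker S_k)^\perp$. Then $\{P_k\}$ is a family of mutually orthogonal projections, $S_k = S_k P_k$, and for $l \neq k$ every vector in $\operatorname{ran}(S_k^*)$ is annihilated by $S_l$ (since $S_lS_k^* = 0$), so by continuity $S_l P_k = 0$.

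For the upper bound, fix $f \in \mathcal H$. Pairwise orthogonality of the ranges together with $S_k = S_k P_k$ gives
\begin{align}
\left\lVert \sum_{k} S_k f \right\rVert^2 = \sum_{k} \lVert S_k f \rVert^2 = \sum_k \lVert S_k P_k f \rVert^2 \le \sup_k \lVert S_k \rVert_\infty^2 \sum_k \lVert P_k f \rVert^2 \le \sup_k \lVert S_k \rVert_\infty^2 \, \lVert f \rVert^2,
\end{align}
where the last step uses $\sum_k \lVert P_k f \rVert^2 \le \lVert f \rVert^2$ by Bessel/Parseval for the mutually orthogonal projections $P_k$. The same computation applied to partial sums shows that $\sum_k S_k f$ converges in $\mathcal H$, so the operator $\sum_k S_k$ is well defined, with $\lVert \sum_k S_k \rVert_\infty \le \sup_k \lVert S_k \rVert_\infty$.

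For the reverse inequality, fix any index $k$ and any $f \in \mathcal H$ with $\lVert f \rVert \le 1$. Set $g \coloneqq P_k f$, so $\lVert g \rVert \le 1$. Using $S_l P_k = 0$ for $l \neq k$ and $S_k P_k = S_k$, we obtain $\bigl(\sum_l S_l\bigr) g = S_k P_k f = S_k f$, hence
\begin{align}
\lVert S_k f \rVert = \left\lVert \sum_l S_l \, g \right\rVert \le \left\lVert \sum_l S_l \right\rVert_\infty.
\end{align}
Taking the supremum over $\lVert f \rVert = 1$ gives $\lVert S_k \rVert_\infty \le \lVert \sum_l S_l \rVert_\infty$, and then the supremum over $k$ yields the matching bound. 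Combined with the first step this proves the claimed equality. The case $\sup_k \lVert S_k \rVert_\infty = \infty$ is subsumed by the lower bound argument applied termwise. There is no real obstacle here; the only point requiring mild care is justifying the strong convergence of $\sum_k S_k$, which follows automatically from the Parseval-type estimate above.
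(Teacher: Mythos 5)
Your proof is correct and follows essentially the same route as the paper: the projections $P_k$ onto $(\ker S_k)^\perp=\overline{\operatorname{ran}(S_k^*)}$ are exactly the paper's decomposition $\mathcal H=\bigoplus_l \mathcal H_l$, with the same Parseval-type estimate for the upper bound and the same restriction to a single block for the lower bound. Your added remark on strong convergence of $\sum_k S_k$ is a harmless refinement the paper leaves implicit.
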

\begin{proof}
For $l \in \mathbb Z^+$, let $\mathcal H_l$ be the orthogonal complement of the kernel of $S_l$ and define $\mathcal H_0 \coloneqq \bigcap_{l \in \mathbb Z^+} \ker(S_l)$. 
The condition $S_kS_l^*=0$ tells us that the spaces $\mathcal H_l$ and $\mathcal H_k$ are orthogonal. Hence, we have $\mathcal H = \bigoplus_{l \in \mathbb N} \mathcal H_l$.  Let $\Psi \in \mathcal H$. Then we can consider the expansion along this direct sum and get a sequence $\left( \Psi_l \in \mathcal H_l \right)_{l \in \mathbb N}$. We consider
\begin{align}
\left \lVert \left( \sum_{k \in \mathbb Z^+}S_k \right) \Psi \right \rVert^2 &= \left \lVert \sum_{k \in \mathbb Z^+}S_k  \Psi_k \right \rVert^2 \\
&=\sum_{k \in \mathbb Z^+} \left \lVert S_k \Psi_k \right \rVert^2 \\
& \le \sum_{k \in \mathbb Z^+} \lVert S_k \rVert _\infty^2 \lVert \Psi_k \rVert^2 \\
&\le \sup_{k \in \mathbb Z^+} \lVert S_k \rVert_\infty^2 \sum_{k \in \mathbb Z^+} \lVert \Psi_k \rVert^2 \\
&=  \sup_{k \in \mathbb Z^+} \lVert S_k \rVert_\infty^2 \ \lVert \Psi \rVert^2.
\end{align}
The condition $S_k^*S_l=0$ implies that the images of $S_k$ and $S_l$ are orthogonal. We used this in the second step. For the other inequality, for any $l \in \mathbb Z^+$, we observe
\begin{align}
\lVert S_l \Psi \rVert^2 = \lVert S_l \Psi_l \rVert^2 \le \sum_{k \in \mathbb Z^+} \lVert S_k \Psi_k \rVert^2 =\left \lVert \left( \sum_{k \in \mathbb Z^+}S_k \right) \Psi \right \rVert^2 .
\end{align}
This finishes the proof.
\end{proof}

\begin{definition}
Let $\gamma \in \mathbb N$, and let $\Omega \subset \mathbb R^2$ be open with Lipschitz-boundary. Then we define the Hilbert space $H^{\gamma}(\Omega)$ as the closure of $C^\infty(\overline \Omega, \mathbb C)$ under the norm
\begin{align}
\lVert u \rVert_{H^{\gamma}(\Omega)}^2 \coloneqq \sum_{0 \le \gamma' \le \gamma} \left \lVert \nabla ^{\otimes \gamma'} u \right \rVert_{\Lp^2(\Omega)}^2. \label{eq H norm def}
\end{align}
We also write $H^\gamma(\overline \Omega )$ for $H^\gamma(\Omega)$. 
\end{definition}
The more commonly used norm
\begin{align}
 u \mapsto & \sqrt{ \sum_{\alpha \in \mathbb N^2, \lvert \alpha \rvert \le \gamma}  \lVert \partial ^\alpha u \rVert_{\Lp^2(\Omega) } ^2  } \label{H norm alt1}
\end{align}
is equivalent to \eqref{eq H norm def}.

\begin{lemma} \label{H quasi iso}
Let $\gamma \in \mathbb Z^+$. Then the map $D_\gamma \colon H^{\gamma}(\mathbb [0,1]^2) \to \Lp^2 \left ([0,1]^2 , \mathbb C^{2^{\gamma+1}-1}\right ) $ given by 
\begin{align}
u \mapsto  \left( (-i \nabla +A_0)^{\otimes \gamma'} u \right)_{\gamma'=0}^\gamma 
\end{align}
is a quasi-isometry, meaning that there is a constant $1<C<\infty$ such that for any $u \in H^{\gamma}([0,1]^2)$, we have
\begin{align}
\frac 1 C \lVert u \rVert_{H^{\gamma}([0,1]^2)}  \le  \lVert D_{\gamma} u \rVert_{\Lp^2\left ([0,1]^2, \mathbb C^{2^{\gamma+1}-1}\right )  }     \le C \lVert u \rVert_{H^{\gamma}([0,1]^2)} .
\end{align}
\end{lemma}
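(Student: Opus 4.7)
The plan is to show that on the bounded set $[0,1]^2$ the operator $(-i\nabla+A_0)$ differs from $-i\nabla$ by a bounded, smooth multiplication operator, so that replacing ordinary derivatives by magnetic derivatives changes the norm only by a controlled amount. Both bounds then follow from an induction on $\gamma$ using the commutator identity
\begin{align}
[(-i\nabla+A_0)_j, (A_0)_k] = -i\,\partial_j (A_0)_k = -i \tfrac{B_0}{2} J_{kj}.
\end{align}
Recall $A_0(x) = \tfrac{B_0}{2} J x$ and therefore $A_0 \in C^\infty(\mathbb R^2,\mathbb R^2)$ with all components bounded on $[0,1]^2$, and all partial derivatives of $A_0$ constant (hence bounded).

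For the upper bound I would show by induction on $\gamma'$ that for every $u \in C^\infty(\overline{[0,1]^2})$ one can expand
\begin{align}
(-i\nabla+A_0)^{\otimes \gamma'} u = \sum_{|\alpha|\le \gamma'} f_{\gamma',\alpha}(x)\,\nabla^{\otimes |\alpha|} u,
\end{align}
where each $f_{\gamma',\alpha}$ is a polynomial (of degree $\le \gamma'$) in the components of $A_0$ and its first derivatives, hence lies in $L^\infty([0,1]^2)$. This reduces to the identity $(-i\nabla+A_0) = (-i\nabla) + A_0$ and the commutator above, by applying the product rule when pushing $A_0$ through $-i\nabla$. Taking $L^2$-norms on $[0,1]^2$ and using the triangle inequality gives
\begin{align}
\lVert D_\gamma u \rVert_{L^2([0,1]^2,\mathbb C^{2^{\gamma+1}-1})} \le C\lVert u \rVert_{H^\gamma([0,1]^2)}.
\end{align}

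For the lower bound I would run the same induction backwards, using $-i\nabla = (-i\nabla+A_0) - A_0$, to obtain
\begin{align}
\nabla^{\otimes \gamma'} u = \sum_{j\le \gamma'} g_{\gamma',j}(x) \otimes (-i\nabla + A_0)^{\otimes j} u,
\end{align}
with the $g_{\gamma',j}$ again smooth and bounded on $[0,1]^2$. Summing over $0 \le \gamma' \le \gamma$ and taking $L^2$-norms yields
\begin{align}
\lVert u \rVert_{H^\gamma([0,1]^2)} \le C \lVert D_\gamma u \rVert_{L^2([0,1]^2,\mathbb C^{2^{\gamma+1}-1})}.
\end{align}
A density argument (the test functions $C^\infty(\overline{[0,1]^2})$ are dense in $H^\gamma([0,1]^2)$ by definition) extends both inequalities to all of $H^\gamma([0,1]^2)$.

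The only real bookkeeping issue, and the part I expect to be slightly tedious rather than difficult, is organising the tensor indices in the expansion: one must check that each time a factor of $(-i\nabla)$ is commuted past a factor of $A_0$ the resulting commutator (and the surviving reordered product) still produces an $L^\infty$-coefficient multiplying a tensor derivative of order strictly lower. Since $A_0$ is polynomial and its entries and derivatives are bounded on $[0,1]^2$, and since each commutation strictly reduces the number of $(-i\nabla)$ factors pushed through, the induction terminates with the desired expansion and a constant $C$ depending only on $\gamma$ and $B_0$.
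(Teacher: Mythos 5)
Your argument is correct, and for the upper bound it coincides with the paper's: expand $(-i\nabla+A_0)^{\otimes \gamma'}u$ as $(-i)^{\gamma'}\nabla^{\otimes\gamma'}u$ plus lower-order derivative tensors multiplied by polynomial (hence bounded on $[0,1]^2$) coefficients, and use the triangle inequality. For the lower bound, however, you take a genuinely different route. You invert the expansion, writing each $\nabla^{\otimes\gamma'}u$ as a bounded-coefficient combination of the magnetic tensors $(-i\nabla+A_0)^{\otimes j}u$ with $j\le\gamma'$, via $-i\nabla=(-i\nabla+A_0)-A_0$ and a product-rule induction; this works because $A_0$ is linear, so all coefficients produced (entries of $A_0$ and their derivatives, and products thereof) stay in $\Lp^\infty([0,1]^2)$, and every mixed magnetic product of order $m$ that appears is a component of the full tensor $(-i\nabla+A_0)^{\otimes m}u$, whose $\Lp^2$-norm is controlled by $\lVert D_\gamma u\rVert$. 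The paper instead never constructs this reverse expansion: it only proves the one-sided estimate $\lVert(-i\nabla+A_0)^{\otimes\gamma'}u-(-i)^{\gamma'}\nabla^{\otimes\gamma'}u\rVert_{\Lp^2}\le C\lVert u\rVert_{H^{\gamma'-1}}$ and then runs an absorption argument, choosing the largest order $\gamma'$ at which $\lVert\nabla^{\otimes\gamma'}u\rVert$ dominates $2C_0\lVert u\rVert_{H^{\gamma'-1}}$, so that the error can be absorbed and all higher norms are comparable to that single term. Your approach costs a second bookkeeping induction but gives both inequalities symmetrically and is arguably more transparent; the paper's approach keeps the algebra to a single expansion at the price of a slightly more delicate case analysis. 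The density step you invoke is immediate since $H^\gamma([0,1]^2)$ is defined in the paper as the closure of $C^\infty(\overline{[0,1]^2})$ in the relevant norm, so both estimates extend by continuity.
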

The multiplication operator $A_0$ has been defined in \eqref{def A0}.
\begin{proof}
Let $0 \le \gamma' \le \gamma$ be a natural number and let $\kappa \in \{1,2\}^{\gamma'}$ be a multiindex. Now we can multiply out and simplify:
\begin{align}
\left( (-i \nabla +A_0 )^{\otimes \gamma'}u(x) - (-i)^{\gamma'} \nabla ^{\otimes \gamma'} u(x) \right)_\kappa  = \sum_{k,l \in \mathbb N, k+l<\gamma'} r_{k,l,\kappa}(x) \partial_1^k \partial_2^l u(x),
\end{align}
where $r_{k,l,\kappa}$ is a polynomial of degree at most $\gamma'-k-l$ that does not depend on $u$. As it is a polynomial, it is bounded on $[0,1]^2$. This leads to the upper bound 
\begin{align}
\left \lVert (-i \nabla +A_0)^{\otimes \gamma'} u - (-i)^{\gamma'} \nabla^{\otimes \gamma'} u \right \rVert_{L^2\left ([0,1]^2, \mathbb C^{2^{\gamma'}}\right )} \le C \lVert u \rVert_{H^{\gamma'-1}([0,1]^2)} \label{A.45}
\end{align}
for any $0<\gamma' \le \gamma$. This specific estimate is needed for the lower bound. For the upper bound, we can just put the $\nabla^{\otimes \gamma'}u$ on the other side and get
\begin{align}
\left \lVert (-i \nabla +A_0)^{\otimes \gamma'} u\right \rVert_{L^2\left ([0,1]^2, \mathbb C^{2^{\gamma'}}\right )} \le C \lVert u \rVert_{H^{\gamma'}([0,1]^2)}.
\end{align}
The claimed upper bound now follows by the triangle inequality.

For the lower bound, we let $C_0\ge 1$ be a constant that is sufficiently large to be the constant $C$ in \eqref{A.45} for any $1 \le \gamma' \le \gamma$. If there is a $0< \gamma' \le \gamma$ such that
\begin{align}
 \lVert \nabla^{\otimes \gamma'} u \rVert_{L^2\left ([0,1]^2, \mathbb C^{2^{\gamma'}}\right )} \ge 2 C_0 \lVert u \rVert_{H^{\gamma'-1}([0,1]^2)}, \label{C0 cases}
\end{align}
we choose $\gamma'$ maximal with this property. Otherwise, we set $\gamma'=0$. Now we observe that for any $\gamma\ge r > \gamma'$, we have 
\begin{align}
\lVert u \rVert_{H^r([0,1]^2)} ^2= \lVert u \rVert_{ H^{r-1}([0,1]^2)}^2 +\left   \lVert \nabla ^{\otimes r} u \right \rVert_{\Lp^2 \left ([0,1]^2, \mathbb C^{2^r}\right)} ^2 \le (4C_0^2+1)  \lVert u \rVert_{ H^{r-1}([0,1]^2)}^2 .
\end{align}
In conclusion, we have the estimate
\begin{align}
\lVert u \rVert_{H^{\gamma}([0,1]^2) } ^2 \le (4C_0^2+1)^{\gamma-\gamma'} \lVert u \rVert_{H^{\gamma'}([0,1]^2)} ^2 \le 2 (4C_0^2+1)^{\gamma-\gamma'}  \lVert \nabla^{\otimes \gamma'} u\rVert_{L^2\left ([0,1]^2, \mathbb C^{2^{\gamma'}}\right )}^2.
\end{align}
The last estimate relies on \eqref{C0 cases} and $C_0 \ge 1$, if $\gamma'>0$ .If $\gamma'=0$, then without the factor $2$, equality holds in the second inequality. By the triangle inequality, \eqref{A.45},  and \eqref{C0 cases} or trivially, if $\gamma'=0$, we get
\begin{align}
\lVert (-i \nabla +A_0 )^{\otimes \gamma'} u \rVert_{L^2\left ([0,1]^2, \mathbb C^{2^{\gamma'}}\right )} \ge \frac 1 2  \lVert  \nabla ^{\otimes \gamma'} u \rVert_{L^2\left ([0,1]^2, \mathbb C^{2^{\gamma'}}\right )} .
\end{align}
This finishes the lower bound and thus, the proof.
\end{proof}

The following proposition is a special case of Theorem 1 in \cite{Gramsch1968} by Gramsch.

\begin{proposition} \label{Sob emb Schatten1}
Let $\gamma \in \mathbb Z^+$, $\Omega \subset \mathbb R^2$ open, bounded and with $C^\infty$-boundary, and let $\infty> q> \frac 2{\gamma}$. Then the embedding
\begin{align}
\iota' \colon H^{\gamma}_0(\Omega) \to \Lp^2(\Omega)
\end{align}
is in the $q$-Schatten class. Here, $H_0^\gamma(\Omega)$ is the closure of $C^\infty_c(\Omega)$ under the norm of $H^\gamma(\Omega)$.
\end{proposition}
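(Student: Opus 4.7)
The plan is to reduce the statement to the eigenvalue asymptotics of a self-adjoint elliptic operator on $\Omega$, which is the content of \cite[Theorem 1]{Gramsch1968}. First, I would exhibit the factorization $\iota' = R \circ E$, where $E \colon H^\gamma_0(\Omega) \to H^\gamma(\mathbb R^2)$ is the zero-extension and $R \colon H^\gamma(\mathbb R^2) \to \Lp^2(\Omega)$ is the restriction. For $u\in C^\infty_c(\Omega)$ the zero extension $\tilde u$ lies in $C^\infty_c(\mathbb R^2)$, and since every derivative of $\tilde u$ vanishes outside $\Omega$ one has $\lVert \tilde u \rVert_{H^\gamma(\mathbb R^2)} = \lVert u \rVert_{H^\gamma(\Omega)}$; taking closures shows that $E$ is isometric, so $\lVert \iota' \rVert_q \le \lVert R \rVert_q$ for every $q>0$.

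Next, I would identify $H^\gamma(\mathbb R^2)$ unitarily with $\Lp^2(\mathbb R^2)$ via the Bessel potential $(1-\Delta)^{\gamma/2}$ on the Fourier side. Under this identification $R$ is unitarily equivalent to the operator $K \coloneqq 1_\Omega (1-\Delta)^{-\gamma/2} \colon \Lp^2(\mathbb R^2) \to \Lp^2(\Omega)$, and $KK^*$ is the positive self-adjoint operator on $\Lp^2(\Omega)$ with kernel $1_\Omega(x)\, G_{2\gamma}(x-y)\, 1_\Omega(y)$, where $\widehat{G_s}(\xi) = (1 + \lVert \xi\rVert^2)^{-s/2}$. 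It therefore suffices to show that the eigenvalues $\lambda_n$ of $KK^*$ satisfy $\lambda_n = O(n^{-\gamma})$, which yields $s_n(K) = O(n^{-\gamma/2})$ and hence $K \in S_q$ for every $q > 2/\gamma$.

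The main obstacle is this last eigenvalue asymptotic. Here I would invoke Gramsch's theorem, which, thanks to the $C^\infty$ regularity of $\partial \Omega$, gives precisely the Weyl-type bound $\lambda_n(KK^*) = O(n^{-\gamma})$ for the truncated Bessel potential. Alternatively, one can derive it by Dirichlet--Neumann bracketing: the min-max principle compares $KK^*$ with $(1 - \Delta_D)^{-\gamma}$ for $-\Delta_D$ the Dirichlet Laplacian on $\Omega$, whose eigenvalues obey Weyl's law $\lambda_n(-\Delta_D) \sim 4\pi n / \lvert \Omega \rvert$. Either route yields the claimed Schatten-class property in the sharp range $q > 2/\gamma$.
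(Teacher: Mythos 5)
Your reduction is fine as far as it goes: the zero extension is isometric for the paper's $H^\gamma$-norm, the Bessel-potential identification is a bounded isomorphism (not quite unitary for this norm, but that is harmless), and $s_n(K)=O(n^{-\gamma/2})$ would indeed give $K\in S_q$ for all $q>2/\gamma$. The gap is that the key bound $\lambda_n(KK^*)=O(n^{-\gamma})$ is never actually proved. Your bracketing branch compares $KK^*=1_\Omega(1-\Delta)^{-\gamma}1_\Omega$ with $(1-\Delta_D)^{-\gamma}$, but the comparison goes the wrong way: using the variational formula $\langle u,A^{-1}u\rangle=\sup_{v}\lvert\langle u,v\rangle\rvert^2/\mathfrak a[v]$ and extending trial functions $v\in H^1_0(\Omega)$ by zero, one gets $1_\Omega(1-\Delta)^{-1}1_\Omega\ \ge\ (1-\Delta_D)^{-1}$ on $\Lp^2(\Omega)$, so min--max only yields a \emph{lower} bound on $\lambda_n(KK^*)$. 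In fact no reverse inequality $1_\Omega(1-\Delta)^{-1}1_\Omega\le C(1-\Delta_D)^{-1}$ can hold: testing with a normalized bump of width $\delta$ at distance $\delta$ from $\partial\Omega$, the free form is of order $\delta^2\ln(1/\delta)$ while the Dirichlet form is of order $\delta^2$. The correct one-sided comparison is with a Neumann-type realization (form domain $H^\gamma(\Omega)$), since restrictions of trial functions from $H^\gamma(\mathbb R^2)$ lie in $H^\gamma(\Omega)$ with smaller norm; but then you need Weyl-type eigenvalue lower bounds for that higher-order Neumann problem, which is no longer the classical Weyl law for the Laplacian and is essentially equivalent to the statement being proven (and for $\gamma\ge2$ you cannot simply raise a first-order operator inequality to the power $\gamma$, since $t\mapsto t^\gamma$ is not operator monotone). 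The other branch, ``invoke Gramsch's theorem,'' is not a proof either: the proposition is presented as a special case of Theorem 1 in \cite{Gramsch1968}, and that theorem concerns Sobolev embeddings in Schatten classes, not eigenvalue asymptotics of truncated Bessel potentials, so this branch amounts to citing the result rather than proving it.

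The paper's own argument avoids both difficulties by never leaving $\Omega$: because one works in $H^\gamma_0(\Omega)$ (the subscript $0$ is essential here), integration by parts for $u\in C^\infty_c(\Omega)$ gives $\lVert(1-\Delta_D)^{\gamma/2}u\rVert_{\Lp^2(\Omega)}^2=\sum_{k\le\gamma}\binom{\gamma}{k}\lVert\nabla^{\otimes k}u\rVert_{\Lp^2(\Omega)}^2$, so $U\coloneqq(1-\Delta_D)^{\gamma/2}\colon H^\gamma_0(\Omega)\to\Lp^2(\Omega)$ is an isomorphism and $\iota'=(1-\Delta_D)^{-\gamma/2}\,U$. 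The singular values of $(1-\Delta_D)^{-\gamma/2}$ are $(1+\mu_n)^{-\gamma/2}=O(n^{-\gamma/2})$ by the Dirichlet counting bound $N(\lambda)\le C(1+\lambda)$ in two dimensions, which gives the claim for all $q>2/\gamma$ and, as a bonus, requires no regularity of $\partial\Omega$ at all. If you want to keep the Dirichlet Laplacian and its Weyl law in your argument, this conjugation on $H^\gamma_0(\Omega)$ itself -- rather than a resolvent comparison with the free operator on $\mathbb R^2$ -- is the way to use them.
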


For the reader's convenience, we provide a different proof of this statement. This proof requires no regularity of $\partial \Omega$. [It can also be expanded to fractional exponent Hilbert spaces $H^s_0(\Omega)$.]
\begin{proof}
Let $- \Delta$ be the Dirichlet Laplacian on $\Omega$. Then the operator
\begin{align}
U \colon  H^{\gamma}_0(\Omega) \to \Lp^2(\Omega) , & \quad u \mapsto (1-\Delta)^{\frac \gamma 2 } u  
\end{align}
is bounded and its inverse is bounded as well. This is because the pullback of the norm on $\Lp^2(\Omega)$ via $U$ is equivalent to the norm on $H^{\gamma}_0(\Omega)$. To be precise, we have for any $u \in H^\gamma_0(\Omega)$
\begin{align}
\lVert U u \rVert_{\Lp^2(\Omega)}^2= \sum_{k=0}^\gamma\binom {\gamma} { k }\left \lVert \nabla^{\otimes k} u \right \rVert_{\Lp^2(\Omega,\mathbb C^{2^k})}^2.
\end{align}
This can be verified on the dense subset $C^\infty_c(\Omega)$ by partially integrating.

Now we consider the operator $V \colon \Lp^2(\Omega) \to H^{\gamma}_0(\Omega)$, given by $u \mapsto U^{-1} u\in H^{\gamma}_0(\Omega) \subset \Lp^2(\Omega)$. We want to estimate the $q$-Schatten norm of $V$. We define
\begin{align}
N(\lambda)= \# \{ \lambda' \le \lambda \colon \lambda \text{ is an eigenvalue of } - \Delta \}.
\end{align}
By Weyl's law, we conclude that there is a constant $C$, depending on $\Omega$, such that
\begin{align}
N(\lambda) \le C (1+ \lambda) ,
\end{align}
for any $\lambda \in \mathbb R^+$. Now we can write
\begin{align}
\lVert V \rVert_q^q =& \int_{\mathbb R^+ } (1+ \lambda)^{-q \frac {\gamma}2 } dN(\lambda) \\
=& \lim_{R \to \infty} \int_0 ^R (1+ \lambda)^{-q\frac {\gamma}2 } dN(\lambda) \\
=& \lim_{R \to \infty} \left(  N(R) (1+ R)^{-q\frac {\gamma }2 }  +q \frac {\gamma}2 \int_0^R (1+\lambda)^{-q\frac {\gamma }2-1 } N( \lambda) d \lambda \right) \\
\le & C \lim_{R \to \infty} \left( (1+R)^{1-q\frac {\gamma}2 } + \int_0^R (1+ \lambda)^{-q \frac {\gamma }2} d \lambda \right) \\
\le & C \lim_{R \to \infty} \left(  1+(1+R)^{1-q\frac {\gamma}2 } \right) \le C .
\end{align}
The final estimate relies on the condition $q > \frac 2{\gamma}$. Now, we just use that $U$ and $U^{-1}$ are bounded operators to get
\begin{alignat}2
\lVert \iota' \rVert_q &= \lVert VU \rVert_q &\le \lVert V \rVert_q \lVert U \rVert_\infty .
\end{alignat}
This finishes the proof.
\end{proof}

We want to apply the statement for the space $H^\gamma ( [0,1]^2)$. Neither Gramsch's result nor our proof is sufficient for that application. Hence, we need a slight extension.
\begin{corollary} \label{Sob emb Schatten2}
Let $\gamma \in \mathbb Z^+$, $\Omega \subset \mathbb R^2$ open with Lipschitz-boundary, and $\infty> q> \frac 2{\gamma}$. Then the embedding
\begin{align}
\iota \colon H^{\gamma}(\Omega) \to \Lp^2(\Omega)
\end{align}
is in the $q$-Schatten class.
\end{corollary}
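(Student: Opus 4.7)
The idea is to reduce the assertion to the already established case (Proposition \ref{Sob emb Schatten1}) by extending functions from $\Omega$ to a nicer ambient domain and cutting off, then using the ideal property of the $q$-Schatten class.

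First I would invoke a standard extension theorem for Lipschitz domains (e.g.\ Stein's universal extension theorem from \emph{Singular Integrals and Differentiability Properties of Functions}) to obtain a bounded linear operator $E \colon H^{\gamma}(\Omega) \to H^{\gamma}(\mathbb R^2)$ satisfying $(Eu)\vert_\Omega = u$ for every $u \in H^{\gamma}(\Omega)$. Since $\Omega$ is bounded, I can then choose a bounded open set $\Omega' \subset \mathbb R^2$ with $C^\infty$-boundary such that $\overline\Omega \subset \Omega'$, together with a cut-off $\chi \in C^\infty_c(\mathbb R^2)$ supported in $\Omega'$ and identically $1$ on a neighbourhood of $\overline\Omega$.

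The map $F \colon H^{\gamma}(\Omega) \to H^{\gamma}_0(\Omega')$ defined by $u \mapsto (\chi \cdot Eu)\vert_{\Omega'}$ is bounded: by the product rule the $H^\gamma$-norm of $\chi \cdot Eu$ on $\mathbb R^2$ is controlled by $\|Eu\|_{H^\gamma(\mathbb R^2)} \le C \|u\|_{H^\gamma(\Omega)}$, and since $\chi \cdot Eu$ has compact support inside $\Omega'$, its restriction to $\Omega'$ lies in $H^\gamma_0(\Omega')$. Applying Proposition \ref{Sob emb Schatten1} to $\Omega'$ provides an embedding $\iota' \colon H^\gamma_0(\Omega') \to \Lp^2(\Omega')$ that is in the $q$-Schatten class. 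Let $R \colon \Lp^2(\Omega') \to \Lp^2(\Omega)$ be the (bounded) restriction operator. Because $\chi \equiv 1$ on $\overline\Omega$ and $(Eu)\vert_\Omega = u$, the factorisation
\begin{align}
\iota \; = \; R \circ \iota' \circ F
\end{align}
holds. The ideal property of the $q$-Schatten class (which follows from $s_n(AT B) \le \|A\|_\infty s_n(T) \|B\|_\infty$ and therefore $\|ATB\|_q \le \|A\|_\infty \|T\|_q \|B\|_\infty$ for all $q > 0$) then yields $\iota \in S_q(H^\gamma(\Omega), \Lp^2(\Omega))$, with a quantitative bound $\|\iota\|_q \le \|R\|_\infty \|\iota'\|_q \|F\|_\infty$.

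The only non-routine input is the existence of the bounded extension operator $E$; this is the step that genuinely uses the Lipschitz regularity of $\partial\Omega$ (a classical result, but the main obstacle conceptually, since without it one would have to enlarge the domain less canonically or prove the Schatten bound directly on $H^\gamma(\Omega)$ via eigenvalue asymptotics for the Neumann Laplacian on a Lipschitz domain).
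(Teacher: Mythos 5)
Your proposal is correct and is essentially the paper's own argument: the paper likewise factors $\iota = R\,\iota'\,E$ through $H^{\gamma}_0(\Omega')$ for a smooth ambient domain (an open ball), builds the extension $E$ exactly as you do -- Stein's extension operator composed with a smooth cutoff -- and then applies \autoref{Sob emb Schatten1} together with the Schatten ideal property. No changes needed.
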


\begin{remark}
In Proposition 2.1 in \cite{BS77}, Birman and Solomyak have shown an estimate of the singular values depending on the differentiability of the kernel. From that, one can see that for any Hilbert--Schmidt operator $S \colon \Lp^2(\mathbb R^2) \to H^\gamma\left([0,1]^2 \right)$, the operator $S$ is in the $p$-Schatten class for any $p> \frac 2 {\gamma+1}$. This statement also follows from our corollary here. 

We decided not to use Birman and Solomyak's result directly, as it is convenient for us to have this statement in the operator setting. Furthermore, we can directly use the quasi-isometry $D_\gamma$, that we constructed in \autoref{commutator replacement}.
\end{remark}

\begin{proof}[Proof of \autoref{Sob emb Schatten2}]
Let $\Omega' \supset \overline{\Omega}$ be an open ball. 
As $\Omega$ has Lipschitz-boundary, there is a continuous extension operator,
\begin{align}
E \colon H^{\gamma}(\Omega) \to H_0^{\gamma}(\Omega') .
\end{align}
 One such operator can be constructed as a composition of a multiplication operator with a smooth cutoff function and the extension operator constructed by Stein in Theorem 5 in \cite{STEIN1970}. 
Furthermore, there obviously is the continuous restriction operator
\begin{align}
R \colon \Lp^2(\Omega') \to \Lp^2(\Omega).
\end{align}
Hence, the operator
\begin{align}
\iota = R \iota' E
\end{align}
is in the $q$-Schatten class by \autoref{Sob emb Schatten1}.
\end{proof}

\section{Proof of \autoref{dA def}} \label{Appendix B}

\begin{proof}[Proof of \autoref{dA def}]
We recall
\begin{align}
g(x) \coloneqq \int_{\mathbb R^2} \frac { Jy} { 2 \pi \lVert y \rVert^2 } f (x-y) dy.
\end{align}

The last property will be seen by bounding this integral. $C_\varepsilon$ will be a constant depending only on $\varepsilon$, that may change from line to line. To begin with we have the bound
\begin{align}
\lVert g(x) \rVert\le& \int_{\mathbb R^2} \frac 1 {\lVert y \rVert} \frac {C} {(1+ \lVert y-x \rVert )^{1+\varepsilon}} dy \\
\le & \int_{D_{2 \lVert x \rVert}(0)}  \frac 1 {\lVert y \rVert} \frac {C} {(1+ \lVert y-x \rVert )^{1+\varepsilon}} dy  \\
&+ \int_{D^\complement_{2 \lVert x \rVert}(0)} \frac 1 {\lVert y \rVert} \frac {C} {(1+\lVert y-x \rVert )^{1+\varepsilon}} dy \\
\le & \int_{D_2(0)} \frac 1 { \lVert y \rVert }  \frac{C} {(\frac 1 {\lVert x \rVert }+\lVert y- e_1  \rVert) ^{1+\varepsilon}} \lVert x \rVert^{2-1-1-\varepsilon} dy \\
&+ \int_{D^\complement_{2 \lVert x \rVert}(0)} \frac 1 {\lVert y \rVert} \frac {C} {(1+\lVert y/2 \rVert )^{1+\varepsilon}} dy \\
\le &C \min \left \{   \lVert x \rVert^{-\varepsilon} ,      \lVert x \rVert    \right \} \\
&+ C   \max\{ \lVert x \rVert,1 \} ^{-\varepsilon}+  C_\varepsilon   1_{D_1(0)}(x) \\
\le & \frac{ C} {(1+ \lVert x \rVert)^{ \varepsilon}}.
\end{align}
In the second to last step, we got the first minimum by ignoring either of the summands in the denominator of the bounded domain integral and for the second part we just did a different bound on the annulus from $\lVert x \rVert $ to $1$, if $\lVert x \rVert<1$. This directly shows that $g \in W^{0, \infty}_{(\varepsilon)}(\mathbb R^2, \mathbb R^2)$.  For $\gamma>0$,  we can first use the result for $\partial_j f$  for $j=\{1,2\}$ and then use dominated convergence to see that $\partial_j g = (\partial_j f) * \frac{J\blank}{2\pi \lVert \blank \rVert^2} $. Hence, by an induction on $\gamma' \le \gamma$, we see that $g \in W^{\gamma, \infty}_{(\varepsilon)}(\mathbb R^2, \mathbb R^2)$ .

For the first two properties, we use the Fourier transform,
\begin{align}
\mathcal F (h)(\xi) \coloneqq \frac 1 {2 \pi} \int_{\mathbb R^2} h(x) \exp(-ix \cdot \xi) dx  , \quad \xi \in \mathbb R^2
\end{align}
for any $n \in \mathbb N$ and $ h \in \Lp^1 \cap \Lp^2( \mathbb R^2, \mathbb C^n)$. 
It can be expanded to tempered distributions and has the following properties for any $\xi \in \mathbb R^2$, tempered distributions $h,h_1,h_2$: 
\begin{align}
\mathcal F (\blank h(\blank))(\xi) &= i \nabla \mathcal F (h)(\xi), \\
\mathcal F(\nabla h (\blank))(\xi) &= -i \xi \mathcal F(h)(\xi), \\
\mathcal F(1) (\xi) &= 2\pi \delta_0 (\xi) ,\\
\mathcal F(h_1* h_2) (\xi) &= 2\pi \mathcal F(h_1)(\xi) \mathcal F(h_2)(\xi).
\end{align}
Here $\delta_0$ refers to the $\delta$-distribution at $0$. Furthermore, the Fourier transform is linear and invertible. As $f$ and $g$ are bounded, they are both tempered distributions. Now we can apply the Fourier transform to our first two claimed equations and are left to show
\begin{align}
- 2\pi i  J\xi  \cdot \mathcal F \left( \frac {J\blank}{2 \pi \lVert \blank \rVert ^2 } \right) (\xi) \mathcal F (f ) (\xi ) &= F (f ) (\xi ), \\
-2\pi i \xi \cdot  \mathcal F \left( \frac {J\blank}{2 \pi \lVert \blank \rVert ^2 } \right) (\xi) \mathcal F (f ) (\xi)&= 0.
\end{align}
Basically, this equation does not depend on $f$. Now we have to compute the Fourier transform of $\frac {Jx}{2\pi \lVert x \rVert^2}$,
\begin{align}
\mathcal F \left( \frac {J\blank}{2 \pi \lVert \blank \rVert ^2 } \right) (\xi)&= \frac 1 {2\pi} \left( J i \nabla (-\Delta)^{-1} \mathcal F( 1) \right) (\xi) \\
&= iJ  \nabla (-\Delta)^{-1} \delta_0 (\xi)\\
&= iJ\nabla \frac 1 {2\pi} \ln(\lVert \xi \rVert) \\
&= iJ \frac{\xi}{ 2 \pi \lVert \xi \rVert^2} \, .
\end{align}
Hence, we have
\begin{alignat}{2}
- 2\pi i  J\xi  \cdot \mathcal F \left( \frac {J \blank}{2 \pi \lVert \blank \rVert ^2 } \right) (\xi)&= -2\pi i J \xi \cdot  iJ \frac{\xi}{ 2 \pi \lVert \xi \rVert^2}\, &=1, \\
-2 \pi i \xi \cdot  \mathcal F \left( \frac {J \blank }{2 \pi \lVert \blank \rVert ^2 } \right) (\xi)&= -2\pi i \xi \cdot  iJ \frac{\xi}{ 2 \pi \lVert \xi \rVert^2}\,&=0.
\end{alignat}
This finishes the proof.
\end{proof}

\bibliography{mybib2020}{}
\bibliographystyle{plainurl}

\end{document}